\documentclass[11pt]{article}

\usepackage{amsmath,amssymb}
\usepackage{amsthm}
\usepackage{color}
\usepackage{enumitem}
\usepackage{hyperref}
\usepackage{geometry}
\usepackage{pifont}
\usepackage{url}
\usepackage{graphicx}
\usepackage{placeins}
\usepackage{booktabs}
\usepackage{extarrows}

\graphicspath{{figures/}}

\numberwithin{equation}{section}
\theoremstyle{definition}
\newtheorem{theorem}{Theorem}[section]
\newtheorem{proposition}[theorem]{Proposition}
\newtheorem{corollary}[theorem]{Corollary}

\newtheorem{definition}[theorem]{Definition}
\newtheorem{example}[theorem]{Example}
\newtheorem{remark}[theorem]{Remark}

\usepackage[]{todonotes}

\reversemarginpar

\title{A Price-Based Framework for Stochastic Portfolio Theory}

\author{
	\textsc{Jongbong An} 
	\thanks{School of Mathematics, Statistics and Data Science, Sungshin Women’s University, South Korea (E-mail: {\it ajb8406@sungshin.ac.kr})}
	\and
	\textsc{Donghan Kim} 
	\thanks{Department of Mathematical Sciences, KAIST, South Korea (E-mail: {\it kimdonghan@kaist.ac.kr})}
}
\date{\today}

\begin{document}

\maketitle

\begin{abstract}
    \noindent
    We develop a price-based framework for stochastic portfolio theory in which trading strategies are generated from nominal price weights and evaluated relative to a price-weighted benchmark. Stock splits and reverse splits induce jumps in the generating weights without changing the value of existing investments. In a semimartingale market with predictable adjustment events, we incorporate the corresponding share adjustments into the self-financing condition and construct additively and multiplicatively generated strategies with explicit jump-corrected wealth decompositions. For additive generation, an entropy-based example exhibits relative wealth tending to $-\infty$ almost surely under repeated splits, demonstrating why the usual argument for long-horizon relative arbitrage does not extend directly. We also show that advance adjustment information alone provides no model-free guarantee of improved performance for functionally generated portfolios. For multiplicative generation, wealth-preserving restarts retain the classical portfolio allocation evaluated at the current price weights. We also establish a correspondence with the capitalization-based framework that preserves absolute wealth and yields a strategy-independent benchmark conversion. We illustrate the framework using daily NYSE data, comparing price- and capitalization-weighted benchmarks and the corresponding diversity-weighted portfolios across price- and capitalization-selected universes.
\end{abstract}

\bigskip

\section{Introduction} \label{sec: introduction}

Stochastic portfolio theory~(SPT), introduced by Fernholz \cite{fernholz2002stochastic}, studies equity market structure and portfolio performance; see also \cite{FK_survey}. In its classical formulation, portfolios are constructed from the vector of capitalization weights and evaluated relative to the capitalization-weighted market portfolio. A central tool is \emph{functional portfolio generation}, which associates a portfolio with a generating function of these weights. Fernholz's master formula \cite[(11.2)]{FK_survey} expresses the resulting relative wealth through changes in the generating function and an accumulated covariation term, without a stochastic integral. This decomposition connects observable features of the market, such as diversity and volatility, to sufficient conditions for relative arbitrage.

The theory has developed in several directions. Open markets and changes in the investment universe are studied in \cite{bayraktar2024quantifying, Open:market:Jacobi, Karatzas:Kim2}, while pathwise and model-free formulations are developed in \cite{rough:path:SPT, Karatzas:Kim, Schied:model-free}. Extensions of functional generation incorporate auxiliary processes and additional market information \cite{Market-to-book, rufxie2019generalised, strong2014generalizations}, and geometric interpretations relate generated portfolios to optimal transport \cite{palwong2016geometry}. Other developments concern universal portfolios \cite{Universal:SPT}, functional portfolio optimization \cite{Portfolio:Optimization:SPT}, robust asymptotic growth \cite{kardarasrobertson2021ergodic}, and signature-based portfolio methods \cite{cuchiero2025signature}.

Related work examines the volatility conditions and time horizons required for relative arbitrage \cite{vol:arb, laishkolnikovsoner2025relative, larssonruf2021relative}, the effects of transaction costs and price impact \cite{price:impact:SPT, Ruf:Xie}, and applications to commodity futures and option pricing \cite{fernholz2022permutation,Option:pricing:SPT}. Empirical studies investigate diversification, market structure, and portfolio performance \cite{banner2019diversification,campbell2025macroscopic}. Of particular relevance here is the \emph{additive generation} of trading strategies developed by Karatzas and Ruf \cite{karatzas2017trading}, in contrast to Fernholz's classical multiplicative construction. Its additive wealth decomposition yields particularly simple sufficient conditions for relative arbitrage: for a nonnegative concave generating function, sufficiently large accumulation of the associated nondecreasing covariation term guarantees outperformance with respect to the capitalization-weighted benchmark. 

Throughout much of this literature, capitalization weights remain the principal generating variables, and the capitalization-weighted market portfolio is the standard benchmark. Price weighting nevertheless underlies prominent equity indices, including the Dow Jones Industrial Average in the United States \cite{spdji_djia_methodology} and the Nikkei~225 in Japan \cite{nikkei225indexguidebook}. Although functional generation with alternative benchmarks has been studied \cite{strong2014generalizations}, using nominal price weights introduces the additional issue of changes in share units. This paper develops a price-based SPT framework in which the price weight vector generates trading strategies, and a price-weighted portfolio serves as the reference benchmark, explicitly incorporating stock splits and reverse splits into the construction.

The distinction from the capitalization-based setting is already visible at a single stock split. Let $P_i$ and $N_i$ denote the price and number of outstanding shares of stock~$i$, respectively, with capitalization $S_i=P_iN_i$. A split changes $P_i$ and $N_i$ in reciprocal proportions, leaving both $S_i$ and the value of an investor's existing position unchanged. It nevertheless changes, in general, the price weights $\nu_i=P_i/\sum_jP_j$, even though the capitalization weights $\mu_i=S_i/\sum_jS_j$ are unaffected. Consequently, a generating function evaluated at $\nu$ may jump without any corresponding change in investment wealth. A divisor adjustment can preserve the continuity of the price-weighted benchmark, but does not remove the jumps of the price weight vector. Functional generation must therefore accommodate a discontinuous generating state while preserving the self-financing property and continuity of wealth.

We work in a frictionless market with continuous, positive capitalization processes and locally constant outstanding share processes. Prices are thus right-continuous semimartingales with left limits, whose jumps occur at locally finitely many predictable adjustment events. The corresponding changes in share holdings are included directly in the self-financing condition. We introduce a divisor $\kappa$ so that the normalized total price $\widetilde\Sigma=(\sum_iP_i)/\kappa$ is continuous and represents the wealth of a self-financing price-weighted reference portfolio. This construction permits portfolio performance relative to $\widetilde\Sigma$ to be analyzed in terms of $\nu$, while distinguishing mechanical changes in share units from gains and losses due to market movements.

Within this framework, we first construct additively generated trading strategies from the price weights $\nu$, by adapting the methodology of Karatzas and Ruf \cite{karatzas2017trading}. For a twice continuously differentiable generating function $G$, the relative wealth of the resulting strategy $\varphi$ satisfies
\[
    V^\varphi(t;\nu) = G\big(\nu(t)\big) + \Gamma^{G,c}(t) - \sum_{0<s\le t}\Delta G\big(\nu(s)\big),
\]
where $\Gamma^{G,c}$ is the continuous covariation term, which is nondecreasing when $G$ is concave. The final sum removes changes in the generating function caused by adjustment events. This correction has an important consequence: boundedness of $G$ on the unit simplex no longer supplies a uniform lower bound for the generating-function value with its cumulative jumps removed. The usual argument based on the accumulation of $\Gamma^{G,c}$ therefore need not yield long-horizon relative arbitrage. In Example~\ref{ex: entropy market}, repeated splits reset a two-asset market to equal price weights, yet the relative wealth of the entropy-generated strategy tends to $-\infty$ almost surely for a suitable choice of model parameters. The example identifies the accumulated jump correction as an obstruction to applying the classical sufficient condition argument leading to relative arbitrage.

Stock splits and reverse splits are typically announced before their effective dates, raising the question of whether advance knowledge of the adjustment can overcome this difficulty. We compare strategies with and without access to the announcement information, starting from equal wealth at the announcement time. Their wealth difference up to the adjustment is an integral of the difference in their holdings against the continuous part of the price weights. The adjustment jump itself contributes no incremental wealth. Thus, any benefit from this information must arise through the continuous pre-adjustment trading exposure, rather than from the mechanical change in prices and share holdings. Without additional assumptions linking the announcement to the continuous market dynamics, the identity does not provide a model-free guarantee of outperformance.

Multiplicative generation requires a different treatment of adjustment events. A direct correction of self-financing and continuity defects need not preserve the portfolio allocation prescribed by the generating function. We instead restart the classical multiplicative construction between successive adjustment events and connect the resulting strategies through wealth-preserving normalizations. For a smooth positive generating function $G$, the resulting strategy $\psi$ satisfies a modified master formula
\[
    \log\frac{V^\psi(t;\nu)}{V^\psi(0;\nu)} = \log\frac{G\big(\nu(t)\big)}{G\big(\nu(0)\big)} + \int_0^t\frac{d\Gamma^{G,c}(s)}{G\big(\nu(s)\big)} - \sum_{0<s\le t}\Delta\log G\big(\nu(s)\big).
\]
The last term cancels the adjustment-induced jumps of $\log G(\nu)$, rather than representing a cash gain or loss at an event. At the same time, the portfolio allocation retains the classical functional form evaluated at the current price weights. In particular, the diversity-weighted generating function produces the familiar allocation $\pi_i=\nu_i^p/\sum_j\nu_j^p$ for $0<p<1$, including immediately after each adjustment.

Since the price- and capitalization-based frameworks evaluate performance relative to different benchmarks, a natural question is how to compare their trading strategies on a common basis. We establish a correspondence between trading strategies in the price- and capitalization-based frameworks that preserves absolute wealth. The corresponding relative wealth processes are related by a strategy-independent benchmark ratio.

We illustrate the frameworks using daily NYSE data from 2002 to 2021. We first compare the performance and diversification of price- and capitalization-weighted benchmarks on universes selected by stock price or market capitalization. Over the sample, price-weighted benchmarks have higher terminal performance, and price weights also generally exhibit higher cross-sectional entropy. These comparisons complement existing empirical work on price- and capitalization-weighted diversification \cite{damjanovic2025price}. We then examine the relative performance of $p$-diversity-weighted portfolios generated from the corresponding weights.

The diversity-weighted portfolios illustrate a further distinction between benchmark performance and the gains from functional generation. The reported price-based portfolios achieve smaller terminal gains over their own benchmarks than their capitalization-based counterparts, despite comparable accumulated excess growth terms. The wealth decomposition associates this difference with a larger decline in the adjusted state component under price weighting. When evaluated against the same capitalization-weighted benchmark, however, the reported price-based portfolios finish above their capitalization-based counterparts. We also quantify split-induced changes in the generating function, whose cumulative signs differ between the two universes.

The paper is organized as follows. Section~\ref{sec: setup} introduces the price market, trading strategies, and divisor. Section~\ref{sec: FGPTS} develops additive and multiplicative generation, examines the role of announced adjustment information, and establishes the connection with the capitalization-based framework. Section~\ref{sec: numerics} presents the empirical analysis. Proofs of the main construction results are collected in Appendices~\ref{app: proofs setup} and~\ref{app: proofs fgpts}.

\bigskip

\section{The Setup} \label{sec: setup}

This section introduces an equity market model and related definitions used throughout this paper.

\subsection{Price Market Model} \label{sec: price market model}

Consider a filtered probability space $(\Omega, \mathcal{F}, \mathbb{F}, \mathbb{P})$ where the filtration $\mathbb{F} = (\mathcal{F}(t))_{t \ge 0}$ satisfies the usual conditions (complete and right-continuous). All processes are assumed to be adapted to $\mathbb{F}$, and all relationships among random variables hold almost surely unless otherwise stated.

Let $S = (S_1, \dots, S_d)'$ be a $d$-dimensional continuous, positive semimartingale for $d \in \mathbb N$. Each positive component $S_i(t)$ represents the $i$-th asset's market capitalization at time $t\ge0$. The component sum
\begin{equation}    \label{def:Lambda}
    \Lambda(\cdot) := \sum_{i=1}^d S_i(\cdot)
\end{equation}
then represents the total market capitalization. The corresponding price process $P = (P_1, \dots, P_d)'$ is given by
\begin{equation}    \label{def:P}
    P_i(t) = \frac{S_i(t)}{N_i(t)}, \qquad i = 1,\dots,d, \quad t \ge 0,
\end{equation}
where $N_i(t)$ is the total number of outstanding shares of asset $i$ at time $t$. We assume that each $N_i$ is a locally constant, positive, and RCLL~(right-continuous with left limits) process, which may jump at discrete times due to corporate actions such as stock splits or reverse splits. Since each $S_i$ is strictly positive, every stock remains active in the market:
\[
    P_i(t) > 0, \qquad i = 1, \dots, d, \quad t \ge 0.
\]
We define an increasing sequence of stopping times to handle these jumps in a systematic way.

\begin{definition}[Sequence of adjustment and adjustment events] \label{def: adjustment sequence}
    A sequence of $\mathbb{F}$-stopping times $(\tau_k)_{k \ge 0}$ is called an \emph{adjustment sequence} for the outstanding share process $N = (N_1, \dots, N_d)'$, if
    \begin{enumerate}[label=(\roman*)]
        \item $\tau_0 = 0$, $\tau_k < \tau_{k+1}$ for all $k \ge 0$, and $\tau_k \to \infty$ as $k \to \infty$;
        \item every $\tau_k$ is predictable;
        \item $N$ is constant on each stochastic interval $[\tau_k, \tau_{k+1})$ for $k \ge 0$.
    \end{enumerate}
    For an adjustment sequence $(\tau_k)_{k \ge 0}$ for $N$, we say that $(\tau_k, m_k)_{k \ge 0}$ is a \emph{sequence of adjustment events}, if
    \begin{enumerate}[label=(\roman*)] \setcounter{enumi}{3}
        \item $m_0 = (1, \dots , 1)$ is a $d$-dimensional vector of ones;
        \item each $m_k$ is an $\mathcal{F}(\tau_k -)$-measurable, $\mathbb{R}^d$-valued, positive random vector for $k \ge 1$;
        \item the identity $N_i(\tau_k) = m_{k,i}N_i(\tau_k-)$ holds for every $k\ge1$ and $i=1,\dots,d$.
    \end{enumerate}
\end{definition}

A triple $(P, N, (\tau_k, m_k)_{k \ge 0})$ is called a \emph{price market}, where $P$ is the price process, $N$ is the outstanding share process, and $(\tau_k, m_k)_{k \ge 0}$ is a sequence of adjustment events for $N$. At adjustment event $(\tau_k, m_k)$, the number of outstanding shares $N_i$ jumps according to the identity in Definition~\ref{def: adjustment sequence} (vi). Equivalently, from the continuity of $S_i(t) = P_i(t)N_i(t)$ at $t=\tau_k$, $P_i$ is continuous on $[\tau_k, \tau_{k+1})$ and
\[
    P_i(\tau_k) = m_{k,i}^{-1} P_i(\tau_k -), \qquad i = 1,\dots,d, \quad k \ge 1.
\]
For example, $m_1 = (2, 1, 1, \dots, 1)$ means that only the first asset underwent a 2-for-1 split at the first adjustment date $\tau_1$ while others were fixed; the total number of outstanding shares of asset 1 doubled, whereas the price of asset 1 halved at $\tau_1$. Note that the market capitalization $P_1N_1$ of asset 1 is not affected by this split.

\begin{remark} \label{rem: adjustment sequence in real world}
    The predictability of $\tau_k$ and $\mathcal{F}(\tau_k-)$-measurability of $m_k$ requirements in Definition~\ref{def: adjustment sequence} are not restrictive in practice. Corporate actions such as stock splits or reverse splits are typically announced by the issuer several weeks or months before the \emph{actual adjustment date $\tau_k$}; the exact value of the multiplicative factor $m_k$, together with the date $\tau_k$ of adjustment, is announced and scheduled on the exchange calendar. From this perspective of market information, the events are not merely predictable but announced in advance.
\end{remark}

The market weight vector $\mu = (\mu_1,\dots,\mu_d)'$ in the capitalization-based SPT is defined as
\begin{equation}    \label{def:mu}
    \mu_i(t) := \frac{S_i(t)}{\Lambda(t)} = \frac{S_i(t)}{\sum_{j=1}^d S_j(t)}, \qquad i = 1,\dots,d, \quad t \ge 0.
\end{equation}
To develop the price-based SPT, we introduce the \emph{price weight vector} $\nu = (\nu_1,\dots,\nu_d)'$ with components
\begin{equation}    \label{def:nu}
    \nu_i(t) := \frac{P_i(t)}{\sum_{j=1}^d P_j(t)} = \frac{\mu_i(t)/N_i(t)}{\sum_{j=1}^d \mu_j(t)/N_j(t)} = \frac{S_i(t)/N_i(t)}{\sum_{j=1}^d S_j(t)/N_j(t)}, \qquad i = 1,\dots,d, \quad t \ge 0.
\end{equation}
Note that both $\mu$ and $\nu$ satisfy $\sum_{j=1}^d \mu_j(\cdot) \equiv 1 \equiv \sum_{j=1}^d \nu_j(\cdot)$. 

Despite its jumps, $\nu$ is predictable under the adjustment assumptions: its post-event value is determined by the pre-event prices and the $\mathcal F(\tau_k-)$-measurable multiplier $m_k$. This permits the generated portfolio holdings to depend on the current state $\nu(t)$, including at adjustment times, while stochastic gains across a jump are evaluated using the pre-event holdings. The price weight vector $\nu$ will serve as the fundamental state variable for the construction of price-based portfolios. We shall define the analogous version of \emph{trading strategy} and \emph{portfolio} in this price-based framework.

We begin with the heuristic meaning of the self-financing condition in this price-based framework. A trading strategy is described in terms of the number of shares held by an investor. At each adjustment event, the share count is rescaled by the corresponding split or reverse split factor. These jumps in share count must be accommodated within the stochastic calculus formulation; we therefore incorporate the jump terms directly into the self-financing condition so that the wealth process consistently reflects these adjustments. The market model in this paper is assumed to be frictionless, in the sense that rebalancing can be done continuously and does not incur any transaction costs.

\begin{definition} [Trading strategy] \label{def: trading strategy}
    Let $X$ be an RCLL semimartingale with locally finitely many jumps. A $d$-dimensional predictable process $\vartheta = (\vartheta_1,\dots,\vartheta_d)$ is called a \emph{trading strategy} with respect to $X$, if it is RCLL with locally finitely many jumps, integrable with respect to $X$, and if the corresponding value process
    \begin{equation}    \label{def:value-process}
        V^\vartheta(t; X) := \sum_{i=1}^d \vartheta_i(t) X_i(t), \qquad t \ge 0,
    \end{equation}
    is continuous in $t$ and satisfies the `self-financing' condition
    \begin{equation} \label{self-financing}
        V^\vartheta(t; X) = V^\vartheta(0; X) + \int_0^t \vartheta(s-) \cdot dX(s) + \sum_{0 < s \le t} \Delta \vartheta(s) \cdot X(s),  \qquad t \ge 0.
    \end{equation}
    Here and in what follows, $\Delta Y(s):=Y(s)-Y(s-)$ is the jump of an RCLL process $Y$ at time $s$.
\end{definition}

When $X$ is continuous (e.g., $X=S$ or $X=\mu$), continuity of $V^\vartheta$ implies $\Delta\vartheta(s)\cdot X(s)=0$, so \eqref{self-financing} reduces to the usual self-financing identity in \cite[Definition~2.1]{karatzas2017trading}. Thus, the definitions agree within the present class of predictable RCLL processes with locally finitely many jumps. In this paper, we mainly consider the case $X = P$ or $X = \nu$, where $X$ may have jumps at adjustment events. Hence, we need to involve the last jump terms in \eqref{self-financing} to accommodate possible jumps in $X$.

The self-financing condition indicates that neither withdrawals nor injections of capital are allowed; gains should be reinvested, and losses should be absorbed at all times. The predictability of $\tau_k$ and the $\mathcal F(\tau_k-)$-measurability of $m_k$ in Definition~\ref{def: adjustment sequence} are needed for the compensating adjustment to share holdings $\Delta\vartheta(\tau_k)$ to be chosen predictably and to preserve wealth through the corporate action event.

If we consider a \emph{price-based trading strategy} $\vartheta$ with respect to $P$, the $i$-th component $\vartheta_i(t)$ represents the number of shares of asset $i$ held at time $t$. Alongside $\vartheta$, it is convenient to introduce the corresponding \emph{portfolio} expressed as proportions of wealth.

\begin{definition} [Portfolio] \label{def: corresponding portfolio}
    Let $\vartheta$ be a trading strategy with respect to $X$ satisfying $V^\vartheta(\cdot; X) > 0$. The \emph{portfolio} $\pi = (\pi_1, \ldots, \pi_d)'$ corresponding to $\vartheta$ is defined by
    \begin{equation}    \label{def:pi}
        \pi_i(t) := \frac{\vartheta_i(t) X_i(t)}{V^\vartheta(t; X)}, \qquad i = 1, \ldots, d, \quad t \ge 0.
    \end{equation}
\end{definition}

Note that by definition the identity $\sum_{i=1}^d \pi_i(\cdot) \equiv 1$ holds for every portfolio.

In contrast to capitalization-weighted indices, a naive price-weighted index defined as the simple sum of constituent prices fails to maintain continuity in $t$, and therefore cannot serve as a consistent reference portfolio. The following example shows that a static trading strategy holding a single share per constituent violates the continuity condition in the presence of corporate actions. 

\begin{example}[Naive single-share strategy fails to be a trading strategy] \label{ex: naive failure}
    Consider a price market with $d = 2$ assets, a piecewise constant price process $P$ between adjustment events, and a single adjustment event $(\tau_1, m_1) = (1, (2, 1))$ corresponding to a $2$-for-$1$ split of asset~$1$:
    \[
        \big(P_1(t), P_2(t)\big) = (100, 50) \quad \text{for } t \in [0, \tau_1), \qquad \big(P_1(t), P_2(t)\big) = (50, 50) \quad \text{for } t \in [\tau_1, \infty).
    \]
    The static strategy $\vartheta_i \equiv 1$ (one share of each, no rebalancing) yields
    \[
        V^\vartheta(\tau_1 -; P) = 100 + 50 = 150, \qquad V^\vartheta(\tau_1; P) = 50 + 50 = 100,
    \]
    hence $V^\vartheta(\cdot;P)$ has a downward jump of $50$ at $\tau_1$. This violates the continuity requirement on $V^\vartheta$ in Definition~\ref{def: trading strategy}, even though the split is capitalization-preserving; each pre-split share is exchanged for two post-split shares of equal aggregate value, so the stockholder's market value is unchanged across $\tau_1$. The static strategy is therefore \emph{not} a self-financing trading strategy in our sense.

    To restore self-financeability, the share count of the splitting asset must be rescaled by $m_{1,1} = 2$ at $\tau_1$; that is, replace $\vartheta_1$ by $\widetilde\vartheta_1(t) := \mathbf{1}_{\{t < \tau_1\}} + 2 \times \mathbf{1}_{\{t \ge \tau_1\}}$ and keep $\widetilde\vartheta_2 \equiv 1$. Then the value process is continuous,
    \[
        V^{\widetilde\vartheta}(\tau_1) = 2 \times 50 + 1 \times 50 = 150 = V^{\widetilde\vartheta}(\tau_1-),
    \]
    and the explicit jump correction
    \[
        \Delta\widetilde\vartheta(\tau_1) \cdot P(\tau_1) = (1,0) \cdot (50, 50) = 50
    \]
    in~\eqref{self-financing} exactly compensates the integral contribution
    \[
        \widetilde\vartheta(\tau_1-) \cdot \Delta P(\tau_1) = (1,1) \cdot (-50, 0) = -50.
    \]
    This mechanical share adjustment preserves the value of the existing investment. Maintaining the price-weighted reference allocation requires an additional self-financing rebalance, as explained in Remark~\ref{rem: divisor interpretation}.
\end{example}

To resolve this structural inconsistency, the following subsection introduces the \emph{divisor}---a standard practice in real stock markets for handling split events---and provides a rigorous definition of the reference portfolio.

\medskip

\subsection{Divisor} \label{sec: divisor}
In what follows, let $\Sigma$ be the component sum of the price process, i.e., $\Sigma(\cdot) := \sum_{i=1}^d P_i(\cdot)$.

\begin{definition} [Divisor] \label{def: divisor}
    A predictable, RCLL, positive real-valued process $\kappa = (\kappa(t))_{t \ge 0}$ is called the \emph{divisor} for a price market $(P, N, (\tau_k, m_k)_{k \ge 0})$, if
    \begin{enumerate}[label=\textup{(\roman*)}, ref=\textup{(\roman*)}]
        \item $\kappa(0)=1$ and $\kappa$ is constant on each interval $[\tau_k, \tau_{k+1})$ for $k \ge 0$;
        \item at each adjustment event $(\tau_k, m_k)$ for $k \ge 1$, the divisor $\kappa$ jumps according to
        \begin{equation}    \label{item: divisor jump}
            \frac{\Sigma(\tau_k)}{\kappa(\tau_k)} = \frac{\Sigma(\tau_k-)}{\kappa(\tau_k-)}
        \end{equation}
        so that the process $\Sigma(\cdot)/\kappa(\cdot)$ is continuous.
    \end{enumerate}
\end{definition}

Here, the representation
\[
    \kappa(\tau_k) = \kappa(\tau_k -) \frac{\Sigma(\tau_k)}{\Sigma(\tau_k -)} = \kappa(\tau_k -)\frac{\sum_{i=1}^d m_{k,i}^{-1}P_i(\tau_k -)}{\sum_{i=1}^d P_i(\tau_k -)}, \qquad k \ge 1,
\]
shows that the divisor $\kappa$ is uniquely determined by a given price market, and is predictable and piecewise constant.

We call $\widetilde{\Sigma} := \Sigma/\kappa$ the \textit{normalized total price} process. The continuous process $\widetilde{\Sigma}$ can also be interpreted as the value process associated with the trading strategy $\vartheta(\cdot) = \mathbf{1}/\kappa(\cdot)$, where we denote $\mathbf{1}$ the $d$-dimensional vector of ones, as the following remark illustrates.

\begin{remark}[Financial interpretation of the divisor] \label{rem: divisor interpretation}
    The divisor $\kappa$ is not merely a mathematical artifact: it is a faithful continuous-time idealization of a long-standing market practice. The Dow Jones Industrial Average (DJIA) has been published since 1896 as a price-weighted index of leading US stocks; from 1928 onwards it has been computed in the form $\widetilde{\Sigma} = \Sigma(\cdot)/\kappa(\cdot)$ using the divisor mechanism documented in \cite{spdji_djia_methodology}. Each split of a constituent reduces $\Sigma$ at the event date and is absorbed by a multiplicative reduction of the divisor, exactly as in \eqref{item: divisor jump} of Definition~\ref{def: divisor}; the same mechanism handles reverse splits.

    From the perspective of an investor, $\kappa$ is the per-share-unit accounting adjustment that an index-tracking exchange-traded fund~(ETF) must apply each time a constituent restructures. A static portfolio holding one share of every constituent has value $\Sigma(t)$, which differs from the index $\widetilde\Sigma = \Sigma/\kappa$ by a multiplicative factor that jumps at every $\tau_k$. To track $\widetilde\Sigma$ continuously, the share count must instead be rescaled by $1/\kappa$: the strategy $\vartheta \equiv \mathbf{1}/\kappa$ is a self-financing replicating strategy for $\widetilde\Sigma$. At an adjustment event $\tau_k$ with $k \ge 1$, the mechanical share adjustment first changes these holdings to $m_{k,i}/\kappa(\tau_k-)$ in asset $i$. Rebalancing them to $1/\kappa(\tau_k)$ shares of each asset then preserves wealth by \eqref{item: divisor jump}, and therefore requires no additional capital.

    Forward splits reduce the divisor and reverse splits increase it, with the magnitude of each change determined by the adjustment multiplier and the affected stock's pre-event price weight. Among stocks selected at least once for either NYSE universe in Section~\ref{sec: numerics}, we identify $969$ forward splits and $118$ reverse splits over 2002--2021, a ratio of approximately $8.2:1$. This predominance is even more pronounced when the count is restricted to stocks belonging to the price-ranked universe on the event date: $753$ forward splits versus $12$ reverse splits. Figure~\ref{fig: divisor} shows the corresponding empirical split divisor $\kappa^P$, which declines from $1$ to approximately $0.41$ over the sample.
\end{remark}

As in \cite[Proposition~2.2]{karatzas2017trading}, Proposition~\ref{prop: equivalence} shows that $V^{\kappa\vartheta}(\cdot;\nu)$ is the wealth of the $P$-strategy $\vartheta$, expressed relative to the benchmark $\widetilde\Sigma$, the normalized price portfolio.

\begin{proposition} \label{prop: equivalence}
    An $\mathbb{R}^d$-valued predictable process $\vartheta = (\vartheta_1,\dots,\vartheta_d)$ is a trading strategy with respect to $P$ if and only if $\kappa \vartheta$ is a trading strategy with respect to $\nu$. The corresponding value processes satisfy
    \[
        V^{\kappa \vartheta}(\cdot; \nu) = \frac{V^\vartheta(\cdot; P)}{\widetilde{\Sigma}(\cdot)}.
    \]
\end{proposition}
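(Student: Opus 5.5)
The plan is to follow the numéraire-change argument of \cite[Proposition~2.2]{karatzas2017trading}, adjusted for the jumps at the adjustment times. Write $\eta := \kappa\vartheta$, which is predictable, RCLL and has locally finitely many jumps exactly when $\vartheta$ does, because $\kappa$ is predictable, positive, RCLL and piecewise constant. Since $\nu_i = P_i/\Sigma = P_i/(\kappa\widetilde\Sigma)$, we get pointwise
\[
    V^{\eta}(t;\nu) = \sum_i \kappa(t)\vartheta_i(t)\frac{P_i(t)}{\kappa(t)\widetilde\Sigma(t)} = \frac{V^\vartheta(t;P)}{\widetilde\Sigma(t)}.
\]
This is the stated identity. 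As $\widetilde\Sigma$ is continuous and positive, it also shows that $V^\eta(\cdot;\nu)$ is continuous if and only if $V^\vartheta(\cdot;P)$ is. What remains is to show that the self-financing condition \eqref{self-financing} for $(\vartheta,P)$ is equivalent to the one for $(\eta,\nu)$, and that integrability transfers.

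I will treat the continuous and jump parts separately. On each stochastic interval $(\tau_{k-1},\tau_k)$, $\kappa$ is constant and $P$ is continuous. There $\nu = P/(\kappa\widetilde\Sigma)$ and $\widetilde\Sigma=\Sigma/\kappa = \mathbf 1\cdot P/\kappa$, so locally the setting is the classical one with $X=P$, numéraire $\mathbf 1\cdot P$ and constant rescaling. I will apply integration by parts to $V^\vartheta(\cdot;P)\cdot(1/\widetilde\Sigma)$. Using $\sum_i \nu_i \equiv 1$, the standard computation gives
\[
    d\big(V^\vartheta/\widetilde\Sigma\big) - \eta\cdot d\nu = \frac{1}{\widetilde\Sigma}\big(dV^\vartheta - \vartheta\cdot dP\big) + \big(V^\vartheta - \vartheta\cdot P\big)\,d(1/\widetilde\Sigma) + d\big[V^\vartheta-\vartheta\cdot P,\,1/\widetilde\Sigma\big]
\]
on these intervals. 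The second term vanishes identically, and the third vanishes once $dV^\vartheta=\vartheta\cdot dP$. This gives the equivalence of the continuous self-financing dynamics. Integrability of $\eta$ against $\nu$ follows from that of $\vartheta$ against $P$ through this identity $\eta\cdot d\nu = \widetilde\Sigma^{-1}(\vartheta\cdot dP - (V^\vartheta/\widetilde\Sigma)\,d\widetilde\Sigma)$, plus local boundedness of $1/\widetilde\Sigma$ and $\kappa$. The same identity read backwards gives the converse.

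At a jump time $s=\tau_k$, the self-financing condition \eqref{self-financing} contributes
\[
    \vartheta(s-)\cdot\Delta P(s) + \Delta\vartheta(s)\cdot P(s) = \vartheta(s)\cdot P(s) - \vartheta(s-)\cdot P(s-) = \Delta V^\vartheta(s;P),
\]
and likewise $\Delta V^\eta(s;\nu)$ for $\nu$. So at jump times \eqref{self-financing} just says that the jump of $V$ equals the jump of the right-hand side, and this holds trivially. The condition therefore reduces to continuity of $V$ together with the continuous-part dynamics. Since continuity transfers through the identity above, the two conditions are equivalent. To assemble the argument globally, I will localize with $\tau_k$ and sum over the finitely many intervals in $[0,t]$.

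The main obstacle is making the integration-by-parts step rigorous across jumps. The cleanest route is to decompose $\int \eta(s-)\,d\nu(s)$ into its continuous part plus $\sum \eta(s-)\cdot\Delta\nu(s)$, and to check that all jump terms telescope into $\Delta V$ as described. I will also need to confirm that the quadratic covariation terms have no jump contributions that break the cancellation. They do not, because the identity is verified separately on each interval and at each jump.
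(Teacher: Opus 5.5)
Your proof is correct, but it is organized differently from the paper's. The paper does one global computation. It applies It\^o's product rule to $V^\vartheta(\cdot;P)\,\kappa\,\Sigma^{-1}$ on all of $[0,\infty)$ and keeps every jump contribution explicit, via $d[\kappa,\Sigma^{-1}]=\Delta\Sigma^{-1}\,d\kappa$, $d[P,\kappa]=\Delta P\,d\kappa$ and $d[P,[\kappa,\Sigma^{-1}]]=\Delta P\,\Delta\Sigma^{-1}\,d\kappa$ (valid because $\kappa$ is piecewise constant). It then uses continuity of $V^\vartheta(\cdot;P)$ in the form $\vartheta(t-)\cdot\Delta P(t)=-\Delta\vartheta(t)\cdot P(t)$ to assemble the three pieces into \eqref{self-financing} for $(\kappa\vartheta,\nu)$, and leaves the converse as ``similar''.

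You instead observe that the jump of the right-hand side of \eqref{self-financing} is identically $\vartheta(s-)\cdot\Delta X(s)+\Delta\vartheta(s)\cdot X(s)=\Delta(\vartheta\cdot X)(s)$. So once continuity transfers through the pointwise identity $V^{\kappa\vartheta}(\cdot;\nu)=V^\vartheta(\cdot;P)/\widetilde\Sigma$, the self-financing condition only constrains the dynamics on the inter-event intervals. There $\kappa$ is constant and $P$ is continuous, so the Karatzas--Ruf num\'eraire computation applies verbatim. You then glue across the $\tau_k$ exactly as the paper does for $\psi$ in the proof of Proposition~\ref{prop: multiplicatively generated trading strategy}. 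Your route shows clearly why the share adjustments do not interact with the change of num\'eraire, and it handles both directions with one identity. The paper's route avoids localizing to stochastic intervals and the gluing step, at the cost of heavier bracket bookkeeping.

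A few details need repair.
\begin{itemize}
\item In your displayed identity the bracket must be $\bigl[V^\vartheta-\int\vartheta(s-)\cdot dP(s),\,1/\widetilde\Sigma\bigr]$, since literally $V^\vartheta-\vartheta\cdot P\equiv0$.
\item Your integrability identity is incorrect. The expansion is $d(V^\vartheta/\widetilde\Sigma)=\widetilde\Sigma^{-1}dV^\vartheta-V^\vartheta\widetilde\Sigma^{-2}d\widetilde\Sigma+V^\vartheta\widetilde\Sigma^{-3}d[\widetilde\Sigma]-\widetilde\Sigma^{-2}d[V^\vartheta,\widetilde\Sigma]$, and you dropped the last two terms. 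The argument is also circular, since it presupposes the integral exists. It is unnecessary anyway: $\eta$ is predictable and RCLL, and the integrand $\eta(s-)$ in \eqref{self-financing} is left-continuous and adapted. Both are locally bounded, hence integrable against any semimartingale.
\item For the converse, ``read backwards'' needs one more line. If $Y:=V^\vartheta-\int\vartheta(s-)\cdot dP(s)$ is continuous on an interval and $\widetilde\Sigma^{-1}dY+d[Y,\widetilde\Sigma^{-1}]=0$ there, then $Y$ has finite variation, so $[Y,\widetilde\Sigma^{-1}]=0$ and $Y$ is constant.
\item Jumps of $\vartheta$ at non-adjustment times inside an interval are handled by the same algebraic identity, $\Delta\vartheta\cdot P=\Delta V^\vartheta=0$. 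Your jump paragraph only mentions $s=\tau_k$.
\end{itemize}
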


The proof of Proposition~\ref{prop: equivalence}, which is a simple application of It\^o's product rule, is given in Appendix~\ref{app: proofs setup}. Proposition~\ref{prop: equivalence} reduces the comparison of trading strategies under $P$ to the analysis of strategies under $\nu$; the corresponding strategies are related by the scaling $\vartheta\mapsto\kappa\vartheta$, while relative wealth is obtained by dividing by $\widetilde\Sigma$. Hereafter, by a slight abuse of notation, we analyze the trading strategy $\vartheta$ directly under the process $\nu$.

\begin{definition} [Defects of self-financeability and continuity] \label{def: defect of self-financeability}
    For a predictable RCLL $\nu$-integrable vector process $\vartheta$ with locally finitely many jumps, we define
    \begin{equation} \label{eq: defect of self-financeability}
        Q^{\vartheta, \nu}(t) := \vartheta(t) \cdot \nu(t) - \vartheta(0) \cdot \nu(0) - \int_0^t \vartheta(s-) \cdot d\nu(s) - \sum\limits_{0 < s \le t} \Delta \vartheta(s) \cdot \nu(s), \qquad t \ge 0,
    \end{equation}
    and call it the \emph{defect of self-financeability} of $\vartheta$ with respect to $\nu$. We also define
    \begin{equation}    \label{def:defect-of-continuity}
        C^{\vartheta, \nu}(t) := \sum\limits_{0 < s \le t} \Delta (\vartheta \cdot \nu)(s), \qquad t \ge 0,
    \end{equation}
    and call it the \emph{defect of continuity} of $\vartheta$ with respect to $\nu$.
\end{definition}

The two defects serve different purposes. The process $Q^{\vartheta,\nu}$ is continuous, as follows directly from the jump identity for $\vartheta\cdot\nu$, whereas $C^{\vartheta,\nu}$ is the cumulative jumps of the marked-to-market value. The latter can be nonzero because of jumps in either $\nu$ or $\vartheta$. Since $\mathbf{1}\cdot\nu\equiv1$, subtracting the same scalar from every holding reduces the value by that scalar. This identity, together with the continuity of $Q^{\vartheta,\nu}$ and the pure jump structure of $C^{\vartheta,\nu}$, allows the two defects to be corrected separately.

For continuous $\vartheta$, the definition of $Q^{\vartheta,\nu}$ reduces to the usual self-financeability defect in \cite[(2.6)]{karatzas2017trading}, with $\nu$ in place of $\mu$. Both $Q^{\vartheta, \nu}$ and $C^{\vartheta, \nu}$ vanish if $\vartheta$ is a trading strategy. The following proposition is the price-based counterpart of \cite[Proposition~2.3]{karatzas2017trading}; its proof is given in Appendix~\ref{app: proofs setup}.

\begin{proposition} \label{prop: defect of self-financeability}
    Let $\vartheta$ be a predictable RCLL $\nu$-integrable vector process with locally finitely many jumps. Then, $\varphi$ with components
    \[
        \varphi_i(t) := \vartheta_i(t) - Q^{\vartheta, \nu}(t) - C^{\vartheta, \nu}(t), \qquad i=1,\dots, d, \quad t\ge 0
    \]
    is a trading strategy with respect to $\nu$.
\end{proposition}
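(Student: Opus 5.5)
We verify directly that $\varphi = \vartheta - (Q+C)\mathbf{1}$ satisfies the three requirements of Definition~\ref{def: trading strategy} with $X=\nu$: it is a predictable RCLL integrable process with locally finitely many jumps, its value process is continuous, and the self-financing identity \eqref{self-financing} holds. Write $Q=Q^{\vartheta,\nu}$, $C=C^{\vartheta,\nu}$ and $R:=Q+C$.

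\textbf{Step 1: regularity.} First we check that $R$ is a scalar predictable RCLL process with locally finitely many jumps. For $C$ this holds because it is a finite sum of the jumps of $\vartheta\cdot\nu$. Both $\vartheta$ and $\nu$ are predictable, and $\nu$ is predictable by the adjustment assumptions, so each $\Delta(\vartheta\cdot\nu)(s)$ is $\mathcal F(s-)$-measurable and $C$ is predictable. Next we confirm that $Q$ is continuous. At a jump time, $\Delta(\vartheta\cdot\nu)(s) = \vartheta(s-)\cdot\Delta\nu(s) + \Delta\vartheta(s)\cdot\nu(s)$, and these are exactly the jumps of the two subtracted terms, so $\Delta Q(s)=0$. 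Hence $Q$ is a continuous adapted process and therefore predictable. Integrability of $\varphi$ follows because $R$ is locally bounded, being RCLL and predictable.

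\textbf{Step 2: value and continuity.} Since $\mathbf 1\cdot\nu\equiv 1$,
\[
V^\varphi(t;\nu)=\vartheta(t)\cdot\nu(t)-Q(t)-C(t).
\]
Using the definition of $Q$, this becomes
\[
V^\varphi(t;\nu)=\vartheta(0)\cdot\nu(0)+\int_0^t\vartheta(s-)\cdot d\nu(s)+\sum_{0<s\le t}\Delta\vartheta(s)\cdot\nu(s)-C(t).
\]
The jumps of the right-hand side are $\Delta(\vartheta\cdot\nu)(s)-\Delta C(s)=0$. Equivalently, $V^\varphi=\vartheta\cdot\nu - C - Q$ is continuous because $\vartheta\cdot\nu - C$ is continuous (we remove all its jumps) and $Q$ is continuous. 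Also $V^\varphi(0)=\vartheta(0)\cdot\nu(0)$, since $Q(0)=C(0)=0$.

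\textbf{Step 3: self-financing.} We must show that $V^\varphi(t)-V^\varphi(0)$ equals
\[
\int_0^t\varphi(s-)\cdot d\nu(s)+\sum_{0<s\le t}\Delta\varphi(s)\cdot\nu(s).
\]
Expand with $\varphi=\vartheta-R\mathbf 1$. The integral term is
\[
\int_0^t\vartheta(s-)\cdot d\nu(s)-\int_0^t R(s-)\,d(\mathbf 1\cdot\nu)(s)=\int_0^t\vartheta(s-)\cdot d\nu(s),
\]
because $\mathbf 1\cdot\nu\equiv1$ has zero differential. The jump sum is
\[
\sum\Delta\vartheta(s)\cdot\nu(s)-\sum\Delta R(s)(\mathbf 1\cdot\nu(s))=\sum\Delta\vartheta(s)\cdot\nu(s)-\sum\Delta C(s),
\]
since $\Delta R=\Delta C$ by continuity of $Q$, and $\sum_{0<s\le t}\Delta C(s)=C(t)$. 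Comparing with the expression for $V^\varphi$ in Step 2 gives exactly \eqref{self-financing}.

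\textbf{Main obstacle.} The delicate point is Step 1: proving continuity of $Q$ and predictability of the jump correction $C$. Predictability of $C$ relies on the predictability of $\nu$, which in turn comes from the $\mathcal F(\tau_k-)$-measurability of $m_k$. It also requires the jump identity for $\vartheta\cdot\nu$, obtained from the integration-by-parts formula for RCLL semimartingales with finitely many jumps. The continuous covariation terms are absorbed into the stochastic integral, exactly as in \cite[Proposition~2.3]{karatzas2017trading}; no covariation term appears here because $Q$ is defined directly rather than through Itô's formula. Once these facts are in place, the remaining calculation is routine.
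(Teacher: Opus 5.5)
Your proof is correct and follows the paper's argument essentially step for step. You show $Q^{\vartheta,\nu}$ is continuous via the jump identity, get continuity of $V^\varphi$ because $C^{\vartheta,\nu}$ removes exactly the jumps of $\vartheta\cdot\nu$, and obtain the self-financing identity from $\mathbf{1}\cdot d\nu\equiv0$, $\mathbf{1}\cdot\nu\equiv1$ and $\Delta R=\Delta C^{\vartheta,\nu}$. (Two small points: the jump identity for $\vartheta\cdot\nu$ is purely algebraic, so no integration-by-parts formula is needed, and predictability of $C^{\vartheta,\nu}$ follows most cleanly by writing it as $\vartheta\cdot\nu$ minus the continuous adapted process $\vartheta\cdot\nu-C^{\vartheta,\nu}$.)
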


We say that a trading strategy $\varphi$ with respect to $\nu$ is \emph{relative arbitrage} with respect to the (price-based) market, if there exists $T>0$ such that $V^{\varphi}(t;\nu) \ge 0$ for all $t \in [0, T]$ and
\[
    \mathbb{P}\big( V^{\varphi}(T;\nu) \ge V^{\varphi}(0;\nu) \big) = 1, \qquad \mathbb{P}\big( V^{\varphi}(T;\nu) > V^{\varphi}(0;\nu) \big) > 0.
\]
If the last probability is equal to $1$, then we say that $\varphi$ is \emph{strong relative arbitrage} with respect to the (price-based) market.

\bigskip

\section{Price-Based Functionally Generated Trading Strategies} \label{sec: FGPTS}

\subsection{Generating function and its gamma process}
We now develop the functional generation of trading strategies in the price-based framework, in parallel with the capitalization-based construction of \cite{karatzas2017trading}. We denote the unit simplex of $\mathbb{R}^d$ by
\[
    \triangle^d:=\Big\{x\in[0,\infty)^d:\sum_{i=1}^d x_i=1\Big\}.
\]
Under the standing positivity assumptions on $S$ and $N$, both the market weight vector $\mu$ and the price weight vector $\nu$ take values in $\triangle^d\cap(0,\infty)^d$. Accordingly, all generating functions below are defined on the open positive orthant $(0,\infty)^d$, and all their derivatives are evaluated there.

Let $X$ be a $d$-dimensional RCLL semimartingale with locally finitely many jumps, such that $X(t)\in(0,\infty)^d$ for every $t\ge0$ and $X(t-)\in(0,\infty)^d$ for every $t>0$. Let $G:(0,\infty)^d\to\mathbb R$ be of class $C^2$. We write
\[
    \nabla G(x):=\big(\partial_1G(x),\ldots,\partial_dG(x)\big),\qquad x\in(0,\infty)^d,
\]
and set $\vartheta(t):=\nabla G(X(t))$. We define the \emph{gamma process} $\Gamma^G$ associated with $G$ and $X$ as
\begin{equation} \label{eq: Gamma function}
    \Gamma^G(t; X) := G\big(X(0)\big) - G\big(X(t)\big) + \int_0^t \vartheta(s-) \cdot dX(s) + \sum\limits_{0 < s \le t} \Delta\vartheta(s) \cdot X(s), \qquad t \ge 0.
\end{equation}

\begin{definition} [Balanced function] \label{def: balanced}
    A function $G\in C^1((0,\infty)^d)$ is called \emph{balanced} if
    \[
        \nabla G(x)\cdot x=G(x),\qquad x \in (0,\infty)^d.
    \]
    We also write the defect of balance
    \begin{equation}    \label{def:defect-of-balance}
        b_G(x) := \nabla G(x) \cdot x - G(x).
    \end{equation}
\end{definition}

An application of It\^o's lemma shows that
\begin{equation}  \label{eq: decomposition of Gamma function}
    \Gamma^G(t) = \Gamma^{G,c}(t) + \Gamma^{G,d}(t), \qquad t \ge 0,
\end{equation}
where $\Gamma^{G,c}$ and $\Gamma^{G,d}$ are continuous part and jump part of $\Gamma^G$, respectively:
\begin{align*}
    \Gamma^{G,c}(t) := &-\frac{1}{2} \sum_{i,j=1}^d \int_0^t \partial^2_{ij} G\big(X(s-)\big) d[X_i, X_j]^c(s), \\
    \Gamma^{G,d}(t) := & \sum_{0 < s \le t} \Delta b_G\big(X(s)\big) =  \sum_{0 < s \le t} \Big( - \Delta G\big(X(s)\big) + \Delta(\vartheta \cdot X)(s) \Big).
\end{align*}
This shows that the process $\Gamma^G$ is of finite variation on each compact interval. Furthermore, if $G$ is balanced, then $b_G \equiv 0$ and $\Gamma^{G,d}\equiv0$, so $\Gamma^G$ is continuous.

Moreover, when $G$ is concave, the process $\Gamma^G$ satisfies
\[
\Gamma^G(t)\ge \sum_{0<s\le t}\Delta\vartheta(s)\cdot X(s), \qquad t\ge0.
\]
Indeed, the continuous part $\Gamma^{G,c}(t)$ is nondecreasing in $t$, hence nonnegative. The jump part can be expressed as
\begin{align*}
    \Gamma^{G,d}(t) &= \sum_{0 < s \le t} \Big( -\Delta G(X(s)) + \vartheta(s-) \cdot \Delta X(s) + \Delta\vartheta(s) \cdot X(s) \Big)
    \\
    &= \sum_{0 < s \le t}  \Big( - d_{B, G}\big(X(s), X(s-)\big) + \Delta\vartheta(s) \cdot X(s) \Big) \ge \sum_{0 < s \le t} \Delta\vartheta(s) \cdot X(s),
\end{align*}
where $d_{B, G}$ is the \emph{Bregman divergence} induced by $G$, defined as
\[
    d_{B, G}(x, y) := G(x) - G(y) - \nabla G(y) \cdot (x - y).
\]
The last inequality follows from the fact that the Bregman divergence is nonpositive if $G$ is concave.

In parallel with the decomposition \eqref{eq: decomposition of Gamma function} above, we introduce the continuous part of $G(X(\cdot))$:
\begin{equation} \label{eq: G^c definition}
    G^c\big(X(t)\big) := G\big(X(t)\big) - \sum_{0<s\le t} \Delta G\big(X(s)\big), \qquad t \ge 0.
\end{equation}
The notation $G^c(X(t))$ denotes the process obtained by removing the jumps from $G(X)$, not a deterministic function of the current state $X(t)$. It generally depends on the path up to time $t$ and retains the initial value $G(X(0))$. Here, ``continuous part'' includes continuous finite variation terms and does not mean only the continuous local martingale part.

\medskip

\subsection{Additive generation of price-based trading strategies}

In this subsection, we introduce additive generation of trading strategies, as the counterpart of \cite[Definition~4.1 and Proposition~4.3]{karatzas2017trading} in the price-based framework.

For a twice continuously differentiable function $G : (0, \infty)^d \to \mathbb{R}$, set a predictable, RCLL process $\vartheta(t) := \nabla G(\nu(t))$. By Proposition~\ref{prop: defect of self-financeability}, subtracting the defects $Q^{\vartheta,\nu}$ and $C^{\vartheta,\nu}$ from $\vartheta$ yields a trading strategy with respect to $\nu$. Subtracting, in addition, the defect of balance at $t=0$
\begin{equation}    \label{def:defect-of-balance-0}
    b_G\big(\nu(0)\big) = \nabla G\big(\nu(0)\big) \cdot \nu(0) - G\big(\nu(0)\big),
\end{equation}
preserves the self-financeability, since $\mathbf{1} \cdot d\nu \equiv 0$ allows a constant to be subtracted from each component. Using the identity
\[
    Q^{\vartheta,\nu}(t) + \Gamma^G(t) = \vartheta(t)\cdot\nu(t) - \vartheta(0)\cdot\nu(0) + G\big(\nu(0)\big) - G\big(\nu(t)\big), \qquad t \ge 0
\]
we obtain for $i=1, \dots, d$ and $t \ge 0$
\begin{equation} \label{eq: additively generated portfolio strategy}
    \begin{aligned}
        \varphi_i(t) &:= \vartheta_i(t) - Q^{\vartheta,\nu}(t) - C^{\vartheta,\nu}(t) - b_G\big(\nu(0)\big) \\
        &= \vartheta_i(t) - (\vartheta \cdot \nu)(t) - C^{\vartheta,\nu}(t) + G\big(\nu(t)\big) + \Gamma^G(t).
    \end{aligned}
\end{equation}

\begin{definition} [Additively generated trading strategy] \label{def: additively generated portfolio}
    The trading strategy $\varphi$ defined in \eqref{eq: additively generated portfolio strategy} is called the \emph{additively generated trading strategy} by $G$ with respect to $\nu$.
\end{definition}

The following result is the natural extension of \cite[Proposition~4.3]{karatzas2017trading}, with $G$ and $\Gamma^G$ replaced by their continuous parts $G^c$ in \eqref{eq: G^c definition} and $\Gamma^{G,c}$ in \eqref{eq: decomposition of Gamma function}, respectively; its proof is deferred to Appendix~\ref{app: proofs fgpts}.

\begin{proposition} [Additive generation] \label{prop: additively generated portfolio theorem}
    For $G \in C^2((0,\infty)^d)$, the value process in \eqref{def:value-process} of the additively generated trading strategy $\varphi$ by $G$ satisfies
    \begin{equation} \label{eq: additively generated portfolio}
        V^\varphi(\cdot;\nu) = G^c\big(\nu(\cdot)\big) + \Gamma^{G,c}(\cdot).
    \end{equation}
\end{proposition}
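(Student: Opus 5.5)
The plan is to compute $V^\varphi(t;\nu)=\varphi(t)\cdot\nu(t)$ directly from the second line of \eqref{eq: additively generated portfolio strategy}, using $\mathbf{1}\cdot\nu\equiv1$, and then split $\Gamma^G$ and $C^{\vartheta,\nu}$ into continuous and jump parts. The subtracted scalars act on $\nu$ with weight one, so
\[
V^\varphi(t;\nu)=(\vartheta\cdot\nu)(t)-(\vartheta\cdot\nu)(t)-C^{\vartheta,\nu}(t)+G\big(\nu(t)\big)+\Gamma^G(t)=G\big(\nu(t)\big)+\Gamma^G(t)-C^{\vartheta,\nu}(t).
\]
Trading-strategy status of $\varphi$ is already granted by Proposition~\ref{prop: defect of self-financeability}. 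Subtracting the constant $b_G(\nu(0))$ does not affect self-financeability (since $\mathbf 1\cdot d\nu\equiv0$ and $\Delta$ of a constant vanishes). So only the value identity needs proof.

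Next I insert the decomposition \eqref{eq: decomposition of Gamma function}, $\Gamma^G=\Gamma^{G,c}+\Gamma^{G,d}$, with
\[
\Gamma^{G,d}(t)=\sum_{0<s\le t}\Big(-\Delta G\big(\nu(s)\big)+\Delta(\vartheta\cdot\nu)(s)\Big).
\]
By \eqref{def:defect-of-continuity}, $C^{\vartheta,\nu}(t)=\sum_{0<s\le t}\Delta(\vartheta\cdot\nu)(s)$, so these terms cancel exactly. This leaves
\[
V^\varphi(t;\nu)=G\big(\nu(t)\big)-\sum_{0<s\le t}\Delta G\big(\nu(s)\big)+\Gamma^{G,c}(t)=G^c\big(\nu(t)\big)+\Gamma^{G,c}(t),
\]
by \eqref{eq: G^c definition}. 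All sums are finite on compacts because $\nu$ and $\vartheta=\nabla G(\nu)$ have locally finitely many jumps.

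The main point needing care is the decomposition \eqref{eq: decomposition of Gamma function} itself, which the paper asserts via Itô's lemma; I would verify it briefly. Apply Itô's formula for RCLL semimartingales to $G(\nu)$:
\[
G(\nu(t))-G(\nu(0))=\int_0^t\nabla G(\nu(s-))\cdot d\nu(s)+\tfrac12\sum_{i,j}\int_0^t\partial^2_{ij}G(\nu(s-))\,d[\nu_i,\nu_j]^c(s)+\sum_{0<s\le t}\Big(\Delta G(\nu(s))-\vartheta(s-)\cdot\Delta\nu(s)\Big).
\]
Substituting this into \eqref{eq: Gamma function} gives $\Gamma^{G,c}$ plus the jump sum $\sum\big(-\Delta G+\vartheta(s-)\cdot\Delta\nu+\Delta\vartheta\cdot\nu\big)$. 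By the product identity $\Delta(\vartheta\cdot\nu)=\vartheta(s-)\cdot\Delta\nu+\Delta\vartheta\cdot\nu(s)$, this jump sum equals $\Gamma^{G,d}$.

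Predictability and integrability of $\vartheta$ follow from $G\in C^2$ and the predictability of $\nu$ noted after \eqref{def:nu}. The resulting value is continuous, consistent with Definition~\ref{def: trading strategy}, because $G^c(\nu)$ and $\Gamma^{G,c}$ are jump-free by construction.
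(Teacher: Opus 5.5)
Your proposal is correct and follows essentially the same route as the paper. Both arguments use $\mathbf 1\cdot\nu\equiv1$ to get $V^\varphi=G(\nu)+\Gamma^G-C^{\vartheta,\nu}$, then cancel $\Gamma^{G,d}$ against $C^{\vartheta,\nu}$, leaving $G^c(\nu)+\Gamma^{G,c}$; your It\^o verification of $\Gamma^G=\Gamma^{G,c}+\Gamma^{G,d}$, which implicitly uses $\vartheta(s-)=\nabla G(\nu(s-))$, is a correct supplement to a step the paper only asserts.
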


\begin{remark}[The event term in additive generation] \label{rem: jump sum irreducible}
    The counterpart of the identity \eqref{eq: additively generated portfolio} in the capitalization-based setting is given as
    \[
        V^\varphi(\cdot;\mu) = G\big(\mu(\cdot)\big) + \Gamma^{G}(\cdot)
    \]
    in \cite[Proposition~4.3]{karatzas2017trading}. For a nonnegative concave generating function, this additive structure yields a convenient sufficient condition for relative arbitrage when the gamma process accumulates beyond the initial generating-function value; see \cite[Theorem~5.1]{karatzas2017trading}. However, the term $G^c(\nu(t))$ in \eqref{eq: additively generated portfolio} carries the accumulated jump sum $\sum_{0<s\le t}\Delta G(\nu(s))$ as in \eqref{eq: G^c definition}, so it depends on the entire path up to time $t$; it can also be unbounded, even for bounded $G$, as the number of jumps can be unbounded. The previous capitalization-based SPT argument that a bounded, concave generating function $G$ lets its nondecreasing gamma process govern long-run relative wealth is therefore unavailable, and Example~\ref{ex: entropy market} below illustrates this breakdown, as the jump sum alone may drive the relative wealth to $-\infty$.
\end{remark}

Consider the Gibbs entropy function
\begin{equation}    \label{eq:Gibbs-entropy}
    H(x) = - \sum_{i=1}^d x_i \log x_i, \qquad x \in (0,\infty)^d.
\end{equation}
The function $H$ is smooth and concave on $(0,\infty)^d$. By Proposition~\ref{prop: additively generated portfolio theorem} with a direct calculation from \eqref{eq: Gamma function}, the value process of the additively generated trading strategy $\varphi$ by $H$ is computed as
\[
    V^\varphi(t;\nu) \equiv V^\varphi(t) = H^c \big(\nu(t)\big) + \frac{1}{2} \int_0^t \sum\limits_{i=1}^d \frac{d[\nu_i]^c(s)}{\nu_i(s-)}, \qquad t \ge 0.
\]
We construct a simple market model with a sequence of adjustment events such that the additively generated trading strategy $\varphi$ does not necessarily yield relative arbitrage with respect to the market portfolio over sufficiently long horizons. The example shows why the lower-bound argument underlying the capitalization-based entropy construction in \cite[Example~5.3]{karatzas2017trading} does not extend directly to price weights with repeated adjustments.

\begin{example} [Failure of entropy-generated relative arbitrage under repeated splits] \label{ex: entropy market}
    Fix a number $\delta \in (0, 1/2)$. Let $W_1$ and $W_2$ be two independent Brownian motions. Consider a market model with two assets whose price processes satisfy the following dynamics for positive constants $\alpha, \epsilon$: for each $k \ge 1$
    \begin{equation*}
        d\log P_1(t) = - \alpha\, dt + \epsilon\, dW_1(t), \qquad d\log P_2(t) = \alpha\, dt + \epsilon\, dW_2(t), \qquad \tau_{k-1} < t < \tau_k
    \end{equation*}
    with $P_1(0) = P_2(0) = 1$ where $(\tau_k, m_k)_{k \ge 0}$ is a sequence of adjustment events such that
    \[
        \tau_k := \inf \bigg\{ t > \tau_{k-1} : \nu_1(t-) = \frac{P_1(t-)}{P_1(t-)+P_2(t-)} = \delta \bigg\}, \qquad m_k = \bigg(1, \, \frac{1-\delta}{\delta}\bigg), \qquad k \ge 1.
    \]
    Thus $\nu_1(\tau_k-)=\delta$, whereas the adjustment resets the post-event price weights to $\nu(\tau_k)=(1/2, 1/2)$, hence $P_1(\tau_k) = P_2(\tau_k)$. Here, each $\tau_k$ is predictable as it is a hitting time of a continuous process. The values of $\alpha$ and $\epsilon$ will be specified below.

    In this market model, two assets play opposite roles. The price weight $\nu_1$ of the first asset drifts downward from its initial value $\nu_1(0)=1/2$, and the adjustment time $\tau_k$ is triggered precisely when $\nu_1$ falls to the floor level $\delta$. At each $\tau_k$, the policy equalizes the two prices through a value-preserving split of the second asset $P_2$, resetting the price weights to the initial state $(1/2, 1/2)$. Indeed, a company whose share price has risen far above its peers tends to split its stock, lowering the per-share price to restore the liquidity of its shares. This parallels, for example, the split-merge rule of \cite[Section~3.4]{strong2011diversity}, where the largest company is split in capitalization at the diversity boundary; our split acts instead on the price, leaving capitalizations unchanged.

    A direct computation gives the continuous quadratic variations of $\log\nu_1$ and $\log\nu_2$:
    \[
        d[\log \nu_1]^c(t) = 2\epsilon^2\big(1-\nu_1(t)\big)^2\,dt, \qquad
        d[\log \nu_2]^c(t) = 2\epsilon^2\big(1-\nu_2(t)\big)^2\,dt, \qquad t\ge0.
    \]
    The factor $2\epsilon^2$ originates from the difference of the two independent Brownian drivers. Writing $\log\nu_1 = \log P_1 - \log\Sigma$ with $\Sigma = P_1+P_2$, one verifies that the diffusion part of $d\log\nu_1$ equals $\nu_2(t)\,\epsilon\,(dW_1(t) - dW_2(t))$, since $\log P_1$ contributes $\epsilon\,dW_1$ while $\log\Sigma$ contributes $\nu_1\epsilon\,dW_1 + \nu_2\epsilon\,dW_2$ in its diffusion. Independence of $W_1,W_2$ then yields $d[W_1 - W_2] = 2\,dt$, hence $d[\log\nu_1]^c = 2(\nu_2\epsilon)^2\,dt$. Equivalently, $W_2 - W_1 = \sqrt{2}\,B$ for a standard Brownian motion $B$, which is consistent with the volatility $\sqrt{2}\epsilon$ of $Y(t) := \log(\nu_2(t)/\nu_1(t))$ used in the hitting-time representation below.

    Since each cycle begins with the two prices equalized, the process $Y(t) = \log(\nu_2(t)/\nu_1(t))$ restarts from $0$ at every $\tau_{k-1}$ and evolves as a Brownian motion with drift $2\alpha$ and volatility $\sqrt{2}\epsilon$ on each inter-event interval. The increment $\tau_k - \tau_{k-1}$ is therefore the first-passage time of this drifted Brownian motion to the level $\log\big((1-\delta)/\delta\big)$:
    \[
        \tau_k - \tau_{k-1} \stackrel{d}{=} \inf\Big\{ t > 0 : 1-\delta = \delta e^{2\alpha t + \sqrt{2}\epsilon B(t)} \Big\}.
    \]
    By the Brownian strong Markov property, the increments
    \[
        X_k := \tau_k - \tau_{k-1}, \qquad k \ge 1,
    \]
    are i.i.d. and finite with mean
    \[
        m := \mathbb{E}[X_1] = \frac{1}{2\alpha}\log\frac{1-\delta}{\delta} \in (0, \infty),
    \]
    which follows by differentiating at $\lambda = 0$ the Laplace transform of the first-passage time given in \cite[Section~3.5.C]{karatzas1991brownian}. In particular, $\tau_k \to \infty$ as $k \to \infty$, almost surely. We write
    \[
        \begin{aligned}
        V^{\varphi}(t) - H\big(\nu(0)\big) &= \sum_{j=1}^{\infty} \bigg( H\big(\nu(\tau_j-)\big) - H\big(\nu(\tau_{j-1})\big) + \epsilon^2 \int_{\tau_{j-1}}^{\tau_j} f\big(\nu(s)\big) \, ds \bigg)\mathbf{1}_{\{\tau_j \le t\}} \\
        & \qquad + \sum_{k=1}^{\infty} \bigg( H\big(\nu(t)\big) - H\big(\nu(\tau_{k-1})\big) + \epsilon^2 \int_{\tau_{k-1}}^{t} f\big(\nu(s)\big) \, ds \bigg)\mathbf{1}_{\{\tau_{k-1} \le t < \tau_k\}},
        \end{aligned}
    \]
    where
    \[
        f\big(\nu(s)\big) := \nu_1(s)\big(1-\nu_1(s)\big)^2 + \nu_2(s)\big(1-\nu_2(s)\big)^2 = \nu_1(s) \nu_2(s) \in \left[0,\scalebox{0.85}{$\dfrac14$}\right].
    \]
    Let $K(t):=\max(\{j:\tau_j \le t\} \cup \{0\})$. Since $H(\nu(t)) - H(\nu(\tau_{k-1})) \le 0$ for any $k \ge 1$, the second sum is bounded pathwise by $\epsilon^2(t - \tau_{K(t)})/4 \le \epsilon^2 t/4$. For the first sum, the same entropy drop $b := H(\nu(0)) - H(\nu(\tau_1-)) = \log 2 - H(\delta, 1-\delta) > 0$ occurs on each completed interval, and $f \le 1/4$ together with $\sum_{j=1}^{K(t)} X_j = \tau_{K(t)} \le t$ bounds it by $-bK(t) + \epsilon^2 t/4$. Combining the two parts, we obtain the pathwise bound
    \[
        V^{\varphi}(t) - H\big(\nu(0)\big) \;\le\; -b K(t) + \frac{\epsilon^2}{2}\,t, \qquad t \ge 0.
    \]
    Since the inter-event times $(X_k)_{k\ge1}$ are positive i.i.d.\ random variables with mean $m$, the renewal strong law yields $K(t)/t \to 1/m$ almost surely. Consequently,
    \[
        \limsup_{t\to\infty}\frac{V^\varphi(t)}{t} \;\le\; -\frac{b}{m}+\frac{\epsilon^2}{2} \qquad\text{a.s.}
    \]
    If $\alpha$ is sufficiently large, we can ensure that the last term is negative, hence $V^{\varphi}(t) \to -\infty$ as $t \to \infty$, almost surely. Thus, the entropy-generated strategy cannot remain admissible over arbitrarily long horizons.
    
    In the capitalization-based setting, entropy-generated relative arbitrage requires sufficient accumulation of the gamma process, as specified in \cite[Example~5.3]{karatzas2017trading}. However, this lower-bound argument used for continuous capitalization weights does not extend directly to price weights with repeated adjustments.
\end{example}

\medskip

\subsection{The role of announced adjustment information}
\label{sec: role of announcement information}

Example~\ref{ex: entropy market} shows that the strategy additively generated by entropy can fail to yield relative arbitrage under repeated share adjustments. This raises the question of whether advance knowledge of adjustment times and multipliers can help an investor overcome this failure by modifying the strategy. To examine the role of this additional information, we compare trading strategies with and without access to an adjustment announcement.

Let $\mathbb F^0=(\mathcal F^0(t))_{t\ge0}$ denote the filtration generated by the price process $P$ in \eqref{def:P} without the announcement information, and let $\mathbb F=(\mathcal F(t))_{t\ge0}$ be the enlarged filtration, introduced in Section~\ref{sec: setup}, satisfying $\mathcal F^0(t)\subseteq\mathcal F(t)$ for every $t\ge0$. Fix an adjustment event $(\tau,m)$ announced at a predictable stopping time $\sigma<\tau$, so that $(\tau,m)$ is $\mathcal F(\sigma)$-measurable, and suppose that no adjustment occurs on $[\sigma,\tau)$. The informed strategy remains predictable; in particular, information first revealed at $\sigma$ may affect its holdings at times strictly after $\sigma$, but cannot be used to prescribe a new jump in holdings at $\sigma$ itself. Recall that $\nu$ is an RCLL $\mathbb{F}$-semimartingale with locally finitely many jumps, and we write $\nu^c:=\nu-\sum_{0<s\le\cdot}\Delta\nu(s)$ for its continuous part.

The distinction between $\mathbb F^0$ and $\mathbb F$ is not needed for the algebraic identity below, which follows from the self-financing property and the continuity of the value processes. Its role is to distinguish the information available to the two strategies. The holdings of $\vartheta$ before the adjustment do not use the announcement, whereas $\eta$ may alter its pre-adjustment holdings after observing $(\tau,m)$. Thus, $\eta-\vartheta$ represents the incremental trading exposure made possible by the announcement information.

\begin{proposition}[Continuous source of announcement-based performance]    \label{prop: continuous source of incremental wealth}
    Let $\vartheta$ and $\eta$ be two trading strategies with respect to $\nu$. Suppose that the holdings of $\vartheta$ on $[\sigma,\tau)$ are $\mathbb F^0$-predictable, whereas $\eta$ may use the announcement information contained in $\mathbb F$, and assume that $V^\eta(\sigma;\nu)=V^\vartheta(\sigma;\nu)$. Then
    \begin{equation}    \label{eq: announcement period wealth comparison}
        V^\eta(t;\nu)-V^\vartheta(t;\nu) = \int_\sigma^t \big(\eta(s-)-\vartheta(s-)\big)\cdot d\nu^c(s), \qquad \sigma \le t \le\tau.
    \end{equation}
    In particular, the adjustment jump at $\tau$ produces no instantaneous change in the wealth difference between the informed and uninformed strategies.
\end{proposition}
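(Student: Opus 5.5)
We work directly from the self-financing identity \eqref{self-financing} for both strategies with $X=\nu$. Subtracting the two identities, evaluated at $t$ and at $\sigma$, and using the hypothesis $V^\eta(\sigma;\nu)=V^\vartheta(\sigma;\nu)$, we obtain for $\sigma\le t\le\tau$
\[
V^\eta(t;\nu)-V^\vartheta(t;\nu)=\int_\sigma^t \big(\eta(s-)-\vartheta(s-)\big)\cdot d\nu(s)+\sum_{\sigma<s\le t}\big(\Delta\eta(s)-\Delta\vartheta(s)\big)\cdot\nu(s).
\]
Write $\zeta:=\eta-\vartheta$. This is predictable and RCLL with locally finitely many jumps, so all terms above are well defined.

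Next we split $\nu=\nu^c+\sum_{0<s\le\cdot}\Delta\nu(s)$. Since $\nu$ has only locally finitely many jumps, the integral against the pure-jump part reduces to a finite sum. This gives
\[
\int_\sigma^t \zeta(s-)\cdot d\nu(s)=\int_\sigma^t\zeta(s-)\cdot d\nu^c(s)+\sum_{\sigma<s\le t}\zeta(s-)\cdot\Delta\nu(s).
\]
Combining the two jump sums and using $\zeta(s-)\cdot\Delta\nu(s)+\Delta\zeta(s)\cdot\nu(s)=\Delta(\zeta\cdot\nu)(s)$, the total jump contribution on $(\sigma,t]$ equals $\sum_{\sigma<s\le t}\Delta(\zeta\cdot\nu)(s)$. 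Now $\zeta\cdot\nu=V^\eta(\cdot;\nu)-V^\vartheta(\cdot;\nu)$, and both value processes are continuous by Definition~\ref{def: trading strategy}. Hence every one of these jumps vanishes, including any jump of $\nu$ at $\tau$ and any jumps of the holdings at times in $(\sigma,t]$. This yields \eqref{eq: announcement period wealth comparison}.

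The final claim follows immediately. The right-hand side of \eqref{eq: announcement period wealth comparison} is an integral against the continuous semimartingale $\nu^c$, so it is continuous in $t$ and has no jump at $\tau$. Equivalently, $\Delta(V^\eta-V^\vartheta)(\tau)=0$.

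The only delicate step is justifying the decomposition of the stochastic integral into its $\nu^c$ part and a pathwise jump sum. This rests on $\nu$ having locally finitely many jumps, so that $\nu-\nu^c$ is a finite-variation pure-jump process and the integral against it is a Stieltjes sum. We also need to note that the information structure plays no role in the algebra: the hypotheses on $\mathbb F^0$ versus $\mathbb F$ only ensure that both $\vartheta$ and $\eta$ are $\mathbb F$-predictable, so that both integrals are taken in the same filtration where $\nu$ is a semimartingale. Finally, the lower endpoint $\sigma$ needs care. Since the difference identity starts from equal values at $\sigma$, any jump of the holdings at $\sigma$ itself is excluded from the sum over $(\sigma,t]$, which is consistent with the statement.
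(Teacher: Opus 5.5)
Your proposal is correct and follows essentially the same route as the paper's proof. Both subtract the two self-financing identities from $\sigma$, split $\int\zeta(s-)\cdot d\nu$ into its $\nu^c$ part and a finite jump sum, and cancel every jump contribution $\zeta(s-)\cdot\Delta\nu(s)+\Delta\zeta(s)\cdot\nu(s)=\Delta(V^\eta-V^\vartheta)(s)=0$ using continuity of both value processes.
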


\begin{proof}
    Set $\delta:=\eta-\vartheta$. Subtracting the two self-financing identities \eqref{self-financing} and using the equality of the value processes at time $\sigma$ gives
    \[
        V^\eta(t;\nu)-V^\vartheta(t;\nu) = \int_\sigma^t\delta(s-)\cdot d\nu(s) + \sum_{\sigma<s\le t}\Delta\delta(s)\cdot\nu(s).
    \]
    Since $\nu$ has locally finitely many jumps,
    \[
        \int_\sigma^t\delta(s-)\cdot d\nu(s) = \int_\sigma^t\delta(s-)\cdot d\nu^c(s) + \sum_{\sigma<s\le t}\delta(s-)\cdot\Delta\nu(s).
    \]
    At every jump time $s$, the continuity of both value processes implies
    \[
        0 = \Delta\big(V^\eta-V^\vartheta\big)(s) = \delta(s-)\cdot\Delta\nu(s) + \Delta\delta(s)\cdot\nu(s).
    \]
    Thus all jump contributions cancel, and \eqref{eq: announcement period wealth comparison} follows.
\end{proof}

\begin{remark}[No model-free relative arbitrage guarantee from the announcement alone] \label{rem: no model free guarantee from announcement}
    Equation~\eqref{eq: announcement period wealth comparison} shows that the announcement may affect relative performance only through the pre-event continuous trading exposure chosen by the informed strategy. The mechanical adjustment jump at event time $\tau$ itself contributes no incremental wealth. The stochastic integral on the right-hand side, however, has no definite sign under the present assumptions. In particular, without additional conditions relating the announcement information to the continuous evolution of $\nu$, one cannot infer the existence of an $\mathbb F$-predictable strategy $\eta$ satisfying
    \[
        V^\eta(\tau;\nu)\ge V^\vartheta(\tau;\nu)\quad\text{a.s.}, \qquad \mathbb P\big(V^\eta(\tau;\nu) > V^\vartheta(\tau;\nu)\big) > 0.
    \]
    Thus, the announced adjustment time and multiplier alone do not provide a model-free sufficient condition for the informed strategy to outperform the uninformed one. This observation does not rule out informed outperformance in a specified market; any such positive result must additionally exploit structural properties of the continuous pre-event dynamics.
\end{remark}

\subsection{Multiplicative generation of price-based trading strategies}

We now return to functional generation and consider the multiplicative construction. In contrast to additive generation, Fernholz's classical multiplicative generation of portfolios requires extra attention. One could follow the additive methodology as in the multiplicative construction of \cite[Definition~4.7]{karatzas2017trading}; namely, setting
\[
    \eta_i(\cdot) := \partial_i G \big(\nu(\cdot)\big)\, \exp\bigg( \int_{0}^{\cdot} \frac{d\Gamma^{G, c}(s)}{G(\nu(s))} \bigg)
\]
globally and subtracting $Q^{\eta,\nu}$, $C^{\eta,\nu}$, and the initial balance defect $b_G(\nu(0))$ from each component of $\eta$ would again produce a trading strategy, now with initial wealth $G(\nu(0))$. However, even when its wealth remains positive, the resulting portfolio need not retain Fernholz's classical multiplicative allocation after an adjustment event.

For example, the diversity-weighted function $\mathbf{D}_p$ of Example~\ref{ex: diversity weighted} below, applied with the continuous market weight $\mu$, is known to multiplicatively generate so-called $p$-diversity-weighted portfolio
\[
    \pi^{\mu}_i(t) = \frac{\mu_i^p(t)}{\sum_{j=1}^d \mu_j^p(t)}, \qquad i = 1, \ldots, d, \quad t \ge 0.
\]
However, when one tries to replace $\mu$ with $\nu$, the discontinuity of $\nu$ at each $\tau_k$ influences the individual portfolio weights differently for $i = 1, \ldots, d$, failing to maintain the $p$-diversity-weighted wealth distribution after each adjustment event.

To remedy this, we therefore restart the classical multiplicative construction on each inter-event interval and rebalance by a factor at each $\tau_k$ in a self-financing way to maintain the desired wealth distribution. We shall illustrate this `gluing' procedure in more detail in the following.

Let $G : (0,\infty)^d \to (0, \infty)$ be twice continuously differentiable with $1/G$ locally bounded. On each inter-event interval $t \in [\tau_k, \tau_{k+1})$ for $k \ge 0$, define
\begin{equation} \label{eq: tps eta}
    \eta^{(k)}_i(t) := \partial_i G \big(\nu(t)\big)\, \exp\bigg( \int_{\tau_k}^{t} \frac{d\Gamma^{G, c}(s)}{G\big(\nu(s)\big)} \bigg),
\end{equation}
the defect of balance constant
\begin{equation} \label{eq: tps balance defect}
    b_G\big(\nu(\tau_k)\big) := \nabla G\big(\nu(\tau_k)\big) \cdot \nu(\tau_k) - G\big(\nu(\tau_k)\big),
\end{equation}
and the inter-event defect of self-financeability
\begin{equation} \label{eq: tps Q}
    Q^{(k)}(t) := (\eta^{(k)} \cdot \nu)(t) - (\eta^{(k)} \cdot \nu)(\tau_k) - \int_{\tau_k}^{t} \eta^{(k)}(s) \cdot d\nu(s),
\end{equation}
obtained by specializing~\eqref{eq: defect of self-financeability} to the interval $[\tau_k, t]$. Note that the jump term of~\eqref{eq: defect of self-financeability} vanishes here, since $\eta^{(k)}$ and $\nu$ are continuous on the inter-event interval. Define $\psi$ with components
\begin{equation} \label{eq: multiplicatively generated trading strategy}
    \psi_i(\cdot) := \alpha_k \Big(\eta^{(k)}_i(\cdot) - Q^{(k)}(\cdot) - b_G\big(\nu(\tau_k)\big) \Big) \quad \text{on } [\tau_k, \tau_{k+1}) \text{ for } k \ge 0,
\end{equation}
where the normalization constants $(\alpha_k)_{k \ge 0}$ are specified inductively by
\begin{equation} \label{eq: tps alpha}
    \alpha_0 := 1, \qquad
    \alpha_k := \frac{V^\psi(\tau_k -;\nu)}{G\big(\nu(\tau_k)\big)} \quad \text{for } k \ge 1.
\end{equation}
For each $k \ge 1$, $\alpha_k$ is $\mathcal F(\tau_k-)$-measurable, since both $V^\psi(\tau_k-;\nu)$ and $\nu(\tau_k)$ are $\mathcal F(\tau_k-)$-measurable.

At each restart, the local construction is evaluated at the post-adjustment state $\nu(\tau_k)$ and has initial value $G(\nu(\tau_k))$. The factor $\alpha_k$ scales this value to the wealth carried over from the preceding interval. Thus, the restart restores the prescribed allocation without adding or withdrawing capital; the normalization changes the scale of the holdings, not their portfolio proportions.

The following result verifies that the vector $\psi$ is a trading strategy with respect to $\nu$, and Proposition~\ref{prop: multiplicatively generated trading strategy theorem} below is the natural extension of \cite[Proposition~4.8]{karatzas2017trading} to the presence of adjustment events. Their proofs are given in Appendix~\ref{app: proofs fgpts}.

\begin{proposition}[Multiplicatively generated trading strategy] \label{prop: multiplicatively generated trading strategy}
    The process $\psi$ with components given in \eqref{eq: multiplicatively generated trading strategy} is a trading strategy with respect to $\nu$.
\end{proposition}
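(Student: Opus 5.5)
We must show that $\psi$ is predictable and RCLL with locally finitely many jumps, that its value process is continuous, and that it satisfies the self-financing identity \eqref{self-financing} with respect to $\nu$. The plan is to verify these properties interval by interval and then glue the intervals at the adjustment times.

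\emph{Step 1: one inter-event interval.} Fix $k\ge0$ and work on $[\tau_k,\tau_{k+1})$, where $\nu$ is continuous. Define
\[
    \psi^{(k)}:=\eta^{(k)}-Q^{(k)}-b_G\big(\nu(\tau_k)\big).
\]
This is exactly the classical construction of Proposition~\ref{prop: defect of self-financeability}, restarted at $\tau_k$: subtracting the continuous defect $Q^{(k)}$ and the constant balance defect uses only the identity $\mathbf 1\cdot d\nu\equiv0$. From this we obtain
\[
    d(\psi^{(k)}\cdot\nu)=\psi^{(k)}\cdot d\nu \quad\text{on } [\tau_k,\tau_{k+1}).
\]
By the computation in \cite[Proposition~4.8]{karatzas2017trading}, with $\mu$ replaced by the continuous $\nu$ on this interval, we also get the closed form
\[
    \psi^{(k)}\cdot\nu=G(\nu)\exp\Big(\int_{\tau_k}^{\cdot}d\Gamma^{G,c}/G(\nu)\Big)>0.
\]
One checks it starts at $G(\nu(\tau_k))$ because $\eta^{(k)}(\tau_k)\cdot\nu(\tau_k)-b_G(\nu(\tau_k))=G(\nu(\tau_k))$. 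The term $\Gamma^{G,c}$ here coincides with $\Gamma^G$ on the open interval, since there are no jumps. Multiplying by the $\mathcal F(\tau_k-)$-measurable constant $\alpha_k$ preserves all of these properties. Integrability of $\psi$ follows from continuity of $\eta^{(k)}$ and $Q^{(k)}$ on compacts. Local boundedness of $1/G$ and positivity of $\nu(\tau_k)$, together with $\nu$ staying in the open orthant, ensure the exponent is finite.

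\emph{Step 2: predictability and jumps.} On each $[\tau_k,\tau_{k+1})$, $\psi$ is continuous and adapted. At $\tau_k$, its value $\alpha_k\big(\nabla G(\nu(\tau_k))-b_G(\nu(\tau_k))\big)$ is $\mathcal F(\tau_k-)$-measurable, because $\nu(\tau_k)$ and $\alpha_k$ are. Since $\tau_k$ is predictable, the process $\psi\mathbf 1_{[\![\tau_k]\!]}$ is predictable, and $\psi$ restricted to $]\!]\tau_k,\tau_{k+1}[\![$ is left-continuous and adapted, hence predictable. Gluing countably many such pieces, with $\tau_k\to\infty$, gives a predictable RCLL process with locally finitely many jumps.

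\emph{Step 3: continuity and self-financing across $\tau_k$.} By Step 1 and \eqref{eq: tps alpha},
\[
    V^\psi(\tau_k;\nu)=\alpha_kG\big(\nu(\tau_k)\big)=V^\psi(\tau_k-;\nu),
\]
where $V^\psi(\tau_k-)$ exists as a finite positive limit by the closed form from the previous interval. So $V^\psi$ is continuous. For the self-financing identity, the jump identity for products gives
\[
    0=\Delta(\psi\cdot\nu)(\tau_k)=\psi(\tau_k-)\cdot\Delta\nu(\tau_k)+\Delta\psi(\tau_k)\cdot\nu(\tau_k).
\]
The first term is the jump of $\int\psi(s-)\cdot d\nu$ and the second is the correction term in \eqref{self-financing}, so together the jump contributions sum to zero, matching $\Delta V^\psi=0$. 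Combining this with the continuous increments from Step 1 and summing over $k$ up to time $t$, which involves finitely many terms, yields \eqref{self-financing}.

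\emph{Main obstacle.} The delicate point is the bookkeeping in Step 1: checking that $\psi^{(k)}\cdot\nu$ has the positive closed form, so that $V^\psi(\tau_{k+1}-)$ is a finite limit and $\alpha_{k+1}$ is well defined. This amounts to reproducing Fernholz's master-formula computation on a restarted interval. It must use $Q^{(k)}$ with the integrand $\eta^{(k)}(s)$ rather than $\eta^{(k)}(s-)$, which is harmless because the interval is continuous. The remaining measurability checks in Step 2 are routine given the predictability of $\tau_k$.
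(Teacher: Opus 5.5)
Your proposal is correct and follows essentially the same route as the paper: on each $[\tau_k,\tau_{k+1})$ you identify the wealth as $\alpha_k G(\nu)\exp\big(\int_{\tau_k}^{\cdot} d\Gamma^{G,c}/G(\nu)\big)$, obtain continuity at $\tau_k$ from the choice of $\alpha_k$, and use $\psi(\tau_k-)\cdot\Delta\nu(\tau_k)+\Delta\psi(\tau_k)\cdot\nu(\tau_k)=0$ to cancel the jump terms in the self-financing identity. The only difference is cosmetic: you import the closed form from \cite[Proposition~4.8]{karatzas2017trading} and read the interval self-financing directly off the definition of $Q^{(k)}$, whereas the paper derives both from an explicit integration-by-parts computation for $G(\nu)$ times the exponential factor.
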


\begin{definition}[Multiplicatively generated trading strategy] \label{def: multiplicatively generated trading strategy}
    The trading strategy $\psi$ of \eqref{eq: multiplicatively generated trading strategy} is called the \emph{multiplicatively generated trading strategy} by $G$ with respect to $\nu$.
\end{definition}

\begin{proposition}[Multiplicative generation] \label{prop: multiplicatively generated trading strategy theorem}
    Suppose that $G : (0, \infty)^d \to (0, \infty)$ is twice continuously differentiable with $1/G$ locally bounded. The log-value process of the multiplicatively generated trading strategy $\psi$ by $G$ satisfies
    \begin{equation} \label{eq: value process of multiplicatively generated trading strategy}
        \log V^\psi(\cdot;\nu) = (\log G)^c \big(\nu(\cdot)\big) + \int_0^{\cdot} \frac{d\Gamma^{G, c}(s)}{G\big(\nu(s)\big)},
    \end{equation}
    where, in parallel with~\eqref{eq: G^c definition} but applied to $\log G$,
    \[
        (\log G)^c\big(\nu(t)\big) := \log G \big(\nu(t)\big) - \sum_{0 < s \le t} \Delta\log G\big(\nu(s)\big), \qquad t \ge 0
    \]
    is the continuous part of $\log G(\nu(\cdot))$.
\end{proposition}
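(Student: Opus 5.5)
The plan is to prove \eqref{eq: value process of multiplicatively generated trading strategy} interval by interval. The restart normalizations $\alpha_k$ will then stitch the pieces together, and the jump terms of $\log G(\nu)$ will be collected at the end. Throughout, we use Proposition~\ref{prop: multiplicatively generated trading strategy}: $\psi$ is a trading strategy, so $V^\psi(\cdot;\nu)$ is continuous and satisfies \eqref{self-financing}.

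\textbf{Step 1: the local value on one interval.} Fix $k\ge0$ and $t\in[\tau_k,\tau_{k+1})$. Since $\mathbf 1\cdot\nu\equiv1$, definition \eqref{eq: multiplicatively generated trading strategy} gives
\[
V^\psi(t;\nu)=\alpha_k\Big((\eta^{(k)}\cdot\nu)(t)-Q^{(k)}(t)-b_G\big(\nu(\tau_k)\big)\Big).
\]
Using \eqref{eq: tps Q} and $(\eta^{(k)}\cdot\nu)(\tau_k)=\nabla G(\nu(\tau_k))\cdot\nu(\tau_k)$, this becomes
\[
V^\psi(t;\nu)=\alpha_k\Big(G\big(\nu(\tau_k)\big)+\int_{\tau_k}^t\eta^{(k)}(s)\cdot d\nu(s)\Big)=:\alpha_k U^{(k)}(t).
\]
Now set $E(t):=\exp\big(\int_{\tau_k}^t d\Gamma^{G,c}/G(\nu)\big)$, a continuous finite-variation process on this interval. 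The classical argument of \cite[Proposition~4.8]{karatzas2017trading} applies on the interval, because $\nu$ is continuous there. By It\^o's formula,
\[
dG(\nu)=\nabla G(\nu)\cdot d\nu-d\Gamma^{G,c},
\]
and hence by the product rule
\[
d\big(G(\nu)E\big)=E\,\nabla G(\nu)\cdot d\nu-E\,d\Gamma^{G,c}+G(\nu)E\,\frac{d\Gamma^{G,c}}{G(\nu)}=\eta^{(k)}\cdot d\nu .
\]
Consequently $U^{(k)}(t)=G(\nu(t))E(t)$, that is,
\[
\log V^\psi(t;\nu)=\log\alpha_k+\log G\big(\nu(t)\big)-\log G\big(\nu(\tau_k)\big)+\log G\big(\nu(\tau_k)\big)+\int_{\tau_k}^t\frac{d\Gamma^{G,c}}{G(\nu)}.
\]
Here positivity of $G$ and local boundedness of $1/G$ make $E$ well defined and the logarithm legitimate. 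Note that $\Gamma^{G,c}$ on $(\tau_k,t]$ involves $\nu(s-)=\nu(s)$ except at the left endpoint, which carries no $d[\nu]^c$ mass.

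\textbf{Step 2: the restart identity.} Writing the result of Step 1 as
\[
\log V^\psi(t)=\log\alpha_k+\log G\big(\nu(t)\big)+\int_{\tau_k}^t\frac{d\Gamma^{G,c}}{G(\nu)},
\]
we use \eqref{eq: tps alpha} to get $\log\alpha_k=\log V^\psi(\tau_k-)-\log G(\nu(\tau_k))$ for $k\ge1$, and $\log\alpha_0=0$. Then we check the claim by induction on $k$. Assume \eqref{eq: value process of multiplicatively generated trading strategy} holds on $[0,\tau_k)$. Letting $t\uparrow\tau_k$, continuity of the integral term gives
\[
\log V^\psi(\tau_k-)=\log G\big(\nu(\tau_k-)\big)-\sum_{0<s<\tau_k}\Delta\log G\big(\nu(s)\big)+\int_0^{\tau_k}\frac{d\Gamma^{G,c}}{G(\nu)}.
\]
Substituting this into the formula on $[\tau_k,\tau_{k+1})$ produces
\[
\log G\big(\nu(t)\big)-\Big(\log G\big(\nu(\tau_k)\big)-\log G\big(\nu(\tau_k-)\big)\Big)-\sum_{0<s<\tau_k}\Delta\log G\big(\nu(s)\big)+\int_0^t\frac{d\Gamma^{G,c}}{G(\nu)} .
\]
This is exactly $(\log G)^c(\nu(t))+\int_0^t d\Gamma^{G,c}/G(\nu)$, since $\nu$ jumps only at the $\tau_j$ and $\tau_k\to\infty$. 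The base case $k=0$ is Step 1 itself, with $\alpha_0=1$ and $\log G(\nu(0))$ as the initial value.

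\textbf{Main obstacle.} The only delicate point is Step 1: identifying $U^{(k)}=G(\nu)E$ with the correct left-limit and endpoint conventions in $\Gamma^{G,c}$ and in the integral $\int\eta^{(k)}\cdot d\nu$ on the half-open interval. The jump of $\nu$ at $\tau_k$ itself must be excluded, which is why the local integral starts at $\tau_k$ with the post-event state. I would handle this by working with the continuous process $\nu$ restricted to $[\tau_k,\tau_{k+1})$ and applying the continuous It\^o formula there. The bookkeeping at the $\tau_k$ in Step 2 is then purely algebraic.
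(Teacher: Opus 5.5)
Your proposal is correct and takes the same route as the paper. Both rest on the interval identity $V^\psi(t;\nu)=\alpha_k\,G(\nu(t))\,K^{(k)}(t)$ on $[\tau_k,\tau_{k+1})$, which the paper obtains in the proof of Proposition~\ref{prop: multiplicatively generated trading strategy} by the same product-rule computation as your Step 1, together with $\alpha_k=V^\psi(\tau_k-;\nu)/G(\nu(\tau_k))$; your induction on $k$ is equivalent to the paper's iteration of $\log\alpha_k-\log\alpha_{k-1}=-\Delta\log G(\nu(\tau_k))+\int_{\tau_{k-1}}^{\tau_k}d\Gamma^{G,c}/G(\nu)$, and the positivity $\alpha_k>0$ needed for the logarithms follows inductively from that identity since $G>0$ and $K^{(k-1)}(\tau_k-)>0$.
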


\begin{corollary}[Multiplicatively generated portfolio] \label{coro: representation of multiplicatively generated trading strategy balanced case}
    The portfolio corresponding to $\psi$ in Definition~\ref{def: multiplicatively generated trading strategy} admits the representation
    \begin{equation} \label{eq: tps related portfolio}
        \pi^\psi_i(t) = \nu_i(t)\Bigg( 1 + \frac{\partial_i G\big(\nu(t)\big) - \sum_{j=1}^d \nu_j(t)\,\partial_j G\big(\nu(t)\big)}{G\big(\nu(t)\big)} \Bigg), \qquad i = 1, \dots, d, \quad t \ge 0,
    \end{equation}
    which coincides with the capitalization-based setting of \cite[Eq (4.8)]{karatzas2017trading}, evaluated at $\nu(t)$. If $G$ is balanced as in Definition~\ref{def: balanced}, then the representation \eqref{eq: tps related portfolio} reduces to
    \begin{equation*}
        \pi^\psi_i(t) = \nu_i(t)\,\partial_i \log G\big(\nu(t)\big), \qquad i = 1, \dots, d, \quad t \ge 0.
    \end{equation*}
\end{corollary}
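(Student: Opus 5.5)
The plan is to compute $\pi^\psi_i(t)=\psi_i(t)\nu_i(t)/V^\psi(t;\nu)$ directly on each inter-event interval $[\tau_k,\tau_{k+1})$. There $\psi$ is the classical multiplicative construction restarted at $\tau_k$ and rescaled by $\alpha_k$. Since $\alpha_k>0$ is a common factor of numerator and denominator, it cancels. So it suffices to identify the value of the unscaled local strategy $\eta^{(k)}-Q^{(k)}-b_G(\nu(\tau_k))$ and express its common subtracted scalar in terms of $G(\nu(t))$ and the exponential factor.

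\textbf{Step 1: the local value.} Write $E_k(t):=\exp\big(\int_{\tau_k}^t d\Gamma^{G,c}(s)/G(\nu(s))\big)$, so that $\eta^{(k)}=\nabla G(\nu)E_k$. As in the proof of Proposition~\ref{prop: multiplicatively generated trading strategy theorem}, which I assume available together with Proposition~\ref{prop: multiplicatively generated trading strategy}, the local value on $[\tau_k,\tau_{k+1})$ equals $G(\nu(t))E_k(t)$. Hence $V^\psi(t;\nu)=\alpha_kG(\nu(t))E_k(t)$. This is consistent with \eqref{eq: value process of multiplicatively generated trading strategy}, because on this interval the jump-removed log terms differ from $\log G(\nu(t))$ only by constants absorbed in $\log\alpha_k$. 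If I wanted to verify this directly, I would apply It\^o's formula to $G(\nu)E_k$ on the interval where $\nu$ is continuous. This gives $d(GE_k)=E_k\nabla G\cdot d\nu=\eta^{(k)}\cdot d\nu$, and the initial value is $G(\nu(\tau_k))$, which matches $\eta^{(k)}\cdot\nu(\tau_k)-b_G(\nu(\tau_k))$.

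\textbf{Step 2: identify the subtracted scalar.} Since $\mathbf 1\cdot\nu\equiv1$, the local value is $\eta^{(k)}\cdot\nu-Q^{(k)}-b_G(\nu(\tau_k))$. Equating this with Step 1 gives
\[
Q^{(k)}(t)+b_G(\nu(\tau_k))=E_k(t)\Big(\sum_j\nu_j(t)\partial_jG(\nu(t))-G(\nu(t))\Big).
\]
Therefore $\psi_i(t)=\alpha_kE_k(t)\big(\partial_iG(\nu(t))-\sum_j\nu_j\partial_jG(\nu(t))+G(\nu(t))\big)$. Dividing $\psi_i\nu_i$ by $V^\psi=\alpha_kE_kG(\nu(t))$ yields \eqref{eq: tps related portfolio} on $[\tau_k,\tau_{k+1})$, including at $t=\tau_k$. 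Since the intervals cover $[0,\infty)$, this holds for all $t$.

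\textbf{Step 3: positivity and the balanced case.} Positivity of $V^\psi$, which Definition~\ref{def: corresponding portfolio} requires, follows from $G>0$, $E_k>0$, and inductively $\alpha_k>0$. If $G$ is balanced, then $\sum_j\nu_j\partial_jG=G$, so the bracket becomes $\partial_iG/G=\partial_i\log G$. I expect the only delicate point to be the behaviour exactly at the restart times $\tau_k$. There one must use the post-adjustment $\psi(\tau_k)$, with normalization $\alpha_k$ and $E_k(\tau_k)=1$, rather than left limits. The computation above handles this because it is carried out on the closed-left interval.
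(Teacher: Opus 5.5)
Your proposal is correct and follows essentially the same route as the paper. On each interval $[\tau_k,\tau_{k+1})$ you use the wealth identity $V^\psi(t;\nu)=\alpha_k G(\nu(t))K^{(k)}(t)$, solve for the common scalar $Q^{(k)}(t)+b_G(\nu(\tau_k))=K^{(k)}(t)\big(\sum_j\nu_j(t)\partial_jG(\nu(t))-G(\nu(t))\big)$, and substitute into $\psi_i\nu_i/V^\psi$. Your explicit check that $V^\psi>0$, which is needed for the portfolio to be defined, is a small point the paper leaves implicit.
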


\begin{remark}[Reduction to the continuous case] \label{rem: tps continuous reduction}
    If $N$ is constant, all adjustment multipliers equal $\mathbf{1}$ and the price weights $\nu$ are continuous. An adjustment sequence satisfying Definition~\ref{def: adjustment sequence} may still be retained as a sequence of auxiliary restart times. The wealth-preserving restarts do not change the globally generated strategy, so the construction coincides with the classical multiplicative generation of \cite[Proposition~4.8]{karatzas2017trading}. In particular, $(\log G)^c=\log G$, and \eqref{eq: value process of multiplicatively generated trading strategy} reduces to Fernholz's classical master formula \cite[(11.2)]{FK_survey}.
\end{remark}

\begin{example}[Diversity-weighted portfolios]\label{ex: diversity weighted}
    Consider the diversity-weighted function for $0 < p < 1$
    \[
        \mathbf{D}_p(x) := \bigg( \sum_{i=1}^d x_i^p \bigg)^{\frac 1p}, \qquad x \in (0,\infty)^d.
    \]
    The function $\mathbf{D}_p$ is smooth, positive, concave, and balanced in the sense of Definition~\ref{def: balanced}, and $1/\mathbf{D}_p(\nu)$ is locally bounded under the standing positivity assumption of $\nu$ stated in Section~\ref{sec: price market model}. Definition~\ref{def: multiplicatively generated trading strategy} therefore applies: the multiplicatively generated trading strategy $\psi$ by $\mathbf{D}_p$ is well-defined, and Corollary~\ref{coro: representation of multiplicatively generated trading strategy balanced case} reduces its portfolio to Fernholz's $p$-diversity weights of \cite[Example~3.4.4]{fernholz2002stochastic} in the price-based framework
    \begin{equation*}
        \pi^{\psi, (p)}_i(\cdot) = \frac{\nu_i^p(\cdot)}{\sum_{j=1}^d \nu_j^p(\cdot)}, \qquad i = 1, \dots, d.
    \end{equation*}
    We again note that these portfolio weights depend only on the current state $\nu(t)$ and carry no adjustment history dependence. Proposition~\ref{prop: multiplicatively generated trading strategy theorem} specializes to
    \begin{equation*}
        \log V^\psi(t;\nu) = (\log \mathbf{D}_p)^c \big(\nu(t)\big) + \int_0^t \frac{d\Gamma^{\mathbf{D}_p, c}(s)}{\mathbf{D}_p \big(\nu(s)\big)}, \qquad t \ge 0,
    \end{equation*}
    with the wealth-preserving rebalancing at each adjustment event $\tau_k$ ensuring that the portfolio remains at $\pi^{\psi, (p)}(\tau_k)$.
\end{example}

\medskip

\subsection{Comparison with the capitalization-based framework} \label{sec: comparison}

As we study two different choices of benchmarks, capitalization-weighted and price-weighted indices, it is natural to ask how trading strategies of the price-based framework compare with those of the capitalization-based one. The two frameworks turn out to be linked by an explicit correspondence of strategies, and the induced benchmark conversion does not depend on the strategy. To make this precise, we examine the trading strategies with respect to $P$ and those with respect to $S$.

Suppose that an investor holds $a(t) := (a_1(t), \dots, a_d(t))'$ shares of the constituent assets at time $t \ge 0$. The value of this investor's holdings at time $t$ is then given by
\[
    a(t) \cdot P(t) = \sum_{i=1}^d a_i(t) P_i(t) = \sum\limits_{i=1}^d \frac{a_i(t)}{N_i(t)} S_i(t).
\]
This identity shows that the `strategy' $a(t)$ with respect to $P$ corresponds to the `strategy' $a(t) \odot N(t)^{-1}$ with respect to $S$; here and in what follows, $A^{-1}$ and $A \odot B$ for $d$-dimensional vectors $A$ and $B$ represent the component-wise inverse and the component-wise product, respectively. We formalize this correspondence in the following result, and prove it in Appendix~\ref{app: proofs fgpts}.

\begin{proposition}[Change of basis] \label{prop: change of basis}
    For an $\mathbb{R}^d$-valued predictable process $\vartheta$, the following conditions are equivalent:
    \begin{enumerate}
        \item[\textup{(i)}] $\vartheta$ is a trading strategy with respect to $P$.
        \item[\textup{(ii)}] $\vartheta \odot N^{-1}$ is a trading strategy with respect to $S$.
    \end{enumerate}
    When (i)--(ii) hold, their value processes coincide:
    \begin{equation} \label{eq: change of basis wealth}
        V^{\vartheta}(\cdot;P) = V^{\vartheta \odot N^{-1}}(\cdot;S).
    \end{equation}
    In particular, the map $\vartheta \mapsto \vartheta \odot N^{-1}$ is a bijection between $\mathcal{T}(P)$ and $\mathcal{T}(S)$, the sets of trading strategies with respect to $P$ and $S$, respectively.
\end{proposition}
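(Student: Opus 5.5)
Set $\theta := \vartheta \odot N^{-1}$. We will check the conditions of Definition~\ref{def: trading strategy} for $\theta$ against $S$ and for $\vartheta$ against $P$ side by side, using that $N$ is locally constant, positive, RCLL and jumps only at the predictable times $\tau_k$.

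\emph{Structural properties.} Since $N$ and $N^{-1}$ are RCLL, positive, and piecewise constant with jumps only at $\tau_k$, $\theta$ is RCLL with locally finitely many jumps if and only if $\vartheta$ is. For predictability, write $N(t)=N(0)+\sum_{k\ge1}\Delta N(\tau_k)\mathbf{1}_{\{\tau_k\le t\}}$. Each $\tau_k$ is predictable and $\Delta N_i(\tau_k)=(m_{k,i}-1)N_i(\tau_k-)$ is $\mathcal F(\tau_k-)$-measurable, so each summand $\Delta N(\tau_k)\mathbf{1}_{[\![\tau_k,\infty[\![}$ is predictable. Hence $N$ and $N^{-1}$ are predictable, and so is $\theta$ whenever $\vartheta$ is, and conversely. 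Integrability transfers as well: on each $[\tau_k,\tau_{k+1})$ we have $dP_i=N_i^{-1}dS_i$ with $N_i^{-1}$ constant, and the jumps of $P$ are finitely many, so $\vartheta$ is $P$-integrable if and only if $\theta(s-)$ is $S$-integrable. Finally, the value processes agree pointwise, $\theta\cdot S=\vartheta\cdot P$, so one is continuous if and only if the other is, and the two values are equal. This gives \eqref{eq: change of basis wealth} once self-financing is checked.

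\emph{Self-financing.} This is the main step. Assume $V:=\vartheta\cdot P=\theta\cdot S$ is continuous. Because $S$ is continuous, the $S$-condition reads $dV=\theta(s-)\cdot dS$. The $P$-condition, whose jump terms cancel automatically when $V$ is continuous, reads $dV^c=\vartheta(s-)\cdot dP^c$ in its continuous part.

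We compare the two integrals interval by interval. On $(\tau_k,\tau_{k+1})$, $N$ is constant, so $\vartheta(s-)\cdot dP(s)=\theta(s-)\cdot dS(s)$. At $\tau_k$ itself $S$ does not jump, so the $S$-integral charges no mass there. The continuous part of $\int\vartheta(s-)\cdot dP$ is $\int\vartheta(s-)\odot N(s-)^{-1}\cdot dS=\int\theta(s-)\cdot dS$, since $P=S\odot N^{-1}$ and the integration-by-parts formula gives $dP=N(s-)^{-1}\odot dS+S\,d(N^{-1})$, where the last term is pure jump. Combined with the identity $\vartheta(s-)\cdot\Delta P(s)+\Delta\vartheta(s)\cdot P(s)=\Delta V(s)=0$, this shows that \eqref{self-financing} for $(\vartheta,P)$ is equivalent to \eqref{self-financing} for $(\theta,S)$. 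In the $S$-version the jump sum $\sum\Delta\theta\cdot S$ also vanishes, because $\Delta\theta(s)\cdot S(s)=\Delta V(s)=0$.

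The delicate point is to write this integration-by-parts decomposition of $dP$ rigorously and to confirm that no covariation term $[N^{-1},S]$ appears. That term vanishes because $N^{-1}$ is pure-jump of finite variation and $S$ is continuous. The bijection between $\mathcal T(P)$ and $\mathcal T(S)$ then follows, since $\theta\mapsto\theta\odot N$ is the inverse map.
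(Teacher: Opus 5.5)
Your proposal is correct and follows essentially the same route as the paper. Both arguments use the product-rule decomposition $dP = N(s-)^{-1}\odot dS + S\,d(N^{-1})$ with vanishing covariation $[S,N^{-1}]$, then cancel the jump terms via $\vartheta(s-)\cdot\Delta P(s)+\Delta\vartheta(s)\cdot P(s)=\Delta V(s)=0$ and $\Delta\theta(s)\cdot S(s)=\Delta V(s)=0$; your explicit checks of predictability (through the $\mathcal F(\tau_k-)$-measurability of $\Delta N(\tau_k)$) and integrability, and your handling of both directions at once, fill in details the paper leaves implicit.
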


By Propositions~\ref{prop: equivalence} and~\ref{prop: change of basis}, the relative wealth of a trading strategy $\vartheta$ with respect to $P$ in two bases $\mu$ and $\nu$ is given as
\[
    V^{\kappa\vartheta}(\cdot;\nu) = \frac{V^\vartheta(\cdot;P)}{\widetilde\Sigma(\cdot)},
    \qquad
    V^{\vartheta\odot N^{-1}}(\cdot;\mu) = \frac{V^\vartheta(\cdot;P)}{\Lambda(\cdot)},
\]
and dividing the two identities, provided that $V^{\vartheta}(\cdot;P)>0$, yields
\begin{equation} \label{eq: cross basis identity}
    \frac{V^{\vartheta\odot N^{-1}}(\cdot;\mu)}{V^{\kappa\vartheta}(\cdot;\nu)} = \frac{\widetilde\Sigma(\cdot)}{\Lambda(\cdot)}.
\end{equation}
The right-hand side is independent of $\vartheta$, so the change of basis converts the relative wealth of the same investment between two benchmarks. Taking logarithms and normalizing at $t=0$, the cross-basis log-wealth difference can be quantified by
\begin{equation} \label{eq: R definition}
    R(t) := \log \frac{\widetilde\Sigma(t)/\widetilde\Sigma(0)}{\Lambda(t)/\Lambda(0)}, \qquad t \ge 0,
\end{equation}
the cumulative log performance of the divisor-adjusted price-weighted basket relative to the capital\-ization-weighted one. For a fixed constituent universe, this identity provides the benchmark conversion used in Section~\ref{sec: numerics}. This correspondence concerns the same absolute wealth strategy expressed in two bases. It does not identify the strategies generated separately by evaluating a given function at $\mu$ and at $\nu$, which generally have different allocations.

\bigskip

\section{Empirical results} \label{sec: numerics}

We examine the price-based framework using daily NYSE data from 2002 to 2021. We construct both price-weighted and capitalization-weighted indices on each of two quarterly selected universes: the 500 stocks with the highest nominal prices and the 500 stocks with the largest market capitalizations. Constituents are selected using information available at each re-selection date, rather than fixing the universe retrospectively using the constituents at the end of the sample. In each universe, we use both the market weight vector $\mu$ of \eqref{def:mu} and the price weight vector $\nu$ of \eqref{def:nu} for generating trading strategies, to distinguish the weighting rule from the universe-selection rule.

This section proceeds in three parts: Section~\ref{sec: data} describes the data and universe construction, compares the cross-sectional distributions of capitalization and price weights, and presents empirical split divisors; Section~\ref{sec: vw vs pw} compares the price- and capitalization-weighted benchmarks, decomposes their performance gap into weighting and selection effects, and examines their cross-sectional diversity; and Section~\ref{sec: numerical fgp} examines the functionally generated diversity-weighted portfolios of Proposition~\ref{prop: multiplicatively generated trading strategy theorem}, with emphasis on the role of the jump correction terms induced by adjustment events.

\medskip

\subsection{Data and universes} \label{sec: data}

We use daily prices and outstanding shares from the Center for Research in Security Prices~(CRSP) for common stocks listed on the New York Stock Exchange~(NYSE) over 2002-01-02 to 2021-12-31, comprising $T+1=5{,}036$ trading-day observations, indexed by $t=0,\dots,T$. We write $P_i(t)$ for the closing price of stock $i$, $N_i(t)$ for its outstanding shares, and $S_i(t) = P_i(t)\,N_i(t)$ for its capitalization at day $t$. The capitalization weights $\mu$ and price weights $\nu$ are computed as in \eqref{def:mu} and \eqref{def:nu}, respectively.

The capital distribution curve of \cite[Figure~5.1]{fernholz2002stochastic} plots capitalization weights $\mu_i$ in decreasing order against their ranks on logarithmic axes; the price distribution curve is defined analogously, using $\nu_i$ in place of $\mu_i$. Figure~\ref{fig: rank weight full} shows both families for the full NYSE sample at six dates between 2002 and 2021, with approximately $2{,}400$--$2{,}700$ stocks at each date. Compared with the capital distribution curves, the price distribution curves exhibit somewhat greater variation across the selected years and are closer to linear among the largest weights, with less pronounced concavity on the log-log scale.

\begin{figure}[!htbp]
    \centering
    \includegraphics[width=0.98\textwidth]{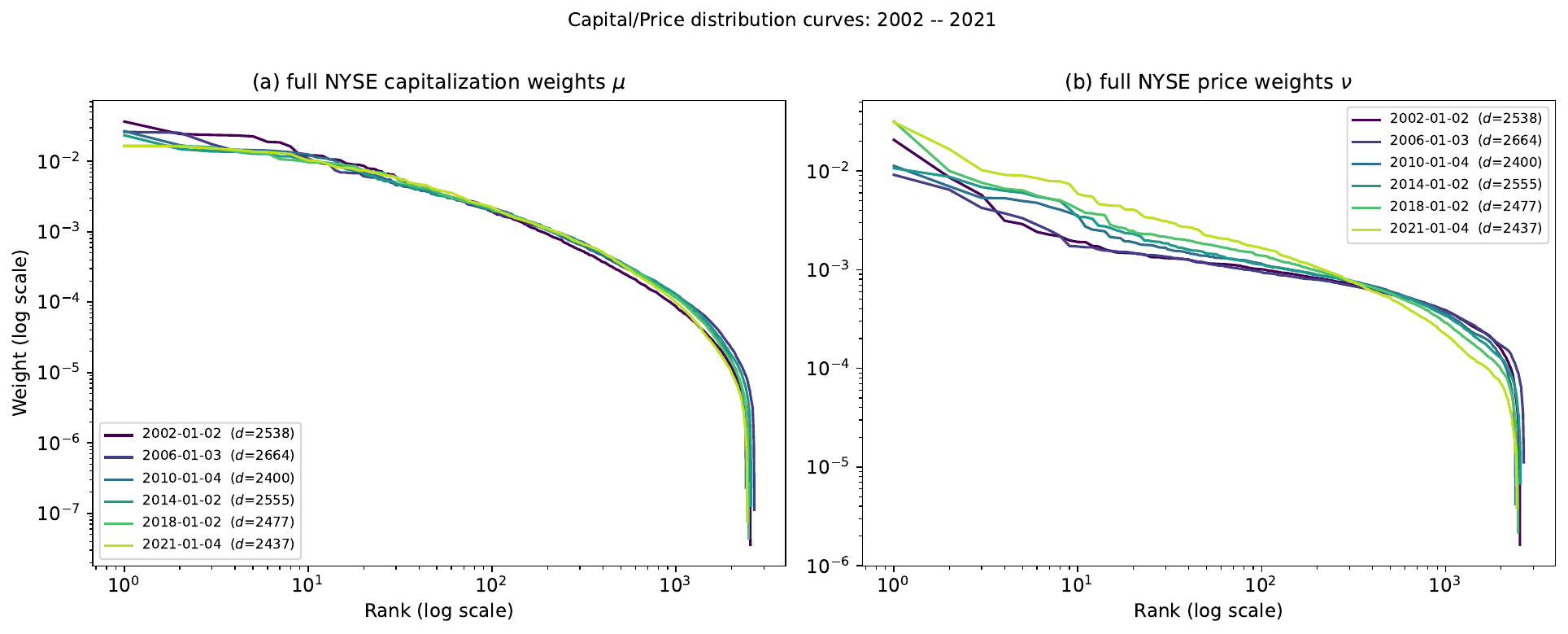}
    \caption{Capital (left) and price (right) distribution curves for the NYSE sample: 2002--2021.}
    \label{fig: rank weight full}
\end{figure}

For the benchmark and portfolio comparisons, we select two universes quarterly, one by nominal share price and the other by market capitalization. Let $\hat t_m$ be the first trading day of each quarter and $s_m$ the preceding trading day. The incoming universes are selected at the close of $s_m$ as
\[
    U^P_m := \big\{\text{top-}500 \text{ ranked according to } P_i(s_m)\big\},
    \quad
    U^S_m := \big\{\text{top-}500 \text{ ranked according to } S_i(s_m)\big\}.
\]
The return ending at $s_m$ uses the outgoing universe, and the return from the close of $s_m$ to the close of $\hat t_m$ uses the incoming universe. We select the initial universes at the first sample close, when all index levels are normalized to one.

Each universe is held fixed within its quarter and renewed at the beginning of the next quarter, giving an initial selection and $79$ rotations over $20$ years. The two universes overlap substantially but never coincide, sharing on average $268$ of $500$ names per quarter. An average of $51$ constituents are replaced per quarter in the price-selected universe, compared with $23$ in the capitalization-selected universe. Each universe is thus an \emph{open market} in the sense of \cite{Karatzas:Kim2} with quarterly entry and exit, rather than a \emph{closed market} with constant membership.

The empirical universes therefore differ from the fixed constituent market of Sections~\ref{sec: setup}--\ref{sec: FGPTS}. At each reconstitution, chain-linking preserves the outgoing wealth level while the incoming basket is initialized on the new universe. Constituent replacement is thus treated as a wealth-preserving switch, separately from the share-unit adjustment handled by the split divisor.

To visualize the distributional difference underlying these weights, Figure~\ref{fig: scatter} shows the cross-section $(\mu_i, \nu_i)$ on a log-log scale at the terminal date. Both weight vectors are computed on the stocks observed at that date in the union universe $U^P_m \cup U^S_m$ and normalized over this common sample. Points off the diagonal $\mu_i = \nu_i$ reflect heterogeneity in outstanding shares $N_i$: stocks above the line have fewer outstanding shares than the price-weighted average share count $\sum_j\nu_jN_j$ in the plotted sample, while those below have more. The dispersion spans several orders of magnitude in both directions, showing that the capitalization and price rankings need not coincide. This difference motivates analyzing the two weighting schemes separately.

\begin{figure}[!htbp]
    \centering
    \includegraphics[width=0.50\textwidth]{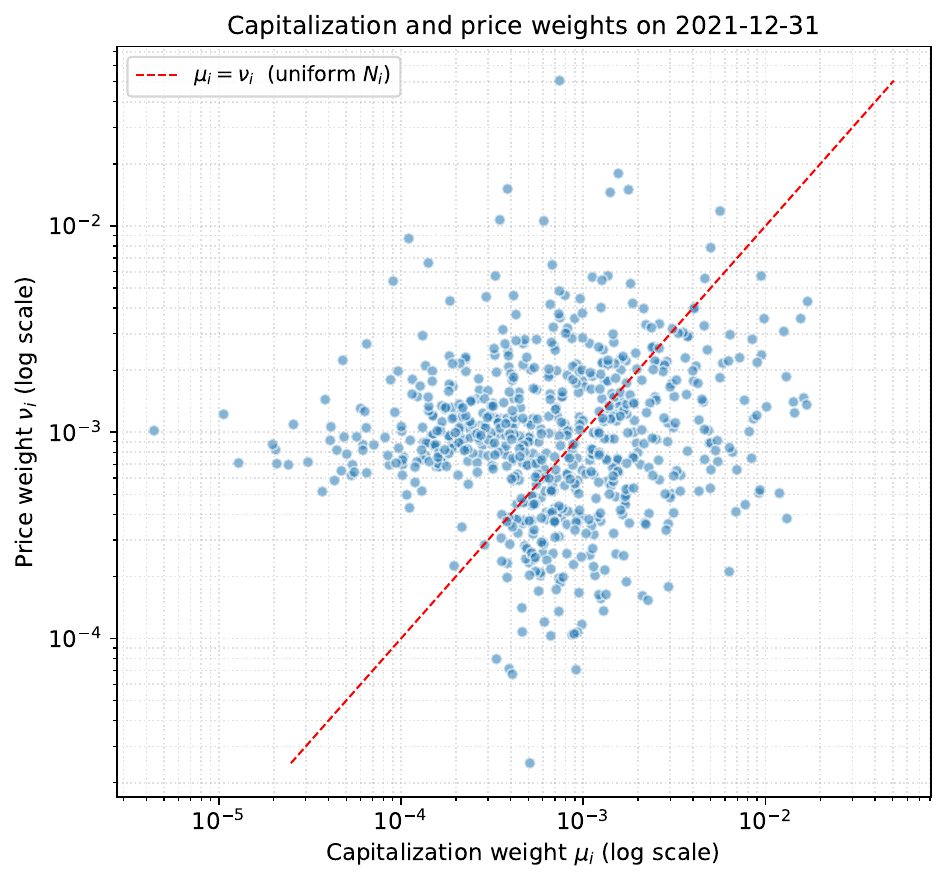}
    \caption{Capitalization and price weights $(\mu_i, \nu_i)$ for stocks observed at the terminal date in $U^P_m \cup U^S_m$, on a log-log scale. The dashed line indicates $\mu_i = \nu_i$.}
    \label{fig: scatter}
\end{figure}

Adjustment events are first inferred from joint changes in prices and outstanding shares. For each stock $i$ and day $t$, we compute the outstanding shares ratio $\rho_i(t) := N_i(t)/N_i(t-1)$ and the capitalization ratio $\eta_i(t) := S_i(t)/S_i(t-1)$. An $\ell$-for-$1$ split scales $N_i$ by $\ell$ and $P_i$ by $1/\ell$ at the event. To screen for this offsetting change in the daily observations, we initially assign
\[
    m_i(t) =
    \begin{cases}
        ~~1 & \text{if } |\rho_i(t)-1| \le \tau_{\mathrm{SH}}, \\[2pt]
        ~~1 & \text{if } |\rho_i(t)-1| > \tau_{\mathrm{SH}} \text{ and } |\eta_i(t)-1| \ge \tau_{\mathrm{MC}}, \\[2pt]
        \rho_i(t) & \text{if } |\rho_i(t)-1| > \tau_{\mathrm{SH}} \text{ and } |\eta_i(t)-1| < \tau_{\mathrm{MC}},
    \end{cases}
\]
with thresholds $\tau_{\mathrm{SH}}=\tau_{\mathrm{MC}}=0.15$. These are practical screening parameters: the share count threshold excludes small changes in outstanding shares while admitting split ratios such as $2$-for-$1$, $3$-for-$1$, $3$-for-$2$, $4$-for-$3$, and $5$-for-$4$. Among observations exceeding the share count threshold, we identify an adjustment event only when the change in capitalization remains below the capitalization threshold. This criterion is intended to capture a substantial change in outstanding shares that is approximately offset by an opposite change in the share price, while allowing for market price movement on the event day. Observations with substantial changes in both outstanding shares and capitalization may instead reflect share issuance or share cancellations and are not classified as adjustment events by this screen. Accordingly, $m_i(t)=1$ in these cases indicates that no split-type adjustment is applied, not that no corporate action occurred. Identified adjustments affecting different stocks on the same date are grouped into a single multiplier vector $m_k$ in the adjustment sequence $(\tau_k,m_k)$.

Daily changes in observed market capitalization may reflect updates to reported outstanding shares as well as market price movements. Stock splits additionally change the units in which prices are quoted. We therefore introduce \emph{adjusted starting values} for both prices and capitalizations. Fix $M \in \{P, S\}$ and a quarter $m$ for the universe $U^M_m$. For each constituent $i \in U_m^M$ and each trading day $\hat{t}_m \le t < \hat{t}_{m+1}$ with $t \ge 1$, define
\[
    P_i^\circ(t):=\frac{P_i(t-1)}{m_i(t)}, \qquad S_i^\circ(t):=P_i^\circ(t)N_i(t).
\]
On a reconstitution date $t = \hat{t}_{m} \ge 1$, the preceding closing prices are those of the incoming constituents $i \in U_m^M$, evaluated at $s_m$.

For a pure split, \(N_i(t)=m_i(t)N_i(t-1)\), so \(S_i^\circ(t)=S_i(t-1)\), preserving the split invariance of capitalization. We use the same share count in \(S_i^\circ(t)\) and \(S_i(t)\), so the measured return reflects price changes alone. Normalizing the starting prices \(P^\circ(t)\) and capitalizations \(S^\circ(t)\) gives the starting weights \(\nu^\circ(t)\) and \(\mu^\circ(t)\), respectively, which we use in the daily portfolio calculations below.

For the price-weighted benchmark on each universe $U_m^M$, $M\in\{P,S\}$, we construct a split divisor $\kappa^M$ by applying the day's split factors to the preceding closing prices:
\begin{equation} \label{eq: empirical divisor}
    \kappa^M(t)=\kappa^M(t-1)\frac{\sum_{i\in U_m^M}P_i^\circ(t)}{\sum_{i\in U_m^M}P_i(t-1)}, \qquad \kappa^M(0)=1.
\end{equation}
Both sums use the universe active on day $t$. The divisor is initialized at one on the first observation date and remains constant on days without a split adjustment to a current constituent.

Figure~\ref{fig: divisor} displays $\kappa^P$ for the price-selected universe $U^P_m$. Of the $765$ detected events, $753$ events were forward splits and the remaining $12$ were reverse splits, according to whether the adjustment factor is greater or less than one. The predominance of forward splits is reflected in the almost monotonic decrease of $\kappa^P$, from $1$ to about $0.41$ at the end of the sample. The decline is steep until 2008 and flattens afterwards, alongside a decrease in detected adjustment events among constituents of $U^P_m$.

\begin{figure}[!htbp]
    \centering
    \includegraphics[width=0.78\textwidth]{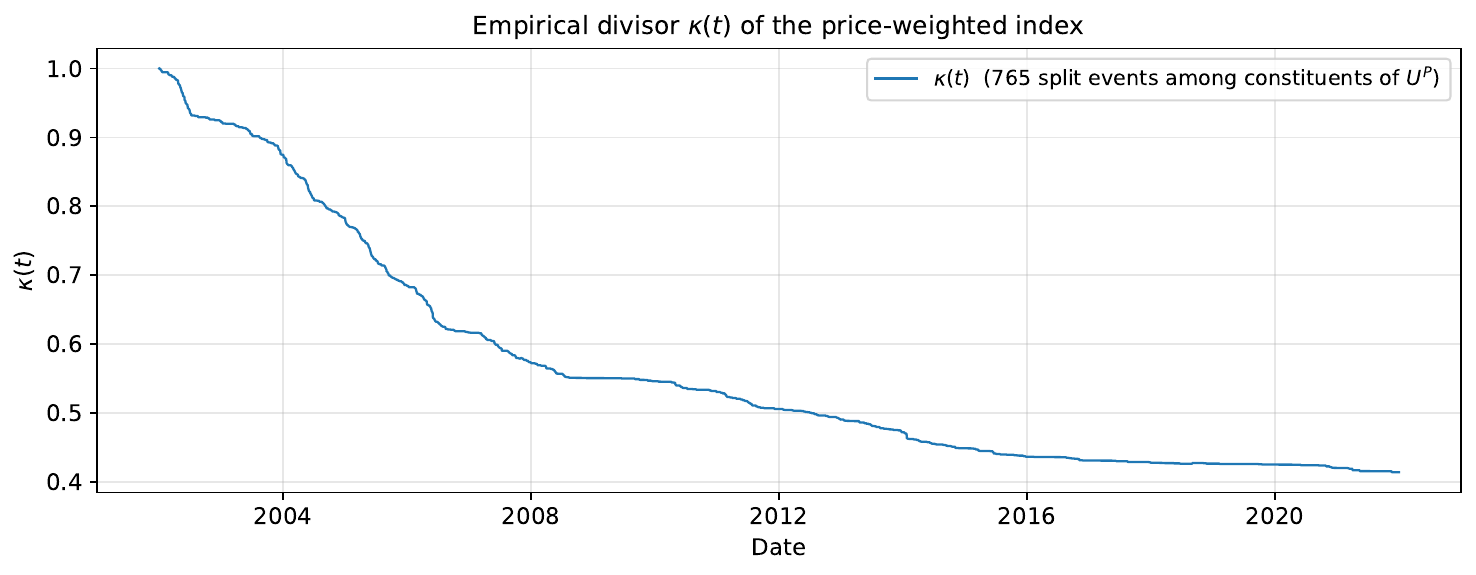}
    \caption{Empirical split divisor $\kappa^P(t)$ for the quarterly price-selected universe $U_m^P$, computed using \eqref{eq: empirical divisor}. The divisor changes only at split adjustments to current constituents.}
    \label{fig: divisor}
\end{figure}

\medskip

\subsection{Reference benchmarks and cross-sectional diversity} \label{sec: vw vs pw}

To compare the two weighting schemes ($\mu$ versus $\nu$) on each universe, we first construct price- and capitalization-weighted reference indices. For each $M\in\{P,S\}$, initialize $I_\mu^M(0)=I_\nu^M(0)=1$ and, for $t=1,\ldots,T$, define the chain-linked reference indices by
\[
    I_\mu^M(t) := I_\mu^M(t-1) \frac{\sum_{i\in U_m^M}S_i(t)}{\sum_{i\in U_m^M}S_i^\circ(t)}, \qquad
    I_\nu^M(t) := I_\nu^M(t-1) \frac{\sum_{i\in U_m^M}P_i(t)}{\sum_{i\in U_m^M}P_i^\circ(t)},
\]
where $m$ is determined by $\hat t_m\le t<\hat t_{m+1}$. We next define
\[
    L^M_w(t):=\log\frac{I^M_w(t)}{I^M_w(0)}, \qquad M\in\{P,S\}, \quad w\in\{\mu,\nu\},
\]
as the cumulative log level. Split divisors remove changes in the price sum caused solely by share-unit adjustments, whereas quarterly chain-linking separately removes level changes caused by replacing constituents. Neither correction removes the market price movement over the trading day. The price-weighted indices $I^P_\nu$ and $I^S_\nu$ use the divisor of Definition~\ref{def: divisor} on their respective universes, whereas the capitalization-weighted indices need none. Figure~\ref{fig: vw pw} compares $L^M_w$ with the two weighting schemes on each universe. Over 2002--2021, the capitalization-weighted indices end at $L^S_\mu(T)=0.97$ and $L^P_\mu(T)=1.10$, and the price-weighted indices at $L^S_\nu(T)=1.45$ and $L^P_\nu(T)=1.61$, so that price weighting ends higher on both universes; all four trace the same crisis drawdowns in 2008 and 2020.

\begin{figure}[!htbp]
    \centering
    \includegraphics[width=0.98\textwidth]{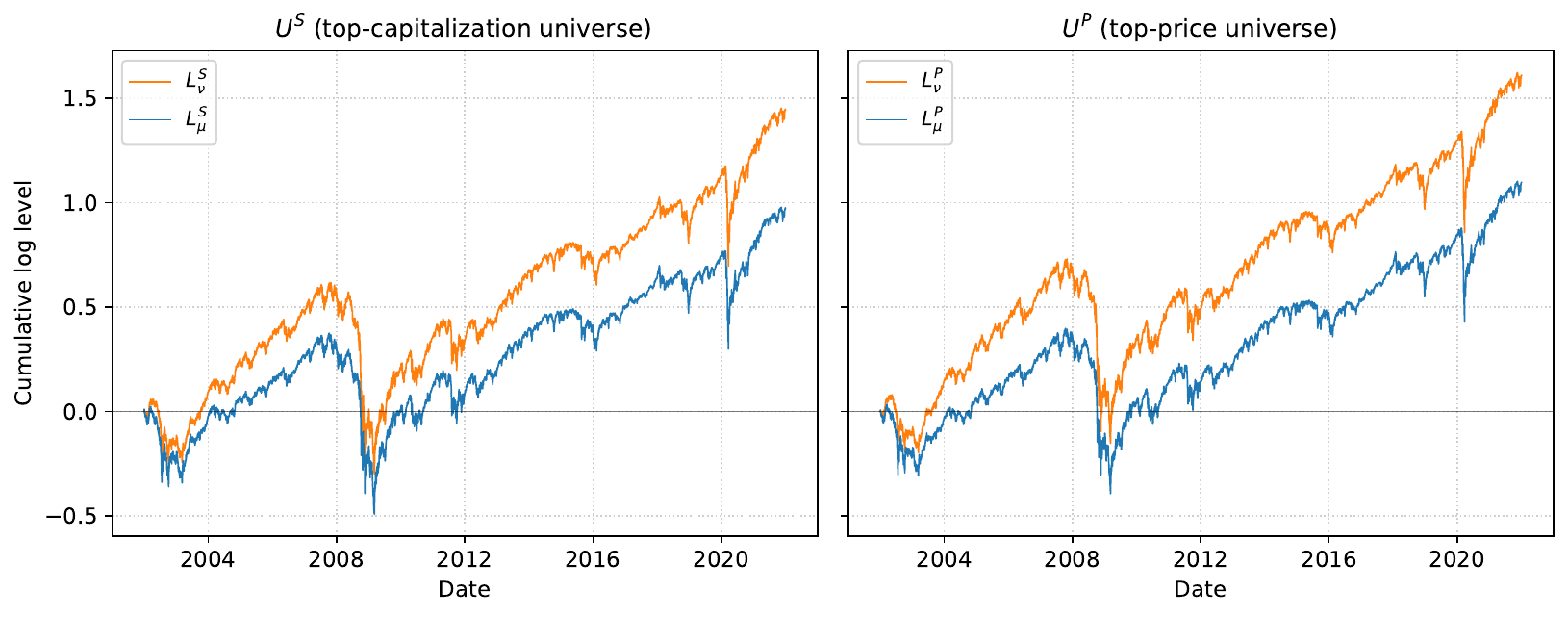}
    \caption{Cumulative log levels $L^M_w$ of the chain-linked reference indices, compared on the same universe: $M = S$ (left) and $M = P$ (right), each with $w = \mu$ (blue) and $w = \nu$ (orange).}
    \label{fig: vw pw}
\end{figure}

The headline comparison of the two frameworks is the gap between the log level $L^P_\nu$ of the price-weighted index on the price universe and the log level $L^S_\mu$ of the capitalization-weighted index on the capitalization universe:
\begin{equation} \label{eq: headline gap}
    D(t) := L^P_\nu(t) - L^S_\mu(t), \qquad 0 \le t \le T,
\end{equation}
which reaches $D(T)=0.63$. The two indices in \eqref{eq: headline gap} differ in two respects at once: the weighting scheme and the universe, so $D$ mixes the effect of price weighting with the effect of selecting high-priced stocks. The four indices allow us to separate the two effects.
Write
\begin{equation}    \label{def:W-C}
    W^P := L^P_\nu - L^P_\mu, \qquad W^S := L^S_\nu - L^S_\mu, \qquad
    C^\mu := L^P_\mu - L^S_\mu, \qquad C^\nu := L^P_\nu - L^S_\nu.
\end{equation}
Here $W^P$ and $W^S$ are the weighting effects, the gain from replacing capitalization weights by price weights on a fixed universe, and $C^\mu$ and $C^\nu$ are the selection effects, the gain from replacing the capitalization universe by the price universe under a fixed weighting scheme. On a fixed universe, the weighting effect is the empirical counterpart of the cross-basis quantity $R$ in \eqref{eq: R definition}, with universe rotation absorbed by chain-linking. The difference $D$ admits two exact decompositions,
\begin{equation} \label{eq: gap decomposition}
    D = W^P + C^\mu = W^S + C^\nu,
\end{equation}
depending on whether the weighting scheme is changed first and the universe second, or the other way around; the two paths differ by the interaction $W^P - W^S = C^\nu - C^\mu$.

\begin{figure}[!htbp]
    \centering
    \includegraphics[width=0.98\textwidth]{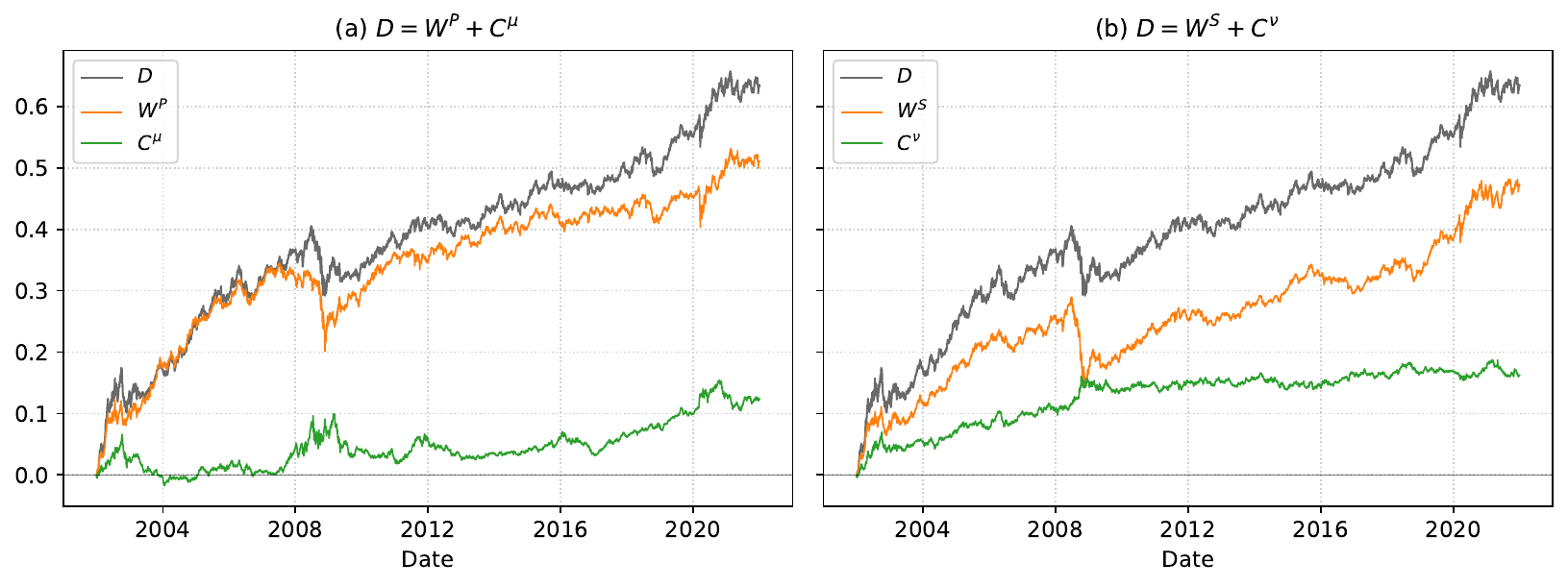}
    \caption{The two decompositions \eqref{eq: gap decomposition} of the gap $D=L^P_\nu-L^S_\mu$ (grey): (a) $D=W^P+C^\mu$ and (b) $D=W^S+C^\nu$.}
    \label{fig: decomposition}
\end{figure}

Figure~\ref{fig: decomposition} shows that the within-universe weighting differences $W^P$ and $W^S$ account for most of $D(T)$, while the universe-selection differences $C^\mu$ and $C^\nu$ make smaller contributions. Moreover, $W^P$ and $W^S$ are similar, so the higher terminal performance of price weighting is observed under both universe-selection rules rather than only among the stocks selected by nominal price. A possible explanation for the larger weighting effects is that changing the weighting scheme alters portfolio diversification more substantially than changing the constituent universe. We therefore examine how the diversity of the benchmark weights varies with the weighting scheme and the constituent universe.

The Gibbs entropy function \eqref{eq:Gibbs-entropy} is a standard measure of diversity in stochastic portfolio theory \cite{fernholz2002stochastic}. For each $M \in \{P,S\}$, write $\mu^M$ and $\nu^M$ for the capitalization weights and the price weights, respectively, computed on $U^M_m$. Figure~\ref{fig: entropy} shows $H(\mu^M(t))$ and $H(\nu^M(t))$ for each universe, together with the gap $H(\nu^M(t))-H(\mu^M(t))$. On the price-selected universe $U_m^P$, price weights have higher entropy than capitalization weights on every observation date. On the capitalization-selected universe $U_m^S$, the same ordering holds for most of the sample but reverses in 2020.

\begin{figure}[!htb]
    \centering
    \includegraphics[width=0.98\textwidth]{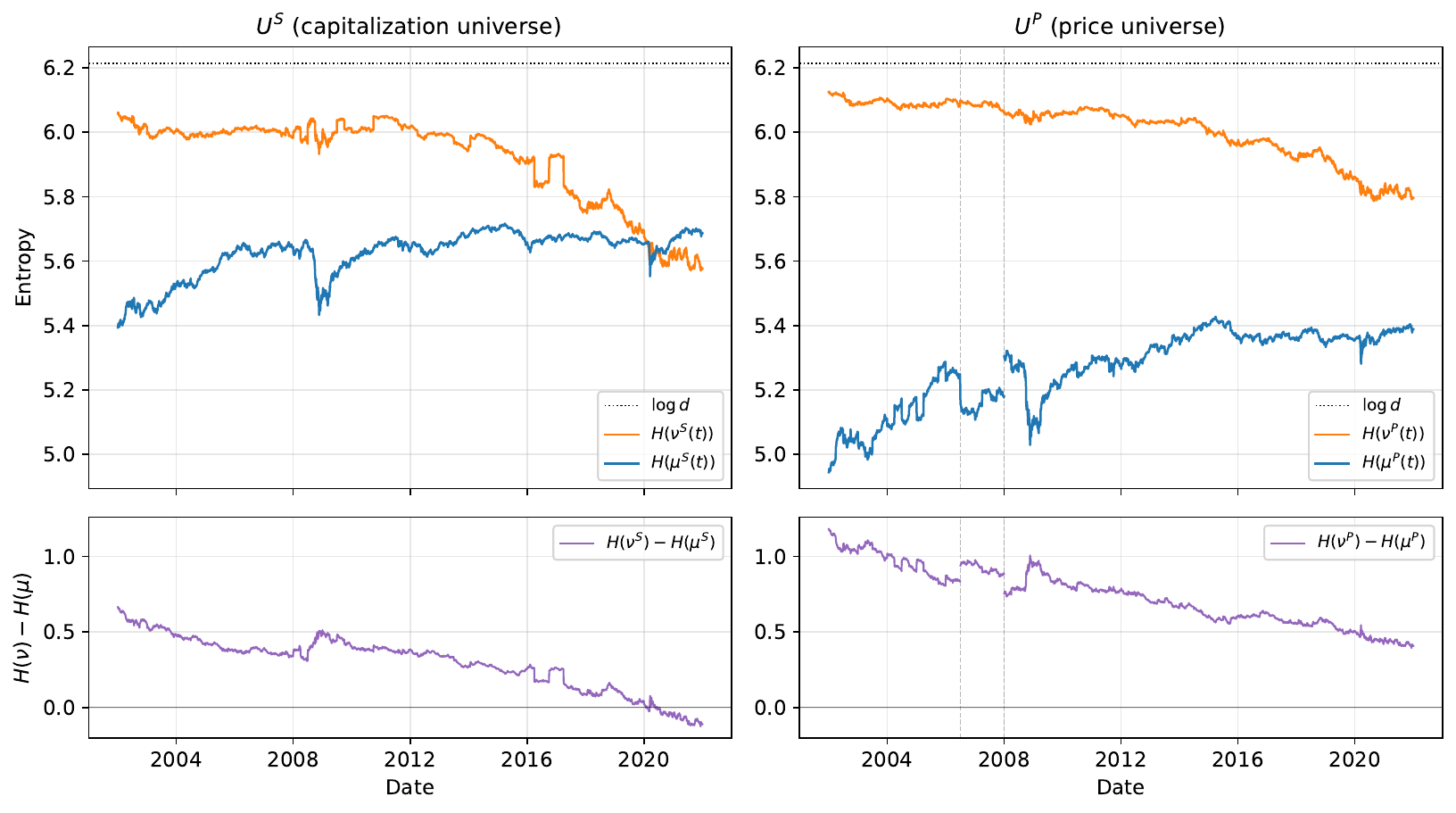}
    \caption{Top: daily Gibbs entropy of the capitalization weights (blue) and the price weights (orange) on the capitalization universe $U^S_m$ (left) and the price universe $U^P_m$ (right). The dotted line marks the maximum $\log d = \log 500 \approx 6.21$. Bottom: the gap $H(\nu^M(t)) - H(\mu^M(t))$ on each universe. The vertical dashed lines in the right-hand panels mark quarterly reconstitutions associated with pronounced changes in $H(\mu^P)$.}
    \label{fig: entropy}
\end{figure}

This observation is consistent with \cite{damjanovic2025price}, which found in simulations on common baskets of S\&P~500 constituents that price weighting tends to yield more diversified portfolios than capitalization weighting. Our results show that the entropy difference depends on the selected universe. On the price-selected universe, capitalization weights are substantially more concentrated than price weights. This contrast is less pronounced on the capitalization-selected universe, where capitalization weights have higher entropy than on the price-selected universe. Near the end of the sample, the ordering reverses on the capitalization-selected universe $U_m^S$, where price weights have lower entropy than capitalization weights.

The decline in price-weighted entropy may partly reflect less frequent splitting of already high-priced stocks, allowing their prices to rise further relative to those of other constituents.
We also notice that the price-weighted entropy is more robust than the capitalization-weighted one on both universes; during the 2008 and 2020 crises, the capitalization-weighted entropy dropped significantly in a short amount of time, whereas the price-weighted entropy did not change much.

\FloatBarrier

\medskip

\subsection{Functionally generated diversity-weighted portfolios} \label{sec: numerical fgp}

This subsection examines the empirical performance of the diversity-weighted portfolio in Example~\ref{ex: diversity weighted}, multiplicatively generated by the diversity-weighted family $\mathbf{D}_p$ for $0<p<1$. The $p$-diversity-weighted portfolio $\pi^{(p)}_i(w) = w_i^p / \sum_j w_j^p$ is evaluated at a weight vector $w \in \{\mu, \nu\}$ satisfying $\mathbf{1} \cdot w \equiv 1$; as $p \uparrow 1$, it recovers the underlying weights $w$, and decreasing $p$ to zero progressively flattens the allocation toward equal weighting. We report $p \in \{0.1, 0.3, 0.5, 0.7, 0.9\}$, evaluated under the capitalization weights $\mu^M$ and under the price weights $\nu^M$ on each of the two universes $M \in \{P, S\}$.

Figure~\ref{fig: fgp} illustrates the log relative value $\log V^{\psi}$ in \eqref{eq: value process of multiplicatively generated trading strategy} for $G = \mathbf{D}_p$ with the five values of $p$, on $U^S_m$ (left column) and on $U^P_m$ (right column). Panels (a) and (b) use the capitalization weights $\mu^M$ as the generating basis, so the relative value is measured against the capitalization-weighted index $I^M_\mu$. Panels (c) and (d) use the price weights $\nu^M$ and are measured against the price-weighted index $I^M_\nu$. Panels (e) and (f) add the weighting effects $W^S$ and $W^P$ of \eqref{def:W-C} respectively, to the curves of (c) and (d), so that the price-based strategies are also measured against $I^M_\mu$; this is the cross-basis identity \eqref{eq: cross basis identity}, with $W^M$ playing the role of $R$. All reported log relative wealth curves are normalized to zero at the initial observation.

At the terminal date, the capitalization-based portfolios in panels (a) and (b) have positive log relative wealth, with larger gains for smaller $p$. The price-based portfolios in panels (c) and (d) accumulate substantial relative gains during 2008--2009, but much of these gains is subsequently eroded. Their terminal log relative wealth is close to zero on $U_m^S$ and remains modestly positive on $U_m^P$.

After rebasing to the capitalization benchmark, however, the price-based portfolios in panels (e) and (f) finish above the corresponding capitalization-based portfolios for every reported $p$. Most of their terminal log relative wealth against this benchmark is accounted for by $W^M$, rather than by their gain over the price-weighted benchmark. Adding $W^M$ leaves the numerical differences between the price-based curves for different $p$ unchanged.

\begin{figure}[!t]
    \centering
    \includegraphics[width=0.98\textwidth]{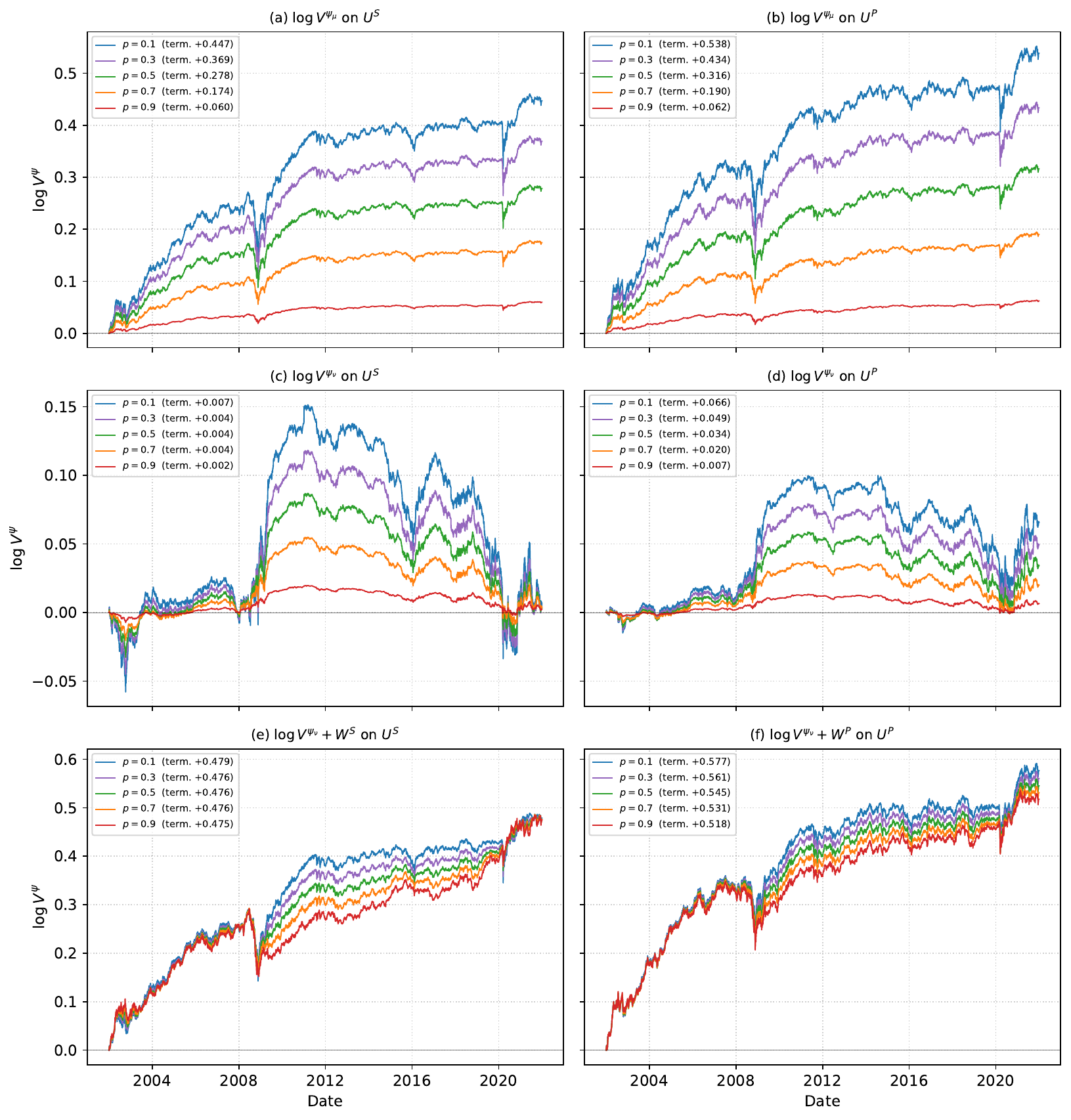}
    \caption{Log relative wealth of the diversity-weighted portfolios generated by $\mathbf D_p$, for $p\in\{0.1,0.3,0.5,0.7,0.9\}$. The left and right columns use $U_m^S$ and $U_m^P$, respectively. Panels (a) and (b) use capitalization weights and the corresponding benchmark $I_\mu^M$; panels (c) and (d) use price weights and the corresponding benchmark $I_\nu^M$. Panels (e) and (f) show the same price-based strategies as (c) and (d), rebased to $I_\mu^M$ by adding $W^M$. Vertical scales are shared within each row but differ across rows.}
    \label{fig: fgp}
\end{figure}

To better compare the relative performance of these portfolios under different weighting schemes and universes, we decompose the normalized log relative wealth via the master equation \eqref{eq: value process of multiplicatively generated trading strategy}:
\begin{equation} \label{eq: fgp decomposition empirical}
    \log\frac{V^\psi(t)}{V^\psi(0)} = \underbrace{(\log \mathbf{D}_p)^c\big(w(t)\big)-(\log \mathbf{D}_p)^c\big(w(0)\big)}_{\text{continuous state}} + \underbrace{\int_0^t\frac{d\Gamma^{\mathbf{D}_p,c}(s)}{\mathbf{D}_p\big(w(s-)\big)}}_{\text{excess growth}}.
\end{equation}

For the formulas below, fix $M\in\{P,S\}$ and suppress the universe superscript. The choice $w=\mu$ or $w=\nu$ specifies the generating basis, and all sums over constituents use the universe active on the relevant day.

For daily stock data, we denote by $C_p(t)$ and $E_p(t)$ the following approximations to the continuous state and excess growth terms in \eqref{eq: fgp decomposition empirical}, respectively. Let $w(t)$ denote the closing weight vector (either $\mu$ or $\nu$) on trading day $t$, and let $w^\circ(t)$ denote the starting weight vector constructed from the preceding closing prices after applying the adjustments effective on day $t$. At a universe reconstitution, these starting weights are formed on the incoming universe. In the daily approximation, $w^\circ(t)$ is the left endpoint for the price movement retained in that day's increment. On a split day, it is a post-adjustment starting state, not the pre-adjustment left limit of the generating state at the split time.

Explicitly, for price weights $\nu$, the starting weights are
\[
    \nu_i^\circ(s) := \frac{P_i^\circ(s)}{\sum_j P_j^\circ(s)}.
\]
On a day with neither a split nor a universe reconstitution, $\nu^\circ(s)=\nu(s-1)$. For capitalization weights $\mu$, the starting weights are computed as
\[
    \mu_i^\circ(s) := \frac{S_i^\circ(s)}{\sum_j S_j^\circ(s)}.
\]
A pure split leaves market capitalization unchanged: since $N_i(s)=m_i(s)N_i(s-1)$, the numerator reduces to $S_i(s-1)$. Thus, in the absence of other share count changes and a universe reconstitution, $\mu^\circ(s)=\mu(s-1)$.

A split day $s$ without a universe reconstitution therefore involves the two successive changes
\[
    \nu(s-1)\longrightarrow\nu^\circ(s)\longrightarrow\nu(s).    
\]
The first change is due to the split adjustment, whereas the second reflects the subsequent price movement. Accordingly, the log change in the diversity function can be decomposed as
\begin{equation} \label{eq: daily state change decomposition}
    \log\frac{\mathbf D_p\big(\nu(s)\big)}{\mathbf{D}_p\big(\nu(s-1)\big)} = \log\frac{\mathbf{D}_p\big(\nu^\circ(s)\big)}{\mathbf{D}_p\big(\nu(s-1)\big)} + \log\frac{\mathbf{D}_p\big(\nu(s)\big)}{\mathbf{D}_p\big(\nu^\circ(s)\big)}.
\end{equation}
The first term on the right-hand side is the split-induced jump in the generating state; hence it is not contained in the continuous state term of the wealth decomposition \eqref{eq: fgp decomposition empirical}, whereas the second term is retained in $C_p(t)$. We therefore approximate the continuous-time state term by the cumulative adjusted state component
\begin{equation} \label{eq: daily state component}
    C_p(t):=\sum_{s=1}^t\log\frac{\mathbf D_p\big(w(s)\big)}{\mathbf D_p\big(w^\circ(s)\big)}.
\end{equation}
The excess growth component is approximated by
\begin{equation} \label{eq: daily excess growth component}
    E_p(t):=-\frac12\sum_{s=1}^t\sum_{i,j} \frac{\partial_{ij}^2\mathbf D_p\big(w^\circ(s)\big)}{\mathbf D_p\big(w^\circ(s)\big)} \big(w_i(s)-w_i^\circ(s)\big)\big(w_j(s)-w_j^\circ(s)\big).
\end{equation}
Accordingly, $C_p(t)+E_p(t)$ approximates the normalized log relative wealth in \eqref{eq: fgp decomposition empirical}.

Figures~\ref{fig: fgp state} and \ref{fig: fgp excess growth} show the adjusted state $C_p(t)$ and excess growth $E_p(t)$ components of the decomposition \eqref{eq: fgp decomposition empirical} computed via \eqref{eq: daily state component} and \eqref{eq: daily excess growth component}, respectively. In both figures, the left and right columns correspond to the universes $U^S_m$ and $U^P_m$, respectively, while the upper and lower rows use capitalization weights $\mu^M$ and price weights $\nu^M$ as input, respectively.

The excess growth estimates in Figure~\ref{fig: fgp excess growth} are of comparable magnitude across the two generating bases and are larger for smaller $p$. The adjusted state components in Figure~\ref{fig: fgp state}, however, differ substantially. For price weights, their sustained decline offsets most of the excess growth accumulation, leaving only modest own-benchmark gains. For capitalization weights, the state decline is smaller, so a larger fraction of the excess growth accumulation remains in relative wealth. Thus the difference in own-benchmark performance is associated primarily with the adjusted state components, rather than with substantially weaker excess growth accumulation under price weighting.

These comparisons concern different portfolios measured against different benchmarks. The common-benchmark comparison in panels (e) and (f) of Figure~\ref{fig: fgp} separates the benchmark conversion from the price-based portfolios' own-benchmark gains.

\begin{figure}[!htb]
    \centering
    \includegraphics[width=\textwidth]{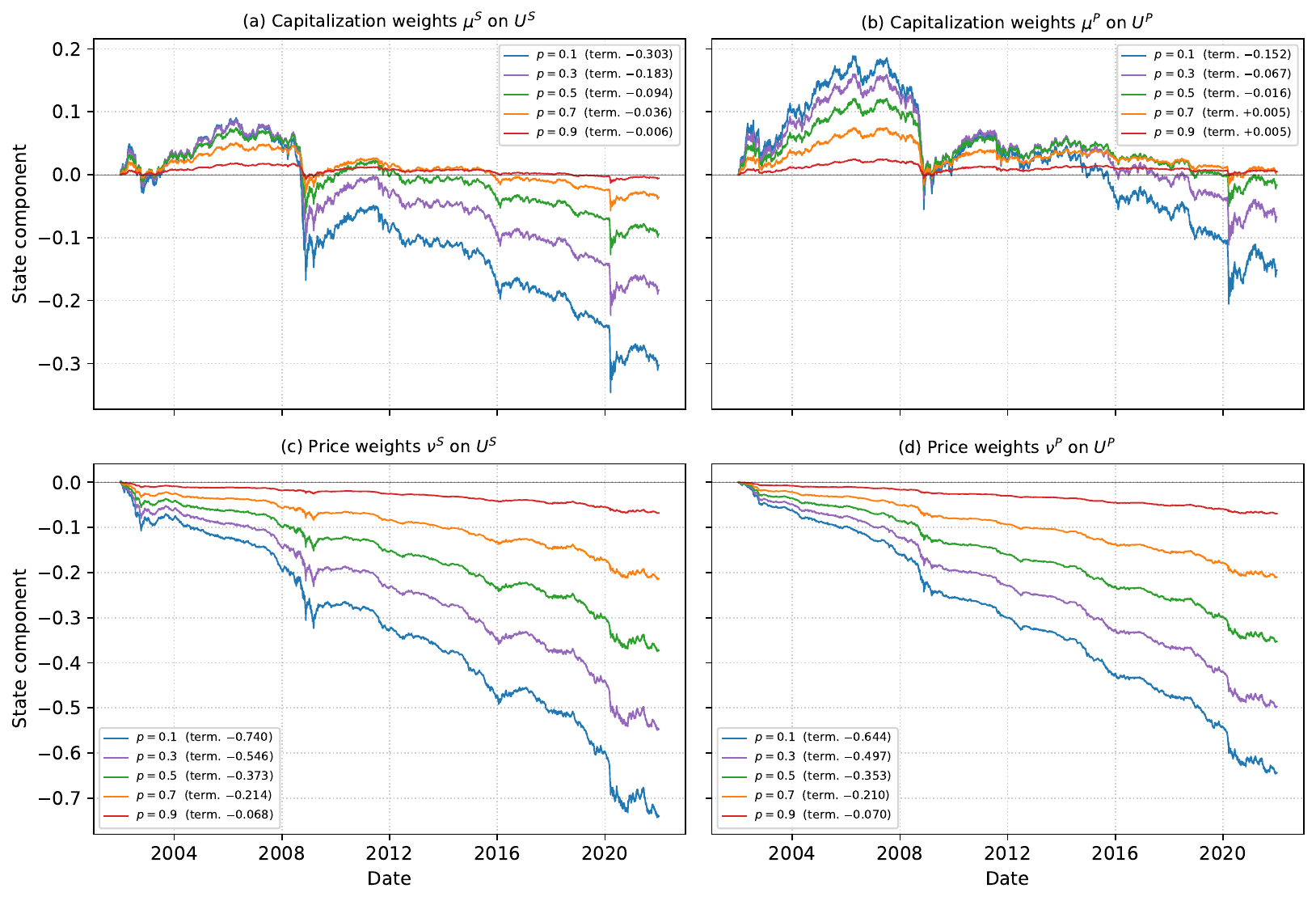}
    \caption{Adjusted state component $C_p(t)$ of \eqref{eq: daily state component}. The left and right columns use the universes $U^S_m$ and $U^P_m$, respectively. The first and second rows use capitalization weights $\mu^M$ and price weights $\nu^M$, respectively. The vertical scale is shared within each row.}
    \label{fig: fgp state}
\end{figure}

\begin{figure}[!htb]
    \centering
    \includegraphics[width=\textwidth]{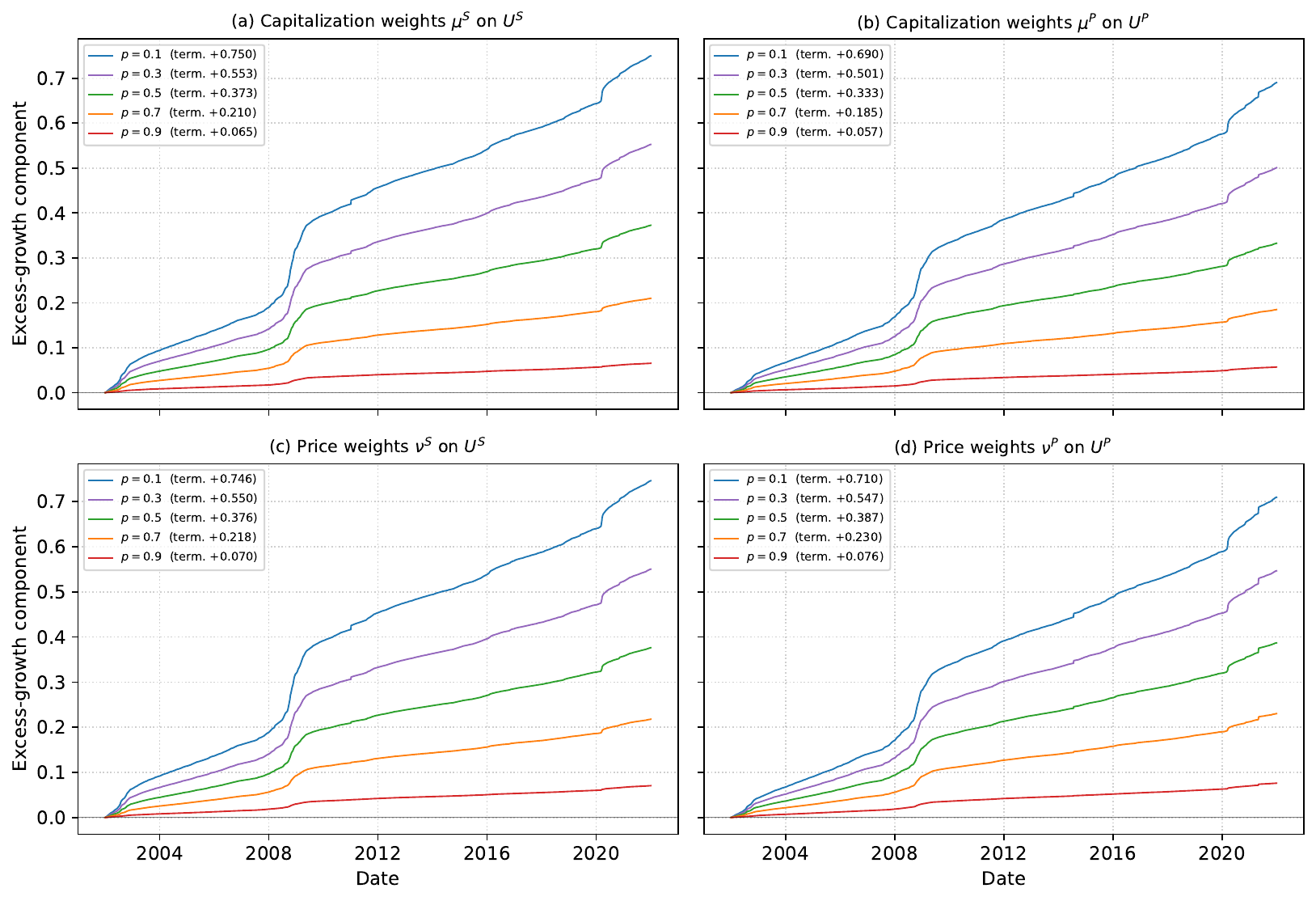}
    \caption{Excess growth component $E_p(t)$ of \eqref{eq: daily excess growth component}. The left and right columns use the universes $U^S_m$ and $U^P_m$, respectively. The first and second rows use capitalization weights $\mu^M$ and price weights $\nu^M$, respectively. The vertical scale is shared within each row.}
    \label{fig: fgp excess growth}
\end{figure}

We next quantify the split-induced changes in the generating state that are excluded from the adjusted state component. Recall that the first term on the right-hand side of \eqref{eq: daily state change decomposition} collects jumps induced by adjustment events. For the price weights $\nu$, we denote
\begin{equation} \label{eq: cumulative split jump}
    J_p(t) := \sum_{s\in\mathcal S, s\le t} \log\frac{\mathbf D_p\big(\nu^\circ(s)\big)}{\mathbf D_p\big(\nu(s-1)\big)},
\end{equation}
where $\mathcal{S}$ is the set of split days, excluding universe reconstitution dates. Each summand measures the change in $\log\mathbf{D}_p$ from the preceding closing weight to the post-split starting weight. Thus, $J_p$ isolates split effects on non-reconstitution days. The construction of $C_p$ also excludes changes arising from universe reconstitution.

\begin{figure}[htbp]
    \centering
    \includegraphics[width=\textwidth]{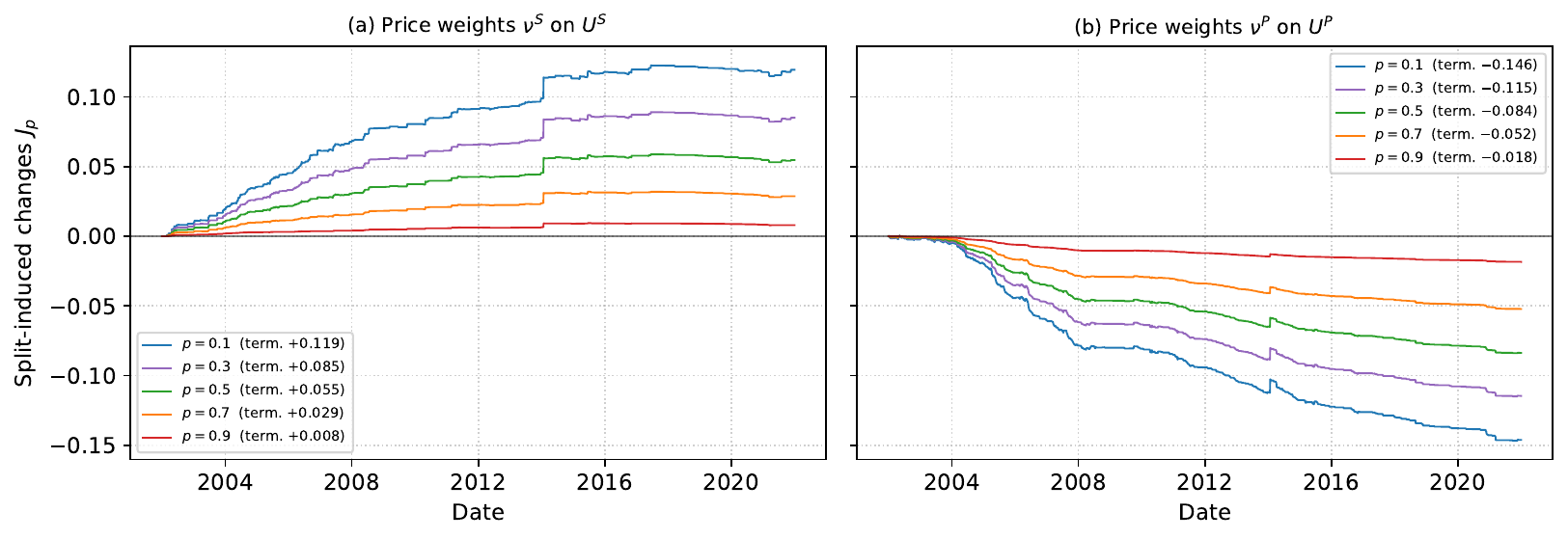}
    \caption{Cumulative split-induced jumps $J_p$ of \eqref{eq: cumulative split jump} for $p$-diversity-weighted portfolios with price weights on $U^S_m$ (left) and $U^P_m$ (right). These jumps are excluded from the adjusted state components in panels (c) and (d) of Figure~\ref{fig: fgp state}. Both panels share the same vertical scale.}
    \label{fig: fgp split jump}
\end{figure}

Figure~\ref{fig: fgp split jump} displays the cumulative split-induced jumps $J_p$ to illustrate the role of the split correction in the relative wealth of $p$-diversity-weighted portfolios. A striking feature is that $J_p$ shows an overall upward trend in the capitalization-selected universe $U^S_m$ but a downward trend in the other universe $U^P_m$. One possible explanation concerns the splitting stock's price relative to the other constituents. In $U^S_m$, selection by capitalization allows a broad range of share prices, and stocks undergoing forward splits tend to have relatively large price weights. Reducing these weights by forward splits can make the distribution more even, which helps increase diversity. In $U^P_m$, however, the other constituents are also selected because of their high prices, so a high absolute price need not imply a large relative weight among the 500 highest-priced stocks. If a splitting stock already has a price weight $\nu_i$ below the equal-weight level $1/d$ in $U^P_m$, then further reducing its price by forward splits can make the distribution less even, negatively impacting price diversity.

The signs of $J_p$ describe mechanical changes in the generating state, not instantaneous gains or losses of the self-financing portfolio. These changes are excluded from the wealth decomposition with the opposite sign. Holding all other components fixed, omitting the displayed split correction would therefore overstate log relative wealth on $U_m^S$ and understate it on $U_m^P$. This comparison concerns an omitted accounting correction, not a counterfactual market in which the splits did not occur.

All figures in this section are produced from a single reproducible Python pipeline over the NYSE data from CRSP.

\FloatBarrier

\bigskip

\appendix

\section{Proofs of Section~\ref{sec: setup}} \label{app: proofs setup}

This appendix collects the proofs of the results of Section~\ref{sec: setup}, in the order of their appearance.

\begin{proof}[Proof of Proposition~\ref{prop: equivalence}]
    The last statement follows from the definitions \eqref{def:nu}, \eqref{def:value-process} of $\nu$ and $V$. For the first assertion, suppose that $\vartheta$ is a trading strategy with respect to $P$. Then, since $V^\vartheta(\cdot; P)$ is continuous, we have
    \begin{equation} \label{eq: equiv jump identity}
        \vartheta(t-) \cdot \Delta P(t) = \vartheta(t-) \cdot \big( P(t) - P(t-) \big) = - \Delta \vartheta(t) \cdot P(t), \qquad t > 0.
    \end{equation}
    Now, apply It\^o's product rule to $V^{\kappa \vartheta}(\cdot; \nu) = V^\vartheta(\cdot; P) \Sigma^{-1}(\cdot) \kappa(\cdot)$:
    \begin{equation*}
        d\big(V^{\vartheta}(t; P) (\Sigma^{-1}\kappa)(t)\big) = 
        \underbrace{V^{\vartheta}(t-; P) d(\kappa \Sigma^{-1})(t)}_{\text{(A)}} + 
        \underbrace{(\kappa\Sigma^{-1})(t-) dV^{\vartheta}(t; P)}_{\text{(B)}} + 
        \underbrace{d[V^{\vartheta}(\cdot; P), \kappa \Sigma^{-1}](t)}_{\text{(C)}}.
    \end{equation*}
    We now compute each term separately. For (A),
    \begin{align*}
        V^{\vartheta}(t-; P) d(\kappa \Sigma^{-1})(t) &= \vartheta(t-) \cdot P(t-) \Big( \kappa(t-) d\Sigma^{-1}(t) + \Sigma^{-1}(t-) d\kappa(t) + d[\kappa, \Sigma^{-1}](t) \Big) \\
        &= \vartheta(t-) \cdot P(t-) \Big(\kappa(t-) d\Sigma^{-1}(t) + \Sigma^{-1}(t) d\kappa(t) \Big) \\
        &= \kappa(t-) \vartheta(t-) \cdot P(t-) d\Sigma^{-1}(t) + \vartheta(t) \cdot \nu(t) d\kappa(t).
    \end{align*}
    Here, the second equality uses
    \[
        d[\kappa, \Sigma^{-1}](t) = \Delta\Sigma^{-1}(t)\,d\kappa(t),
    \]
    which holds because $\kappa$ is constant between adjustment events. The last equality follows from the continuity of $V^{\vartheta}(\cdot;P)=\vartheta\cdot P$ and the identity $\nu=P\Sigma^{-1}$. For (B), define the cumulative jump process
    \[
        J^\vartheta(t) := \sum_{0<s\le t}\Delta\vartheta(s), \qquad t\ge0.
    \]
    Since $\vartheta$ has locally finitely many jumps, $J^\vartheta$ is of finite variation. The self-financing condition gives
    \begin{align*}
        (\kappa \Sigma^{-1})(t-) dV^\vartheta(t;P) &= (\kappa\Sigma^{-1})(t-) \Big(\vartheta(t-)\cdot dP(t) + P(t)\cdot dJ^\vartheta(t) \Big) \\
        &= (\kappa\Sigma^{-1})(t-) \vartheta(t-)\cdot dP(t) + \kappa(t)\nu(t)\cdot dJ^\vartheta(t).
    \end{align*}
    Here, the last equality follows from the continuity of $\kappa\Sigma^{-1}$ and the identity $\nu=P\Sigma^{-1}$. Finally, for (C), the self-financing condition and the continuity of $\kappa\Sigma^{-1}$ give
    \begin{align*}
        &d[V^\vartheta(\cdot;P),\kappa\Sigma^{-1}](t) \\
        &\quad= \kappa(t-)\vartheta(t-)\cdot d[P,\Sigma^{-1}](t) + \Sigma^{-1}(t-)\vartheta(t-)\cdot d[P,\kappa](t) + \vartheta(t-)\cdot d[P,[\kappa,\Sigma^{-1}]](t) \\
        &\quad= \kappa(t-)\vartheta(t-)\cdot d[P,\Sigma^{-1}](t) + \vartheta(t-)\cdot\Delta P(t) \Big(\Sigma^{-1}(t-)+\Delta\Sigma^{-1}(t)\Big) d\kappa(t) \\
        &\quad= \kappa(t-)\vartheta(t-)\cdot d[P,\Sigma^{-1}](t) + \Sigma^{-1}(t)\vartheta(t-)\cdot\Delta P(t) d\kappa(t) \\
        &\quad= \kappa(t-)\vartheta(t-)\cdot d[P,\Sigma^{-1}](t) - \Delta\vartheta(t)\cdot\nu(t)\,d\kappa(t).
    \end{align*}
    Here, the second equality uses
    \[
        d[P,\kappa](t)=\Delta P(t)\,d\kappa(t), \qquad d[P,[\kappa,\Sigma^{-1}]](t) =\Delta P(t)\Delta\Sigma^{-1}(t)\,d\kappa(t),
    \]
    which hold because $\kappa$ is constant between adjustment events. The last equality follows from~\eqref{eq: equiv jump identity} and $\nu=P\Sigma^{-1}$. Combining these terms and integrating, we obtain
    \[
        V^{\kappa\vartheta}(t;\nu) -V^{\kappa\vartheta}(0;\nu) = \int_0^t(\kappa\vartheta)(s-)\cdot d\nu(s) +\sum_{0<s\le t}\Delta(\kappa\vartheta)(s)\cdot\nu(s),
    \]
    where we used $\nu=P\Sigma^{-1}$ and $\Delta(\kappa\vartheta)(t)=\vartheta(t-)\Delta\kappa(t)+\kappa(t)\Delta\vartheta(t)$. This is the self-financing condition~\eqref{self-financing}. Continuity follows from $V^{\kappa\vartheta}(\cdot;\nu) =V^\vartheta(\cdot;P)/\widetilde\Sigma$. The converse direction can be shown similarly.
\end{proof}

\begin{proof}[Proof of Proposition~\ref{prop: defect of self-financeability}]
    Recalling $\mathbf{1} := (1,\dots,1) \in \mathbb{R}^d$, we can write $\varphi(\cdot) = \vartheta(\cdot) - Q^{\vartheta, \nu}(\cdot) \mathbf{1} - C^{\vartheta, \nu}(\cdot) \mathbf{1}$. First, note that for any $t > 0$
    \[
        \Delta Q^{\vartheta, \nu}(t) = \vartheta(t) \cdot \nu(t) - \vartheta(t-) \cdot \nu(t-) - \vartheta(t-) \cdot \Delta \nu(t) - \Delta \vartheta(t) \cdot \nu(t) = 0.
    \]
    Hence $\Delta \varphi(t) = \Delta \vartheta(t) - \Delta (\vartheta \cdot \nu)(t) \mathbf{1}$. Now, for the continuity of $V^\varphi(\cdot; \nu)$, we compute
    \[
        \Delta(\varphi \cdot \nu)(t) = \Delta \varphi(t) \cdot \nu(t) + \varphi(t-) \cdot \Delta \nu(t) = \Delta \vartheta(t) \cdot \nu(t) - \Delta (\vartheta \cdot \nu)(t) + \vartheta(t-) \cdot \Delta \nu(t) = 0.
    \]
    Here, we used the identity $\mathbf{1} \cdot \Delta \nu(\cdot) = 0$. To check the self-financing condition, we compute
    \begin{align*}
        (\varphi \cdot \nu)(t) - (\varphi \cdot \nu)(0) &= (\vartheta \cdot \nu)(t) - (\vartheta \cdot \nu)(0) - Q^{\vartheta, \nu}(t) \mathbf{1} \cdot \nu(t) - C^{\vartheta, \nu}(t) \mathbf{1} \cdot \nu(t) + C^{\vartheta, \nu}(0) \mathbf{1} \cdot \nu(0) \\
        &= (\vartheta \cdot \nu)(t) - (\vartheta \cdot \nu)(0) - Q^{\vartheta, \nu}(t) - \sum_{0 < s \le t} \Delta C^{\vartheta, \nu}(s) \mathbf{1} \cdot \nu(s) \\
        &= \int_0^t \vartheta(s-) \cdot d\nu(s) + \sum_{0 < s \le t} \big(\Delta \vartheta(s) - \Delta C^{\vartheta, \nu}(s) \mathbf{1}\big) \cdot \nu(s) \\
        &= \int_0^t \varphi(s-) \cdot d\nu(s) + \sum_{0 < s \le t} \Delta \varphi(s) \cdot \nu(s).
    \end{align*}
    Here, we used $\mathbf{1}\cdot d\nu \equiv 0$. Since $Q^{\vartheta,\nu}$ is continuous and predictable, and $C^{\vartheta,\nu}$ is a predictable finite variation process, $\varphi$ is predictable, RCLL, and $\nu$-integrable. Thus $\varphi$ is a trading strategy with respect to $\nu$.
\end{proof}

\bigskip

\section{Proofs of Section~\ref{sec: FGPTS}} \label{app: proofs fgpts}

\begin{proof}[Proof of Proposition~\ref{prop: additively generated portfolio theorem}]
    Since $\mathbf{1} \cdot \nu \equiv 1$, we have for $t \ge 0$
    \[
        V^\varphi(t;\nu) = \vartheta(t) \cdot \nu(t) - (\vartheta \cdot \nu)(t) - C^{\vartheta, \nu}(t) + G\big(\nu(t)\big) + \Gamma^G(t) = G\big(\nu(t)\big) + \Gamma^G(t) - C^{\vartheta, \nu}(t).
    \]
    Using the decomposition $\Gamma^G=\Gamma^{G,c}+\Gamma^{G,d}$ and the definition of $C^{\vartheta,\nu}$, we obtain
    \begin{align*}
        & \Gamma^{G,d}(t) - C^{\vartheta,\nu}(t)
        \\
        &= \sum_{0 < s \le t} \Big( -\Delta G\big(\nu(s)\big) + \vartheta(s-)\cdot\Delta\nu(s) + \Delta\vartheta(s)\cdot\nu(s) \Big) - \sum_{0 < s \le t} \Big( \vartheta(s-)\cdot\Delta\nu(s) + \Delta\vartheta(s)\cdot\nu(s) \Big) \\
        &= -\sum_{0 < s \le t} \Delta G\big(\nu(s)\big).
    \end{align*}
    Using \eqref{eq: G^c definition} yields the result \eqref{eq: additively generated portfolio}.
\end{proof}

\begin{proof} [Proof of Proposition~\ref{prop: multiplicatively generated trading strategy}]
    First, on each inter-event interval $\psi$ is continuous and adapted, hence predictable. At each $\tau_k$ with $k\ge1$, the $\mathcal{F}(\tau_k-)$-measurability of the $k$-th adjustment vector $m_k$ makes $\nu(\tau_k)$, $\alpha_k$, and $\psi(\tau_k)$ all $\mathcal{F}(\tau_k-)$-measurable, and $\tau_k$ is a predictable time, hence $\psi$ is predictable. The integrability condition follows from the continuity of $\nu$ on each inter-event interval and the local boundedness of $1/G(\nu)$, which ensures that $\eta^{(k)}$ is locally bounded. Finally, it remains to show the continuity of the value process and the self-financing condition.

    Fix $k \ge 0$, and set
    \[
        K^{(k)}(t) := \exp \bigg( \int_{\tau_k}^t \frac{d\Gamma^{G, c}(s)}{G\big(\nu(s)\big)} \bigg), \qquad t \in [\tau_k, \tau_{k+1}).
    \]
    The positivity of $G$ and local boundedness of $1/G$ ensure that $K^{(k)}$ is a continuous, positive process of finite variation on $[\tau_k, \tau_{k+1})$, with $K^{(k)}(\tau_k) = 1$ and
    \[
        dK^{(k)}(t) = K^{(k)}(t)\,\frac{d\Gamma^{G, c}(t)}{G\big(\nu(t)\big)}, \qquad t \in [\tau_k, \tau_{k+1}).
    \]
    On the open interval $(\tau_k, \tau_{k+1})$, the continuity of $\nu$ combined with It\^o's formula and the definition of $\Gamma^{G, c}$ following~\eqref{eq: decomposition of Gamma function} yields
    \[
        dG\big(\nu(t)\big) = \nabla G\big(\nu(t)\big) \cdot d\nu(t) - d\Gamma^{G, c}(t).
    \]
    Both $G(\nu)$ and $K^{(k)}$ are continuous on the interval and $K^{(k)}$ is of finite variation, so integration by parts gives
    \begin{align*}
        d\big( G(\nu(t))\, K^{(k)}(t) \big) &= G\big(\nu(t)\big)\, dK^{(k)}(t) + K^{(k)}(t)\, dG\big(\nu(t)\big) \\
        &= K^{(k)}(t)\, d\Gamma^{G, c}(t) + K^{(k)}(t) \Big( \nabla G\big(\nu(t)\big) \cdot d\nu(t) - d\Gamma^{G, c}(t) \Big) \\
        &= K^{(k)}(t)\, \nabla G\big(\nu(t)\big) \cdot d\nu(t)
        = \eta^{(k)}(t) \cdot d\nu(t),
    \end{align*}
    where the last identity uses~\eqref{eq: tps eta}.  Integrating from $\tau_k$ and recalling that $K^{(k)}(\tau_k) = 1$, we obtain
    \begin{equation} \label{eq: tps product rule}
        G\big(\nu(t)\big)\, K^{(k)}(t) = G\big(\nu(\tau_k)\big) + \int_{\tau_k}^t \eta^{(k)}(s) \cdot d\nu(s), \qquad t \in [\tau_k, \tau_{k+1}).
    \end{equation}
    
    Turning to the wealth process, the identity $\mathbf{1} \cdot \nu(t) \equiv 1$ together with~\eqref{eq: multiplicatively generated trading strategy} gives
    \[
        V^\psi(t;\nu) = \alpha_k\Big(\eta^{(k)}(t) \cdot \nu(t) - Q^{(k)}(t) - b_G\big(\nu(\tau_k)\big) \Big),
    \]
    and substituting \eqref{eq: tps Q} cancels the $\eta^{(k)}(t) \cdot \nu(t)$ term to leave
    \[
        V^\psi(t;\nu) = \alpha_k \Big( \eta^{(k)}(\tau_k) \cdot \nu(\tau_k) + \int_{\tau_k}^t \eta^{(k)}(s) \cdot d\nu(s) - b_G\big(\nu(\tau_k)\big) \Big).
    \]
    The initial condition $K^{(k)}(\tau_k) = 1$ implies $\eta^{(k)}(\tau_k) = \nabla G(\nu(\tau_k))$, and~\eqref{eq: tps balance defect} gives $\nabla G(\nu(\tau_k)) \cdot \nu(\tau_k) - b_G(\nu(\tau_k)) = G(\nu(\tau_k))$. Combining these observations with~\eqref{eq: tps product rule} identifies
    \begin{equation} \label{eq: tps interval wealth}
        V^\psi(t;\nu) = \alpha_k\, G\big(\nu(t)\big)\, K^{(k)}(t), \qquad t \in [\tau_k, \tau_{k+1}).
    \end{equation}
    
    Evaluating~\eqref{eq: tps interval wealth} at $t = \tau_k$ gives $V^\psi(\tau_k;\nu) = \alpha_k\, G(\nu(\tau_k))$. By the definition of $\alpha_k$ in~\eqref{eq: tps alpha}, this equals $V^\psi(\tau_k-;\nu)$ for every $k\ge1$. The wealth process $V^\psi$ is therefore continuous at every adjustment event, with $V^\psi(0;\nu)=G(\nu(0))$ from $\alpha_0=1$.
    
    To verify the self-financing identity~\eqref{self-financing}, fix $t \in [\tau_k, \tau_{k+1})$ for some $k \ge 0$.  Multiplying~\eqref{eq: tps product rule} by $\alpha_k$ and using~\eqref{eq: tps interval wealth} gives
    \[
        V^\psi(t;\nu) = V^\psi(\tau_k;\nu) + \int_{\tau_k}^t \alpha_k\, \eta^{(k)}(s) \cdot d\nu(s).
    \]
    On $[\tau_k,\tau_{k+1})$, the processes $\eta^{(k)}$, $Q^{(k)}$, and $\psi$ are continuous, and
    \[
        \psi(s) =\alpha_k\Big( \eta^{(k)}(s) -\big[Q^{(k)}(s)+b_G(\nu(\tau_k))\big]\mathbf1 \Big).
    \]
    Together with $\mathbf1\cdot d\nu\equiv0$, this yields
    \[
        \alpha_k\eta^{(k)}(s)\cdot d\nu(s) =\psi(s)\cdot d\nu(s), \qquad s\in(\tau_k,\tau_{k+1}).
    \]
    Hence
    \begin{equation} \label{eq: tps interval SF}
        V^\psi(t; \nu) = V^\psi(\tau_k; \nu) + \int_{\tau_k}^t \psi(s) \cdot d\nu(s).
    \end{equation}
    Taking the limit \(t\uparrow\tau_{i+1}\) in \eqref{eq: tps interval SF} for $i = 0, \dots, k-1$ and using the wealth continuity $V^\psi(\tau_{i+1};\nu)=V^\psi(\tau_{i+1}-;\nu)$, telescoping yields
    \begin{equation} \label{eq: tps SF telescoped}
        V^\psi(t;\nu) = V^\psi(0;\nu) + \sum_{i=0}^{k-1} \int_{\tau_i}^{\tau_{i+1}-} \psi(s) \cdot d\nu(s) + \int_{\tau_k}^t \psi(s) \cdot d\nu(s).
    \end{equation}
    
    For every $s$ in the open interval $(\tau_i, \tau_{i+1})$, the process $\psi$ is continuous at $s$, so $\psi(s-) = \psi(s)$. Thus, replacing $\psi(s)$ by $\psi(s-)$ leaves the integrals in~\eqref{eq: tps SF telescoped} unchanged. The standard stochastic integral $\int_0^t \psi(s-) \cdot d\nu(s)$ additionally includes the jump contribution $\psi(\tau_i-) \cdot \Delta\nu(\tau_i)$ at each adjustment event $\tau_i$, so
    \[
        \int_0^t \psi(s-) \cdot d\nu(s) = \sum_{i=0}^{k-1} \int_{\tau_i}^{\tau_{i+1}-} \psi(s) \cdot d\nu(s) + \int_{\tau_k}^t \psi(s) \cdot d\nu(s) + \sum_{i=1}^k \psi(\tau_i-) \cdot \Delta\nu(\tau_i).
    \]
    Combining with~\eqref{eq: tps SF telescoped},
    \begin{equation} \label{eq: tps SF predictable}
        V^\psi(t;\nu) = V^\psi(0;\nu) + \int_0^t \psi(s-) \cdot d\nu(s) - \sum_{i=1}^k \psi(\tau_i-) \cdot \Delta\nu(\tau_i).
    \end{equation}
    Finally, the wealth continuity $V^\psi(\tau_i;\nu) = V^\psi(\tau_i-;\nu)$ together with the algebraic Leibniz identity gives
    \[
        \psi(\tau_i-) \cdot \Delta\nu(\tau_i) + \Delta\psi(\tau_i) \cdot \nu(\tau_i) = \Delta(\psi \cdot \nu)(\tau_i) = \Delta V^\psi(\tau_i;\nu) = 0,
    \]
    so $-\sum_{i=1}^k \psi(\tau_i-) \cdot \Delta\nu(\tau_i) = \sum_{i=1}^k \Delta\psi(\tau_i) \cdot \nu(\tau_i)$.  Substituting into~\eqref{eq: tps SF predictable} gives~\eqref{self-financing}, and this proves that $\psi$ is a trading strategy with respect to $\nu$.
\end{proof}

\begin{proof}[Proof of Proposition~\ref{prop: multiplicatively generated trading strategy theorem}]
    Taking logarithms in~\eqref{eq: tps interval wealth},
    \[
        \log V^\psi(t;\nu) = \log \alpha_k + \log G \big(\nu(t)\big) + \int_{\tau_k}^t \frac{d\Gamma^{G, c}(s)}{G\big(\nu(s)\big)}, \qquad t \in [\tau_k, \tau_{k+1}).
    \]
    Using continuity of the wealth process at \(\tau_k\) and rearranging~\eqref{eq: tps interval wealth}, we obtain the recursion
    \[
        \log \alpha_k - \log \alpha_{k-1} = -\Delta\log G\big(\nu(\tau_k)\big) + \int_{\tau_{k-1}}^{\tau_k} \frac{d\Gamma^{G, c}(s)}{G\big(\nu(s)\big)}.
    \]
    Iterating this recursion with \(\log\alpha_0=0\) yields
    \[
        \log \alpha_k = -\sum_{j=1}^k \Delta\log G\big(\nu(\tau_j)\big) + \int_0^{\tau_k} \frac{d\Gamma^{G, c}(s)}{G\big(\nu(s)\big)}.
    \]
    Substituting this expression into the per-interval logarithm above and collecting the integrals over $[0, \tau_k]$ and $[\tau_k, t]$ produces
    \[
        \log V^\psi(t;\nu)
        = \log G(\nu(t)) - \sum_{0 < s \le t} \Delta\log G(\nu(s))
        + \int_0^t \frac{d\Gamma^{G, c}(s)}{G(\nu(s))}
    \]
    for $t \in [\tau_k, \tau_{k+1})$.  The first two terms are precisely $(\log G)^c(\nu(t))$, and~\eqref{eq: value process of multiplicatively generated trading strategy} follows.
\end{proof}

\begin{proof}[Proof of Corollary~\ref{coro: representation of multiplicatively generated trading strategy balanced case}]
    Fix $t \in [\tau_k, \tau_{k+1})$ for some $k \ge 0$. Write $c_k:=b_G(\nu(\tau_k))$, and let $K^{(k)}$ be the exponential factor defined in the proof of Proposition~\ref{prop: multiplicatively generated trading strategy}. Substituting \eqref{eq: multiplicatively generated trading strategy} and \eqref{eq: tps interval wealth} into \eqref{def:pi} and using \eqref{eq: tps eta} yield
    \begin{equation} \label{eq: tps relport pre-cancel}
        \pi^\psi_i(t)
        = \frac{\nu_i(t)\big(\eta^{(k)}_i(t) - Q^{(k)}(t) - c_k\big)}{G\big(\nu(t)\big)\, K^{(k)}(t)}
        = \frac{\nu_i(t)\,\partial_i G\big(\nu(t)\big)}{G\big(\nu(t)\big)} - \frac{\nu_i(t)\big(Q^{(k)}(t) + c_k\big)}{G\big(\nu(t)\big)\, K^{(k)}(t)}.
    \end{equation}
    The identity $\alpha_k(\eta^{(k)}(t) \cdot \nu(t) - Q^{(k)}(t) - c_k) = V^\psi(t;\nu) = \alpha_k\, G(\nu(t))\, K^{(k)}(t)$, along with $\eta^{(k)}(t) \cdot \nu(t) = K^{(k)}(t)\sum_{j=1}^d \nu_j(t)\, \partial_j G(\nu(t))$, gives
    \[
        \frac{Q^{(k)}(t) + c_k}{K^{(k)}(t)} = \sum_{j=1}^d \nu_j(t)\,\partial_j G\big(\nu(t)\big) - G\big(\nu(t)\big).
    \]
    Substituting this into \eqref{eq: tps relport pre-cancel} proves \eqref{eq: tps related portfolio} for every $t \in [0, \infty)$ since the argument is independent of $k \ge 0$. The balanced simplification follows from $\sum_j x_j\,\partial_j G(x) = G(x)$.
\end{proof}

\begin{proof} [Proof of Proposition~\ref{prop: change of basis}]
    The identity~\eqref{eq: change of basis wealth} is immediate from the relationship \eqref{def:P}. First, we assume (i); then the continuity of $V^{\vartheta \odot N^{-1}}(\cdot;S)$ follows by \eqref{eq: change of basis wealth}. Applying Itô’s product rule to \(P_i=S_i/N_i\), the dynamics of $P_i$ can be written as
    \begin{align*}
        dP_i(t) &= \frac{1}{N_i(t-)}\,dS_i(t) + S_i(t-)\,d\Big(\frac{1}{N_i}\Big)(t) + d\Big[S_i,\frac{1}{N_i}\Big](t) \\
        &= \frac{1}{N_i(t-)}\,dS_i(t) + S_i(t)\,d\Big(\frac{1}{N_i}\Big)(t),
    \end{align*}
    where the covariation term vanishes because $S_i$ is continuous and $N_i^{-1}$ is of finite variation. Since $N_i^{-1}$ is constant between adjustment events, this identity can also be written as
    \[
        P_i(t)-P_i(0) = \int_0^t\frac{dS_i(s)}{N_i(s-)} + \sum_{0<s\le t}S_i(s)\Delta\Big(\frac{1}{N_i}\Big)(s).
    \]
    Using this with the continuity of $S$ and $\vartheta \cdot P$, we compute for $t \ge 0$
    \begin{align*}
        & V^{\vartheta \odot N^{-1}}(t; S) - V^{\vartheta \odot N^{-1}}(0; S) = V^{\vartheta}(t; P) - V^{\vartheta}(0; P) = \int_0^t \vartheta(s-) \cdot dP(s) + \sum\limits_{0 < s \le t} \Delta \vartheta(s) \cdot P(s) \\
        &\qquad = \int_0^t \sum_{i=1}^d \frac{\vartheta_i(s-)}{N_i(s-)} dS_i(s) + \sum_{0 < s \le t} \sum_{i=1}^d \vartheta_i(s-) S_i(s) \Delta\Big( \frac{1}{N_i} \Big)(s) + \sum_{0 < s \le t} \sum_{i=1}^d \Delta \vartheta_i(s) P_i(s) \\
        &\qquad = \int_0^t \sum_{i=1}^d \frac{\vartheta_i(s-)}{N_i(s-)} dS_i(s) + \sum_{0 < s \le t} \sum_{i=1}^d \big( \vartheta_i(s-) \Delta P_i(s) + \Delta \vartheta_i(s) P_i(s) \big) 
        = \int_0^t \sum_{i=1}^d \frac{\vartheta_i(s-)}{N_i(s-)} dS_i(s).
    \end{align*}
    Again by the continuity of $S$ and $\vartheta \cdot P$, the jump correction term in \eqref{self-financing} for the strategy $\vartheta \odot N^{-1}$ vanishes:
    \begin{align*}
        \sum_{0 < s \le t} \sum_{i=1}^d \Delta \Big( \frac{\vartheta_i(s)}{N_i(s)} \Big) S_i(s) &= \sum_{0 < s \le t} \sum_{i=1}^d \Big( \frac{\vartheta_i(s)}{N_i(s)}S_i(s) - \frac{\vartheta_i(s-)}{N_i(s-)}S_i(s) \Big) \\
        &= \sum_{0 < s \le t} \sum_{i=1}^d \Big( \frac{\vartheta_i(s)}{N_i(s)}S_i(s) - \frac{\vartheta_i(s-)}{N_i(s-)}S_i(s-) \Big) = \sum_{0 < s \le t} \Delta(\vartheta \cdot P)(s) = 0.
    \end{align*}
    This proves (i) $\Rightarrow$ (ii). The other implication follows by a similar argument in the reverse direction.
\end{proof}

\bigskip

\noindent \textbf{Funding}

\medskip

\noindent Jongbong An is supported by the National Research Foundation of Korea under Grants RS-2026-25475614 and RS-2025-00513609, funded by the Korea government (MSIT). Donghan Kim is supported by the National Research Foundation of Korea under Grant RS-2025-00513609, and by an Ewon Assistant Professorship Award at KAIST. 

\bigskip

\bibliographystyle{plain}
\bibliography{bibliography}

@article{strong2014generalizations,
  author  = {Strong, Winslow},
  title   = {Generalizations of functionally generated portfolios with applications to statistical arbitrage},
  journal = {SIAM Journal on Financial Mathematics},
  year    = {2014},
  volume  = {5},
  number  = {1},
  pages   = {472--492},
  doi     = {10.1137/130907458}
}

@article{rufxie2019generalised,
  author  = {Ruf, Johannes and Xie, Kangjianan},
  title   = {Generalised {Lyapunov} functions and functionally generated trading strategies},
  journal = {Applied Mathematical Finance},
  year    = {2019},
  volume  = {26},
  number  = {4},
  pages   = {293--327},
  doi     = {10.1080/1350486X.2019.1584041}
}

@article{palwong2016geometry,
  author  = {Pal, Soumik and Wong, Ting-Kam Leonard},
  title   = {The geometry of relative arbitrage},
  journal = {Mathematics and Financial Economics},
  year    = {2016},
  volume  = {10},
  number  = {3},
  pages   = {263--293},
  doi     = {10.1007/s11579-015-0159-z}
}

@article{kardarasrobertson2021ergodic,
  author  = {Kardaras, Constantinos and Robertson, Scott},
  title   = {Ergodic robust maximization of asymptotic growth},
  journal = {The Annals of Applied Probability},
  year    = {2021},
  volume  = {31},
  number  = {4},
  pages   = {1787--1819},
  doi     = {10.1214/20-AAP1634}
}

@article{larssonruf2021relative,
  author  = {Larsson, Martin and Ruf, Johannes},
  title   = {Relative arbitrage: Sharp time horizons and motion by curvature},
  journal = {Mathematical Finance},
  year    = {2021},
  volume  = {31},
  number  = {3},
  pages   = {885--906},
  doi     = {10.1111/mafi.12303}
}

@misc{laishkolnikovsoner2025relative,
  author        = {Lai, Jou-Hua and Shkolnikov, Mykhaylo and Soner, H. Mete},
  title         = {Relative arbitrage problem under eigenvalue lower bounds},
  year          = {2025},
  eprint        = {2512.17702},
  archivePrefix = {arXiv},
  primaryClass  = {q-fin.MF},
  note          = {arXiv:2512.17702},
  url           = {https://arxiv.org/abs/2512.17702}
}

@article{Portfolio:Optimization:SPT,
author = {Campbell, Steven and Wong, Ting-Kam Leonard},
title = {Functional Portfolio Optimization in Stochastic Portfolio Theory},
journal = {SIAM Journal on Financial Mathematics},
volume = {13},
number = {2},
pages = {576-618},
year = {2022},
doi = {10.1137/21M1417715},
URL = {https://doi.org/10.1137/21M1417715},
eprint = {https://doi.org/10.1137/21M1417715}
}

@article{Option:pricing:SPT,
  author  = {Fernholz, Ricardo T. and Fernholz, Robert},
  title   = {Portfolios generated by contingent claim functions, with applications to option pricing},
  journal = {Finance and Stochastics},
  year    = {2026},
  volume  = {30},
  issue = {3},
  pages   = {873--901},
  doi     = {10.1007/s00780-026-00597-5}
}

@article{Market-to-book,
  author  = {Kim, Donghan},
  title   = {Market-to-book ratio in stochastic portfolio theory},
  journal = {Finance and Stochastics},
  year    = {2023},
  volume  = {27},
  issue = {2},
  pages   = {401--434},
  doi     = {10.1007/s00780-023-00501-5}
}

@misc{price:impact:SPT,
      title={Stochastic portfolio theory with price impact}, 
      author={David Itkin},
      year={2026},
      eprint={2506.07993},
      note={arXiv:2506.07993},
      primaryClass={q-fin.MF},
      url={https://arxiv.org/abs/2506.07993}, 
}

@article{Open:market:Jacobi,
author = {David Itkin and Martin Larsson},
title = {{Open markets and hybrid Jacobi processes}},
volume = {34},
journal = {The Annals of Applied Probability},
number = {3},
publisher = {Institute of Mathematical Statistics},
pages = {2940 -- 2985},
year = {2024},
doi = {10.1214/23-AAP2030},
URL = {https://doi.org/10.1214/23-AAP2030}
}

@article{Ruf:Xie,
author = {Ruf, Johannes and Xie, Kangjianan},
title = {The Impact of Proportional Transaction Costs on Systematically Generated Portfolios},
journal = {SIAM Journal on Financial Mathematics},
volume = {11},
number = {3},
pages = {881-896},
year = {2020},
doi = {10.1137/19M1282313},
URL = {https://doi.org/10.1137/19M1282313},
eprint = {https://doi.org/10.1137/19M1282313}
}

@article{Universal:SPT,
author = {Cuchiero, Christa and Schachermayer, Walter and Wong, Ting-Kam Leonard},
title = {Cover's universal portfolio, stochastic portfolio theory, and the numéraire portfolio},
journal = {Mathematical Finance},
volume = {29},
number = {3},
pages = {773-803},
doi = {https://doi.org/10.1111/mafi.12201},
url = {https://onlinelibrary.wiley.com/doi/abs/10.1111/mafi.12201},
eprint = {https://onlinelibrary.wiley.com/doi/pdf/10.1111/mafi.12201},
year = {2019}
}

@article{rough:path:SPT,
author = {Allan, Andrew L. and Cuchiero, Christa and Liu, Chong and Prömel, David J.},
title = {Model-free portfolio theory: A rough path approach},
journal = {Mathematical Finance},
volume = {33},
number = {3},
pages = {709-765},
doi = {https://doi.org/10.1111/mafi.12376},
url = {https://onlinelibrary.wiley.com/doi/abs/10.1111/mafi.12376},
eprint = {https://onlinelibrary.wiley.com/doi/pdf/10.1111/mafi.12376},
year = {2023}
}

@article{vol:arb,
author = {E. Robert Fernholz and Ioannis Karatzas and Johannes Ruf},
title = {{Volatility and arbitrage}},
volume = {28},
journal = {The Annals of Applied Probability},
number = {1},
publisher = {Institute of Mathematical Statistics},
pages = {378 -- 417},
year = {2018},
doi = {10.1214/17-AAP1308},
URL = {https://doi.org/10.1214/17-AAP1308}
}

@article{Schied:model-free,
  author  = {Schied, Alexander and Speiser, Leo and Voloshchenko, Iryna},
  title   = {Model-free portfolio theory and its functional master formula},
  journal = {SIAM Journal on Financial Mathematics},
  year    = {2018},
  volume  = {9},
  pages   = {1074--1101},
  doi     = {10.1137/16M1079828}
}

@article{Karatzas:Kim,
  author  = {Karatzas, Ioannis and Kim, Donghan},
  title   = {Trading strategies generated pathwise by functions of market weights},
  journal = {Finance and Stochastics},
  year    = {2020},
  volume  = {24},
  pages   = {423--463},
  doi     = {10.1007/s00780-019-00414-2}
}

@article{Karatzas:Kim2,
	author = {Karatzas, Ioannis and Kim, Donghan},
	title = {Open markets},
	journal = {Mathematical Finance},
	volume = {31},
	number = {4},
	pages = {1111-1161},
	doi = {https://doi.org/10.1111/mafi.12294},
	url = {https://onlinelibrary.wiley.com/doi/abs/10.1111/mafi.12294},
	eprint = {https://onlinelibrary.wiley.com/doi/pdf/10.1111/mafi.12294},
	year = {2021}
}

@INCOLLECTION{FK_survey,
	author = {Robert Fernholz and Ioannis Karatzas},
	title = {Stochastic {P}ortfolio {T}heory: an overview},
	booktitle = {Handbook of Numerical Analysis},
	publisher = {Elsevier},
	year = {2009},
	volume = {15},
	pages = {89-167},
	doi = {10.1016/S1570-8659(08)00003-3}
}

@article{bayraktar2024quantifying,
  title={Quantifying dimensional change in stochastic portfolio theory},
  author={Bayraktar, Erhan and Kim, Donghan and Tilva, Abhishek},
  journal={Mathematical Finance},
  volume={34},
  number={3},
  pages={977--1021},
  year={2024},
  publisher={Wiley Online Library}
}

@book{fernholz2002stochastic,
  title={Stochastic Portfolio Theory},
  author={Fernholz, E. Robert},
  series={Applications of Mathematics},
  volume={48},
  publisher={Springer-Verlag},
  address={New York},
  year={2002}
}

@article{karatzas2017trading,
  title={{Trading strategies generated by Lyapunov functions}},
  author={Karatzas, Ioannis and Ruf, Johannes},
  journal={Finance and Stochastics},
  volume={21},
  number={3},
  pages={753--787},
  year={2017},
  publisher={Springer}
}

@misc{nikkei225indexguidebook,
  author = {Nikkei Inc.},
  title = {Nikkei Stock Average Index Guidebook},
  year = {2025},
  howpublished = {\url{https://indexes.nikkei.co.jp/nkave/archives/file/nikkei_stock_average_guidebook_en.pdf}},
  note = {Updated: 2025-07-24}
}

@misc{spdji_djia_methodology,
  author = {{S\&P Dow Jones Indices}},
  title = {{Dow Jones Averages Methodology}},
  year = {2026},
  howpublished = {\url{https://www.spglobal.com/spdji/en/documents/methodologies/methodology-dj-averages.pdf}},
  note = {S\&P Global, methodology document for the Dow Jones Industrial, Transportation, and Utility Averages (incl.\ divisor mechanism). Updated: 2026-02}
}

@article{strong2011diversity,
  title={Diversity and arbitrage in a regulatory breakup model},
  author={Strong, Winslow and Fouque, Jean-Pierre},
  journal={Annals of Finance},
  volume={7},
  number={3},
  pages={349--374},
  year={2011},
  publisher={Springer}
}

@article{damjanovic2025price,
  title={Price vs. market-cap-weighted portfolio diversification: does it matter?},
  author={Damjanovic, Aleksandar and Drenovak, Mikica and Uro{\v{s}}evi{\'c}, Branko and Jelic, Ranko},
  journal={The European Journal of Finance},
  volume={32},
  number={4--6},
  pages={564--585},
  year={2026},
  doi={10.1080/1351847X.2025.2529216},
}

@book{karatzas1991brownian,
  author    = {Karatzas, Ioannis and Shreve, Steven E.},
  title     = {Brownian Motion and Stochastic Calculus},
  edition   = {2nd},
  series    = {Graduate Texts in Mathematics},
  volume    = {113},
  publisher = {Springer-Verlag},
  address   = {New York},
  year      = {1991}
}

@article{cuchiero2025signature,
  title={Signature methods in stochastic portfolio theory},
  author={Cuchiero, Christa and M{\"o}ller, Janka},
  journal={SIAM Journal on Financial Mathematics},
  volume={16},
  number={4},
  pages={1239--1303},
  year={2025},
  publisher={SIAM}
}

@article{campbell2025macroscopic,
  title={Macroscopic properties of equity markets: stylized facts and portfolio performance},
  author={Campbell, Steven and Song, Qien and Wong, Ting-Kam Leonard},
  journal={Quantitative Finance},
  volume={25},
  number={9},
  pages={1375--1397},
  year={2025},
  publisher={Taylor \& Francis}
}

@article{fernholz2022permutation,
  title={Permutation-weighted portfolios and the efficiency of commodity futures markets},
  author={Fernholz, Ricardo T and Fernholz, Robert},
  journal={Annals of Finance},
  volume={18},
  number={1},
  pages={81--108},
  year={2022},
  publisher={Springer}
}

@article{banner2019diversification,
  author  = {Banner, Adrian and Fernholz, E. Robert and Papathanakos, Vassilios and Ruf, Johannes and Schofield, David},
  title   = {Diversification, Volatility, and Surprising Alpha},
  journal = {Journal of Investment Consulting},
  volume  = {19},
  number  = {1},
  pages   = {23--30},
  year    = {2019},
  url     = {https://ssrn.com/abstract=3522675}
}

\end{document}